\newcommand{\citet}[1]{\cite{#1}}
\newcommandx{\unsure}[2][1=]{\todo[linecolor=green,backgroundcolor=green!25,bordercolor=green,#1]{\normalsize #2}}
\newcommandx{\improvement}[2][1=]{\todo[inline,linecolor=blue,backgroundcolor=blue!05,bordercolor=blue,#1]{\normalsize #2}}
\newcommandx{\info}[2][1=]{\todo[linecolor=yellow,backgroundcolor=yellow!25,bordercolor=yellow,#1]{#2}}
\newcommandx{\floatmodel}[2][1=]{\todo[inline,linecolor=red,backgroundcolor=yellow!25,bordercolor=yellow,#1]{#2}}
\newcommandx{\thiswillnotshow}[2][1=]{\todo[disable,#1]{#2}}
\newcommandx{\celine}[2][1=]{\todo[inline,linecolor=green,backgroundcolor=green!25,bordercolor=green,caption={\normalsize \textbf{Celine}},#1]{\normalsize #2}}
\newcommandx{\isja}[2][1=]{\todo[inline,linecolor=blue,backgroundcolor=blue!25,bordercolor=blue,caption={\normalsize \textbf{Isja}},#1]{\normalsize #2}}
\newcommandx{\jesper}[2][1=]{\todo[inline,linecolor=red,backgroundcolor=red!25,bordercolor=red,caption={\normalsize \textbf{Jesper}},#1]{\normalsize #2}}
\newcommandx{\krisztina}[2][1=]{\todo[inline,linecolor=gray,backgroundcolor=red!25,bordercolor=red,caption={\normalsize \textbf{Krisztina}},#1]{\normalsize #2}}
\newtheorem{numclaim}{Claim}
\newcommand{\qeds}{$\blacksquare$}
\newenvironment{subproof}{\textit{Proof of Claim.}}{\qeds}
\newcommand{\Oh}{\mathcal{O}}
\newcommand{\Os}{\Oh^{\star}}
\newcommand{\N}{\mathbb{N}}
\newcommand{\overl}{\overleftarrow}
\newcommand{\overr}{\overrightarrow}
\DeclareMathOperator\supp{supp}
\newcommand{\pw}{\ensuremath{pw}}
\newcommand{\tw}{\ensuremath{tw}}
\renewcommand{\leq}{\leqslant}
\renewcommand{\geq}{\geqslant}
\renewcommand{\le}{\leqslant}
\renewcommand{\ge}{\geqslant}
\newcommand{\In}{\mathsf{in}}
\newcommand{\Out}{\mathsf{out}}
\newcommand{\Ca}{\mathsf{cap}}
\newcommand{\Dem}{\mathsf{dem}}
\newcommand{\Dis}{\mathsf{cost}}
\newcommand{\defproblem}[3]{
	\vspace{2mm}
	\vspace{1mm}
	\noindent\fbox{
		\begin{minipage}{0.95\textwidth}
			#1 \\
			{\bf{Input:}} #2  \\
			{\bf{Task:}} #3
		\end{minipage}
	}
	\vspace{2mm}
}
\title{On the Parameterized Complexity of the Connected Flow and Many Visits TSP Problem\thanks{Supported by the project CRACKNP that has received funding from the European Research Council (ERC) under the European Union’s Horizon 2020 research and innovation programme (grant agreement No 853234) and by the Netherlands Organization for Scientific Research under project no. 613.009.031b.}}
\titlerunning{Parameterized Complexity of Connected Flow and Many-visits TSP}
\author{
	Isja Mannens\inst{1} 
	\and	
    Jesper Nederlof\inst{1}\orcidID{0000-0003-1848-0076} 
    \and
    Céline Swennenhuis\inst{2}\orcidID{0000-0001-9654-8094} 
	\and  
	Krisztina Szil\'agyi\inst{1} 
}
\institute{Utrecht University, The
	Netherlands,\\ \texttt{\{i.m.e.mannens, j.nederlof, k.szilagyi\}@uu.nl}.\and
	Eindhoven University of Technology, The Netherlands, \texttt{c.m.f.swennenhuis@tue.nl}. }
\begin{document}

\maketitle

\thispagestyle{empty}
We study a variant of \textsc{Min Cost Flow} in which the flow needs to be connected. Specifically, in the \textsc{Connected Flow} problem one is given a directed graph $G$, along with a set of demand vertices $D \subseteq V(G)$ with demands $\Dem: D \rightarrow \mathbb{N}$, and costs and capacities for each edge. The goal is to find a minimum cost flow that satisfies the demands, respects the capacities and induces a (strongly) connected subgraph. This generalizes previously studied problems like the \textsc{(Many Visits) TSP}.

We study the parameterized complexity of \textsc{Connected Flow} parameterized by $|D|$, the treewidth $\tw$ and by vertex cover size $k$ of $G$ and provide:
\begin{enumerate}
	\item \textsf{NP}-completeness already for the case $|D|=2$ with only unit demands and capacities and no edge costs, and fixed-parameter tractability if there are no capacities,
	
	\item a fixed-parameter tractable $\Os(k^{\Oh(k)})$ time algorithm for the general case, and a kernel of size polynomial in $k$ for the special case of \textsc{Many Visits TSP},
	
	\item a $|V(G)|^{\Oh(\tw)}$ time algorithm and a matching $|V(G)|^{o(\tw)}$ time conditional lower bound conditioned on the Exponential Time Hypothesis.
\end{enumerate}
To achieve some of our results, we significantly extend an approach by Kowalik et al.~[ESA'20].

\begin{picture}(0,0)
\put(462,-270)
{\hbox{\includegraphics[width=40px]{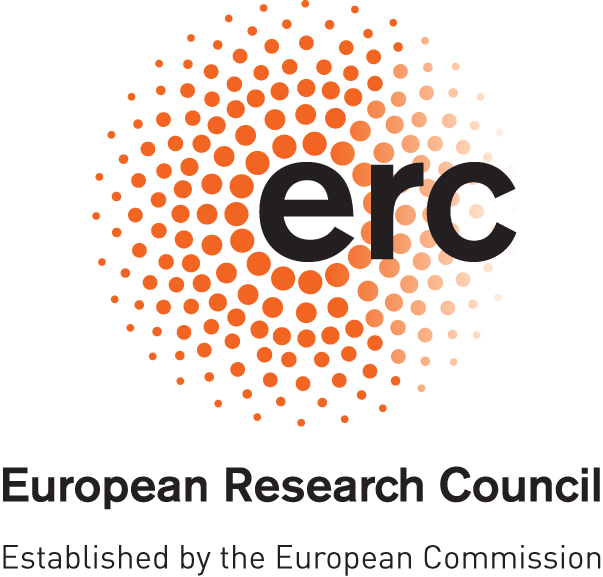}}}
\put(452,-330)
{\hbox{\includegraphics[width=60px]{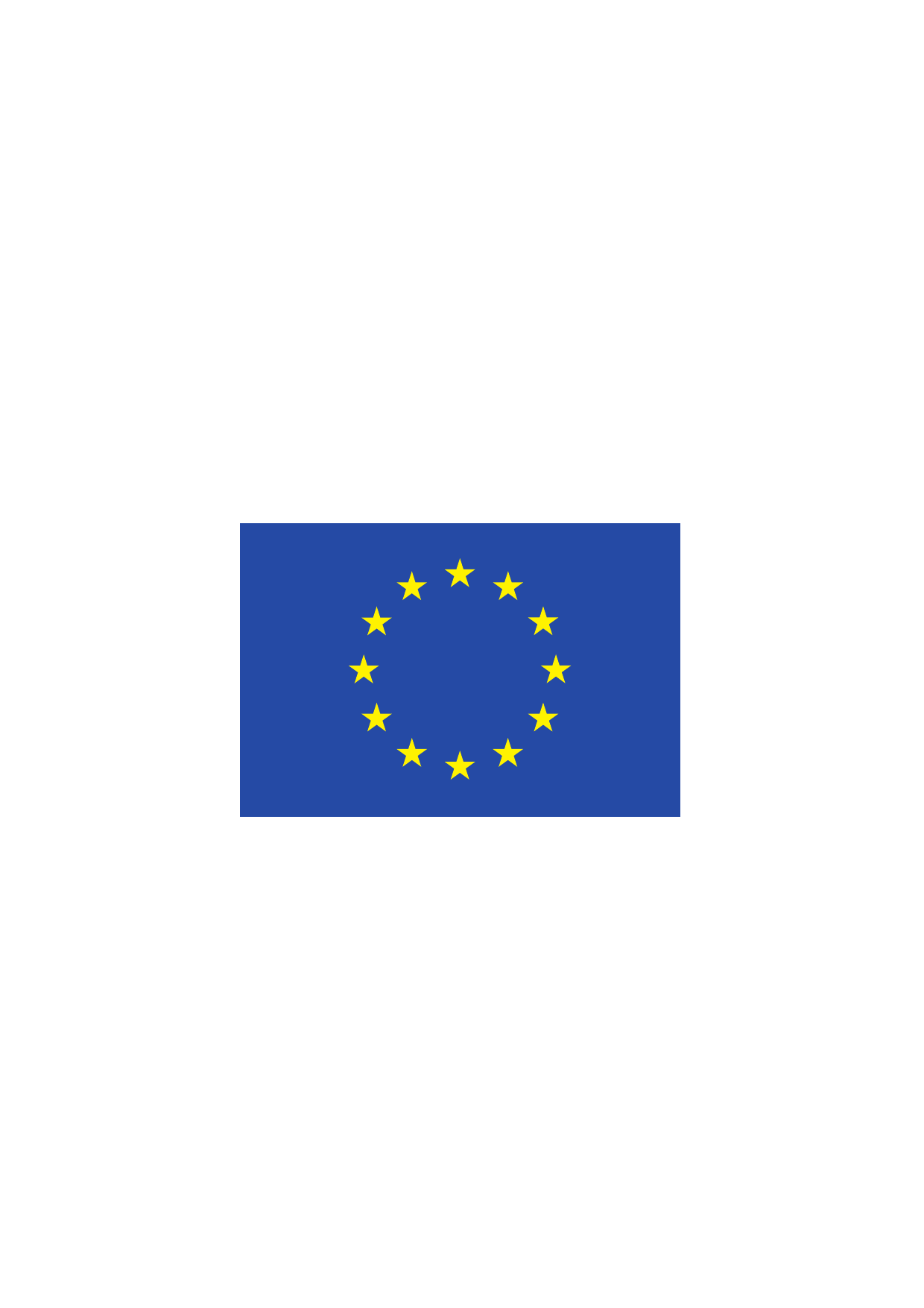}}}
\end{picture}


\section{Introduction}
In the \textsc{Connected Flow} problem we are given a directed graph $G=(V,E)$ with costs and capacities on the edges and a set $D \subseteq V$ such that each $v\in D$ has a fixed demand. We then ask for a minimum cost \emph{connected flow} on the edges that satisfies the demand for each $v \in D$, i.e. we look for a minimum cost \emph{flow conserving} function $f:E\to \N$, such that the set of edges with strictly positive flow $f$ is connected and the total flow coming into $v \in D$ is equal to its demand (see below for a formal definition of the problem).

One arrives (almost) directly at the \textsc{Connected Flow} problem by adding a natural connectivity constraint to the well known \textsc{Min Cost Flow} problem (from now on abbreviated with simply `\textsc{Flow}', see Appendix~\ref{sec:problemdefs} for details). But unfortunately, \textsc{Connected Flow} has the same fate as many other slight generalizations of \textsc{Flow}: The additional requirement changes the complexity of the problem from being solvable in polynomial time to being \textsf{NP}-complete (see~\cite[Section A2.4]{GareyJ79} for more of such \textsf{NP}-complete generalizations).

The problem generalizes a number of problems, including the \textsc{Many Visits TSP (MVTSP)}\footnote{In this problem a minimum length tour is sought that satisfies each vertex a given number of times. The generalization is by setting the demand of a vertex to the number of times the tour is required to visit that vertex and using infinite capacities.}. This problem has a variety of potential applications in scheduling and computational geometry (see e.g the discussion by Berger et al.~\cite{BergerKMV20}), and its study from the exponential time perspective recently witnessed several exciting results. In particular, Berger et al.~\cite{BergerKMV20} improved an old $n^{\Oh(n)}$ time algorithm by Cosmadakis and Papadimitriou~\cite{CosmadakisP84} to $\Os(5^n)$ time and polynomial space, and recently the analysis of that algorithm was further improved by Kowalik et al.~\cite{kowalik_et_al:LIPIcs:2020:12932} to $\Os(4^n)$ time.

The \textsc{Connected Flow} problem also generalizes other problems studied in parameterized complexity, such as the \textsc{Eulerian Steiner Subgraph} problem, that was used in an algorithm for \textsc{Hamiltonian Index} by Philip et al.~\cite{PhilipRS20}, or the problem of finding $2$ short edge
disjoint paths in undirected graphs (whose parameterized complexity was for example studied by Cai and Ye~\cite{CaiY16}).

Based on these connections with existing literature on in particular the \textsc{MVTSP}, its appealing formulation, and it being a direct extension of the well-studied \textsc{Flow} problem, we initiate the study of the parameterized complexity of \textsc{Connected Flow} in this paper.

\paragraph*{Our Contributions.}
We first study the (arguably) most natural parameterization: the number of demand vertices for which we require a certain amount of flow.
We show that the problem is \textsf{NP}-complete even in a very special case:
\begin{theorem}\label{thm:NP}
	\textsc{Connected Flow} with $2$ demand vertices is \textsf{NP}-complete.
\end{theorem}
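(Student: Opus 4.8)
I would prove membership in NP and NP-hardness separately, and (as the excerpt promises) get hardness already for unit demands, unit capacities and no edge costs. Membership in NP is easy: a solution is described by its support (at most $|E(G)|$ arcs) together with the positive flow values, and by standard min-cost-flow/totally-unimodular arguments one may assume these values have polynomially bounded encoding, so a solution can be guessed and verified (flow conservation, capacities, demands, cost bound, and connectivity of the support) in polynomial time; in the unit case the support alone is a certificate. For hardness I would reduce from the classical problem of deciding, given a digraph $G$ and distinct terminals $s_1,t_1,s_2,t_2$, whether $G$ contains vertex-disjoint directed paths $s_1\leadsto t_1$ and $s_2\leadsto t_2$, which is NP-complete by Fortune, Hopcroft and Wyllie.

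\noindent\textbf{Construction.}
Given such an instance I first apply the standard vertex-splitting gadget: replace every vertex $w$ by an arc $w^{\In}\to w^{\Out}$, redirecting arcs into $w$ to $w^{\In}$ and arcs out of $w$ out of $w^{\Out}$; call the result $\hat G$. Then pairs of vertex-disjoint directed paths in $G$ correspond exactly to pairs of arc-disjoint directed paths in $\hat G$ with endpoints among the $s_i^{\In}$ and $t_i^{\Out}$. Now I build the \textsc{Connected Flow} instance $G'$: take $\hat G$ with all capacities $1$ and all costs $0$, add two fresh vertices $v_1,v_2$ with $D=\{v_1,v_2\}$ and $\Dem(v_1)=\Dem(v_2)=1$, and add the four \emph{connector} arcs $v_1\to s_1^{\In}$, $t_1^{\Out}\to v_2$, $v_2\to s_2^{\In}$, $t_2^{\Out}\to v_1$ (capacity $1$, cost $0$). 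This is plainly polynomial.

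\noindent\textbf{Core equivalence.}
Since all capacities are $1$, a connected flow in $G'$ is exactly an arc set $H$ that is balanced at every vertex (flow conservation), has connected underlying subgraph, and has exactly one arc entering each of $v_1,v_2$ (the demand constraints). As the only arc of $G'$ entering $v_1$ is $t_2^{\Out}\to v_1$ and the only one leaving $v_1$ is $v_1\to s_1^{\In}$ (symmetrically for $v_2$), all four connectors lie in $H$ and $v_1,v_2$ have out-degree $1$ as well; so $H$ is a connected Eulerian subgraph and has an Eulerian circuit. Reading that circuit from $v_1$: it meets $v_1$ and $v_2$ exactly once, and the forced incident arcs pin it to the cyclic form ``$v_1\to s_1^{\In}$, then a walk $W_1$ inside $\hat G$ from $s_1^{\In}$ to $t_1^{\Out}$, then $t_1^{\Out}\to v_2$, then $v_2\to s_2^{\In}$, then a walk $W_2$ inside $\hat G$ from $s_2^{\In}$ to $t_2^{\Out}$, then $t_2^{\Out}\to v_1$''. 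The walks $W_1,W_2$ jointly use every arc of $H$ inside $\hat G$ exactly once, hence are arc-disjoint; shortening them to simple paths and translating back through the split yields vertex-disjoint paths $s_1\leadsto t_1$, $s_2\leadsto t_2$ in $G$. Conversely, given such vertex-disjoint paths, the corresponding arc-disjoint (indeed vertex-disjoint) paths $Q_1,Q_2$ in $\hat G$ together with the four connectors form a single directed cycle through $v_1$ and $v_2$: a connected Eulerian subgraph with $v_1,v_2$ of in-degree $1$, i.e.\ a feasible connected flow.

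\noindent\textbf{Main obstacle.}
The delicate point is precisely this equivalence: one must ensure the connectivity requirement cannot be satisfied by an unintended configuration, most dangerously the ``crossed'' routing $s_1\leadsto t_2$, $s_2\leadsto t_1$ with superfluous cycles glued on merely to connect $H$. The Eulerian-circuit argument is what kills this: because $v_1,v_2$ each have in- and out-degree exactly $1$ with their incident arcs dictated by the gadget, the circuit is forced into the ``uncrossed'' cyclic order, and any extra cycles of $H$ are automatically absorbed into $W_1$ and $W_2$. The remaining points are routine: that assuming $s_1,t_1,s_2,t_2$ distinct loses no generality, and that the in-/out-degree bookkeeping goes through at the terminal copies $s_i^{\In}, t_i^{\Out}$ where a connector meets a path endpoint.
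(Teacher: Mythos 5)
Your proposal is correct and follows essentially the same route as the paper: a reduction from the directed two-vertex-disjoint-paths problem of Fortune--Hopcroft--Wyllie, using the vertex-splitting gadget, unit capacities and zero costs, and extra arcs that stitch the two paths into one closed connected Eulerian structure whose forced degree-one behaviour at the demand vertices pins down the uncrossed routing. The only differences are cosmetic (you add two fresh demand vertices and argue via an Euler circuit, while the paper uses $s_1,s_2$ as the demand vertices and observes the support is a single cycle) plus your explicit NP-membership argument, which the paper leaves implicit.
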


The reduction heavily relies on the capacities and we show that this is indeed what makes the problem hard. Namely, using the algorithm for \textsc{MVTSP} from \cite{kowalik_et_al:LIPIcs:2020:12932}, we get an algorithm that can solve instances of \textsc{Connected Flow} if all capacities are infinite:

\begin{theorem}\label{thm:inf_cap}
	Any instance instance $(G, D, \Dem, \Dis, \Ca)$ of \textsc{Connected Flow} where $\Ca(e)=\infty$ for all $e\in E$ can be solved in time $\Os(4^{|D|})$.
\end{theorem}

Next we study a typically much larger parameterization, the size $k$ of a vertex cover of $G$. One of our main technical contributions is that \textsc{Connected Flow} is fixed-parameter tractable, parameterized by $k$:
\begin{theorem} \label{thm:XFPT}
	There is an algorithm solving a given instance $(G,D,\Dem,\Dis,\Ca)$ of \textsc{Connected Flow} such that $G$ has a vertex cover of size $k$ in time $\Os(k^{\Oh(k)})$.
\end{theorem}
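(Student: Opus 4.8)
The plan is to exploit the structure imposed by a size-$k$ vertex cover $C$ of $G$: every edge of $G$ has at least one endpoint in $C$, so the independent set $I = V(G) \setminus C$ contributes only ``pass-through'' behaviour. First I would observe that any vertex $v \in I$ with $v \notin D$ acts purely as a relay: the flow entering and leaving $v$ is determined by how much flow is routed through $v$ between pairs of vertices of $C$. I would classify the vertices of $I \setminus D$ into \emph{types} according to their in- and out-neighbourhoods in $C$ together with the edge costs and capacities on the incident edges; since there are at most $k$ choices for each neighbour and the capacities/costs can be bucketed, the number of types is bounded by a function of $k$, and two vertices of the same type are interchangeable. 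For demand vertices in $I$, and for the vertices of $C$ themselves (at most $k$ of them), we keep them individually. This reduces the task to deciding, for each type, \emph{how many} of its vertices carry flow and \emph{how much}.

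The core of the algorithm is then an integer-programming-style guessing phase followed by a flow computation. Concretely, I would first guess the ``connectivity skeleton'': the set of vertices of $C$ that are used, and enough combinatorial information (e.g.\ which types are used and, for each, a bounded amount of data about the routing pattern) to later certify strong connectivity of the support. The number of such guesses should be $k^{\Oh(k)}$. Given a guess, the remaining question---minimise cost subject to flow conservation, the demands, the capacities, and ``use at least one vertex of each prescribed type''---can be phrased as a min-cost flow problem on an auxiliary graph of size $\poly(k)$ in which each type is collapsed to a single super-vertex with appropriately scaled capacities, and where lower bounds on arcs enforce that prescribed types and demand vertices are actually used. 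Min-cost flow with lower bounds is polynomial-time solvable, so each guess is processed in polynomial time, giving total running time $\Os(k^{\Oh(k)})$.

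The main obstacle is reconciling the ``collapsing'' of a type into a single super-vertex with the global strong-connectivity requirement and with the per-vertex integrality of demands and capacities. Two subtleties need care: (i) when several vertices of the same type are used, the flow they collectively carry must be realisable by integral per-vertex flows respecting the individual capacities---this is where I would invoke integrality of min-cost flow polytopes, after checking that the super-vertex capacity is set to the sum of the individual capacities and that demands in $I$ are handled by keeping those vertices separate; and (ii) strong connectivity of the support is not a flow constraint, so it must be forced by the guessed skeleton. For (ii) I would argue that, because $I$ is independent, any strongly connected support is essentially determined by its trace on $C$ plus, for each type, whether at least one vertex is active and how it connects two parts of $C$; so guessing a connected ``pattern graph'' on $\Oh(k)$ vertices and then insisting (via arc lower bounds) that each pattern edge carries positive flow is enough to guarantee a strongly connected solution, while every optimal solution matches some pattern. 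Verifying that this correspondence is exact---every feasible connected flow is captured by some guess, and every guess that passes the flow computation yields a genuinely strongly connected flow---is the technically delicate part and where I expect the bulk of the proof to go.
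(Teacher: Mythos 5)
There is a genuine gap, and it sits exactly where you predicted the ``technically delicate part'' would be. First, your type classification does not go through: costs and capacities are arbitrary integers given in binary, so two vertices of $I\setminus D$ with the same in/out-neighbourhood in $C$ but different incident costs or capacities are \emph{not} interchangeable, and there is no way to ``bucket'' them into $f(k)$ classes without losing optimality -- the number of distinct cost/capacity profiles is unbounded in $k$. Consequently the auxiliary graph cannot have size $\poly(k)$ (note also that you keep all demand vertices of $I$ individually, and there may be $n-k$ of them), and a super-vertex ``with appropriately scaled capacities'' has no well-defined cost when its members differ in cost. Second, enforcing connectivity by arc lower bounds on a collapsed super-vertex fails: forcing positive flow from $x_i$ into the type and positive flow from the type to $x_j$ can be satisfied by two \emph{different} physical vertices of that type, so no single vertex witnesses an $x_i$--$x_j$ connection and the support can remain disconnected. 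To certify connectivity you need one vertex carrying positive flow on both edges; choosing which one either means guessing a concrete vertex of $I$ for each of up to $\Oh(k)$ connections (an $n^{\Theta(k)}$ guess, i.e.\ XP rather than FPT) or an exchange argument asserting the cheapest candidate works, which again needs the interchangeability that capacities and costs destroy (indeed the paper only manages such an exchange argument for \textsc{MVTSP}, where capacities are absent, in its kernel result).

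The paper's proof avoids both issues by a different mechanism, which is the ingredient missing from your proposal: after guessing the used part $X'\subseteq X$ ($2^k$ guesses) it computes an optimal solution $r$ of the \textsc{Flow} relaxation in polynomial time, and invokes Lemma~\ref{lem:s-rtours} with $T$ a tree of at most $2k$ edges spanning $X'$ inside an optimal connected solution, concluding that some optimal connected flow $f$ has, for every $x\in X'$, total in- and out-degree towards the independent set within $4k$ of that of $r$. It then runs a dynamic program that adds the independent-set vertices one by one, with states consisting of a partition of $X'$ (recording connectivity) and in/out-degree vectors for $X'$ restricted to windows of width $\Oh(k)$ around the prefix sums of $r$; this bounds the table to $n\cdot k^{k}\cdot(8k)^{2k}$ entries and yields $\Os(k^{\Oh(k)})$ even though demands and capacities are exponentially large. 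Without some analogue of this ``closeness to the relaxation'' lemma, your guessing-plus-min-cost-flow scheme has no way to keep the per-vertex bookkeeping (or the connectivity witnesses) within an FPT-sized search space.
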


Theorem~\ref{thm:XFPT} is interesting even for the special case of~\textsc{MVTSP} as it generalizes the $\Os(n^n)$ time algorithm from Cosmadakis and Papadimitriou~\cite{CosmadakisP84}, though it is a bit slower than the
more recent algorithms from~\cite{BergerKMV20,kowalik_et_al:LIPIcs:2020:12932}.
For this special case, we even find a polynomial kernel:
\begin{theorem} \label{thm:Xkernel}
	\textsc{MVTSP} admits a kernel polynomial in the size $k$ of the vertex cover of $G$. 
\end{theorem}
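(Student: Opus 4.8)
The plan is to let $C$ be a vertex cover of $G$ of size $k$, so that $I = V(G) \setminus C$ is an independent set. Since in \textsc{MVTSP} capacities are infinite, the only relevant data about a vertex $v \in I$ is its demand $\Dem(v)$ together with the two vectors of costs $(\Dis(u,v))_{u \in C}$ and $(\Dis(v,u))_{u \in C}$ describing how the closed walk can enter and leave $v$; note that a tour visiting $v$ never uses an edge inside $I$, so consecutive visits of $I$-vertices are impossible and every visit to $v$ contributes a cost $\Dis(u,v) + \Dis(v,w)$ for some $u,w \in C$. Thus each $v \in I$ is characterized by a "type" which is a pair of functions $C \to \mathbb{N}$ (its in- and out-cost profiles). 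The first reduction step is to bucket the vertices of $I$ by type; there are at most $(\text{number of distinct cost values})^{2k}$ types, but more usefully we argue that we only need to keep, for each type, a bounded number of representatives and can fold the remaining demand into multiplicities.

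The key steps, in order, are: (1) formalize the notion of type and show that two $I$-vertices of the same type are interchangeable in any solution, so the instance is determined by the multiset of types present and, for each type, the total demand on vertices of that type; (2) bound the number of \emph{distinct} types that matter — here I would use the standard trick that in an optimal tour the number of distinct "entry/exit pairs" $(u,w)$ actually used is at most the total number of edges incident to $C$, which is $\Oh(k^2)$ up to multiplicities, so types that are never "cheapest" for any pair $(u,w)$ can be discarded, leaving $\poly(k)$ relevant types; (3) within each relevant type, observe that demand beyond some threshold can be handled greedily/by a pigeonhole/exchange argument, so we may assume each type has demand bounded by a polynomial in $k$, and encode large demands in binary (acceptable for a polynomial kernel in bitsize); (4) shrink the cost function itself using a standard argument that replaces the given costs by equivalent polynomially-bounded costs preserving the relative order of all relevant tour costs (e.g.\ via the results on compressing numbers in the input, or by arguing only $\poly(k)$ distinct cost values among $C$-incident edges are ever used in an optimal solution); (5) assemble: output a graph on $C$ plus $\poly(k)$ representative vertices from $I$, with demands and costs of bitsize $\poly(k)$, together with a proof of equivalence in both directions.

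The main obstacle will be step~(3) combined with step~(4): bounding the demands and the magnitude of the costs simultaneously so that the whole instance has size polynomial in $k$, not merely a polynomial \emph{number} of vertices. Reducing the number of $I$-vertices is comparatively routine once types are defined, but an MVTSP instance can have demands and edge costs that are astronomically large, and a genuine kernel must compress these numbers. I expect the cleanest route is to prove an exchange lemma showing that in some optimal tour each type is used with multiplicity that can be assumed $\Oh(\poly(k))$ after pulling out a "bulk" part of the demand that is serviced in a fixed cheapest way (contributing a precomputable additive constant to the cost), and then to invoke a number-compression step on the residual instance. The equivalence proof then has to carefully track this additive constant and the bulk assignment when translating a solution of the kernel back to the original instance. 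The connectivity/Eulerian side conditions of \textsc{Connected Flow} specialize here to "the chosen edge multiset forms a connected Eulerian multigraph," which must be checked to survive the bucketing — this is straightforward since merging identical-type vertices preserves both degree balance and connectivity through $C$.
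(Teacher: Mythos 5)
There is a genuine gap, and it sits exactly where you predicted the difficulty would be. In \textsc{MVTSP} every vertex has a demand, so your step~(2) --- ``types that are never cheapest for any pair $(u,w)$ can be discarded'' --- is not available: no independent-set vertex can be discarded, and even ``folding'' its demand into a bulk term requires knowing, and proving, how that vertex is serviced in \emph{some} optimal tour. Your step~(3) proposes to service the bulk ``in a fixed cheapest way'' per vertex, but a vertex-local cheapest entry/exit pair need not be globally consistent: the in- and out-degrees at the cover vertices are forced by their own demands, so routing each independent vertex through its individually cheapest pair can violate flow conservation at the cover, and it can also destroy connectivity; the proposal offers no exchange argument that certifies such a bulk assignment is compatible with an optimal connected solution. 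Also note that the number of types is not bounded by any function of $k$ (cost profiles are arbitrary vectors in $\mathbb{N}^{2k}$), so bucketing by type alone does not reduce the vertex count; everything hinges on the unproven steps (2)--(3).

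The paper closes this gap with a different mechanism that your outline never supplies: it computes in polynomial time an optimal solution $r$ of the \textsc{Flow} relaxation, normalizes it (by cancelling alternating cycles) so that its support between the cover $X$ and the independent set $B$ is a union of two forests, which forces all but at most $k$ vertices of $B$ to have a unique in-neighbour $x_i$ and unique out-neighbour $x_j$ in $supp(r)$, grouping them into $\Oh(k^2)$ classes $B_{ij}$. Then Lemma~\ref{lem:s-rtours} (the proximity lemma for the relaxation) guarantees an optimal \emph{connected} solution differing from $r$ on at most $8k^2$ edges incident to $X$, and an exchange argument using $8k^2+2$ retained representatives per class and per rerouting direction (the sets $\overl{A_{ij}}(l)$, $\overr{A_{ij}}(l)$) shows the remaining vertices $R_{ij}$ can have their incident flow frozen to $r$ while preserving optimality and connectivity; these are then contracted into single vertices $r_{ij}$ carrying the aggregated demand, yielding $\Oh(k^5)$ vertices. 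Your step~(4) (compressing the numeric values, via Etscheid et al.) matches the paper, but without a substitute for the relaxation-plus-proximity argument the core of your kernelization does not go through.
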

The starting point of the proofs of both Theorem~\ref{thm:XFPT} and Theorem~\ref{thm:Xkernel} is a strengthening of a non-trivial lemma from Kowalik et al.~\cite{kowalik_et_al:LIPIcs:2020:12932} which proves the existence of a solution $s'$ that is `close' to a solution $r$ of the \textsc{Flow} problem instance obtained by relaxing the connectivity requirement. 
Since such an $r$ can be found in polynomial time, it can be used to determine how the optimal solution roughly looks.

This is subsequently used by a dynamic programming algorithm that aims to find such a solution close to $r$ to establish Theorem~\ref{thm:XFPT}; the restriction to solutions being close to $r$ crucially allows us to evaluate only $\Os(k^{\Oh(k)})$ table entries.
Additionally this is used in the kernelization algorithm of Theorem~\ref{thm:Xkernel} to locate a set of $\Oh(k^5)$ vertices such that only edges incident to vertices in this set will have a different flow in $r$ and $s'$.

The last parameter we consider is the \emph{treewidth}, denoted by $\tw$, of $G$, which is a parameter that is widely used for many graph problems and that is smaller than $k$.
We present a Dynamic Programming algorithm for \textsc{Connected Flow}:

\begin{theorem}\label{thm:TwDP}
	Let $M$ be an upper bound on the demands in the input graph $G$, and suppose a tree decomposition of width $\tw$ of $G$ is given. Then a \textsc{Connected Flow} instance with $G$ can be solved in time $|V(G)|^{\Oh(\tw)}$ and an \textsc{MVTSP} instance with $G$ can be solved in time $\min\{|V(G)|,M\}^{\Oh(\tw)}|V(G)|^{\Oh(1)}$.
\end{theorem}

We also give a matching lower bound for \textsc{MVTSP}. This lower bound heavily builds on previous approaches, and in particular, some gadgets from Cygan et al.~\cite{DBLP:journals/corr/abs-1211-1506}.

\begin{theorem} \label{thm:tw=hard}
	Assuming the Exponential Time Hypothesis, \textsc{MVTSP} cannot be solved in time $f(\tw)|V(G)|^{o(\tw)}$ for any computable function $f(\cdot)$. 
\end{theorem}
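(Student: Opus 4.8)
The plan is to give a reduction from a problem known to require $f(\tw)n^{o(\tw)}$ time under ETH; the natural choice is a pathwidth-parameterized variant of a CSP such as $3$-Coloring, $q$-Coloring, or more conveniently the classical ``$k\times k$ Hitting Set / Permutation'' style bound, but the cleanest route—and the one hinted at by the reference to Cygan, Nederlof, Pilipczuk, Pilipczuk, van Rooij, Wojtaszczyk~\cite{DBLP:journals/corr/abs-1211-1506}—is to reduce from the problem they use to show that \textsc{Hamiltonian Cycle} (or a connectivity-type problem) has no $(2-\eps)^{\pw}n^{\Oh(1)}$ algorithm, or more to the point, to reduce from a $q$-CSP with a linear number of constraints each on $\Oh(1)$ variables, over a domain that scales with a parameter $t$, for which the ETH rules out $f(t)\,(\text{size})^{o(t)}$ time. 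First I would fix such a source problem $\Pi$ together with an instance whose primal graph has pathwidth $\Oh(t)$ and size $N$, where ETH implies no $f(t)N^{o(t)}$ algorithm.

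The core of the construction is to simulate, along a path decomposition, the ``state'' of the computation using the number of times the \textsc{MVTSP} tour visits a vertex, rather than (or in addition to) which edges it uses. The idea is that in \textsc{MVTSP} with unbounded demands, a vertex with demand $d$ forces $d$ visits, and we can route these visits through gadget vertices so that the \emph{distribution} of visits across a constant-width ``bus'' of vertices encodes an element of a large domain $[t]$. Concretely, I would build, for each variable $x_i$ of the CSP, a path-like gadget of $\Oh(1)$ vertices carrying a total demand of $\Theta(t)$, so that any optimal tour partitions this demand among the gadget vertices in one of $t$ canonical ways—each canonical split costing the same, each non-canonical split costing strictly more—thereby ``choosing'' a value in $[t]$. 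Constraint gadgets, placed between consecutive variable gadgets in the path order, then check compatibility of adjacent choices; each such gadget touches only $\Oh(1)$ vertices of its two neighboring variable gadgets, so the pathwidth of the resulting graph is $\Oh(t)$ times a constant (this is where the gadgets of Cygan et al.\ come in: they give ready-made ``state-passing'' constructions of bounded width whose correctness is already established, and adapting them to count visits instead of tracking a Boolean is the technical heart).

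The steps in order: (i) pick the ETH-hard source problem $\Pi$ with the linear-size, bounded-width primal graph and its $f(t)N^{o(t)}$ lower bound; (ii) design the variable-selection gadget that uses vertex demands to encode a $[t]$-valued choice, proving that the $t$ canonical visit-splits are exactly the cheap ones; (iii) design the constraint-checking gadget, bounding its interaction to $\Oh(1)$ vertices per adjacent variable gadget, and proving the ``yes-instance $\Leftrightarrow$ there is a tour of the target length'' equivalence; (iv) assemble the gadgets in the path order to obtain a graph $G$ with $\pw(G)=\Oh(t)$ and $|V(G)|=N^{\Oh(1)}$, with $M=\Theta(t)$; (v) observe that a hypothetical $f(\tw)\,n^{o(\tw)}$ algorithm for \textsc{MVTSP}, applied to $G$ (using $\tw(G)\le\pw(G)=\Oh(t)$), would solve $\Pi$ in time $f(\Oh(t))\,(N^{\Oh(1)})^{o(t)} = g(t)\,N^{o(t)}$, contradicting ETH via the chosen source bound.

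The main obstacle, I expect, is step (ii)--(iii): ensuring that the encoding is \emph{robust}, i.e.\ that no clever tour can ``cheat'' by using a non-canonical visit-split in one gadget and compensating with an under/over-visit elsewhere, since in \textsc{MVTSP} the tour is global and a flow can shuttle excess visits between distant gadgets along low-cost edges. Pinning this down requires a careful potential/charging argument—most likely assigning each unit of flow a cost contribution that is minimized exactly at the canonical configurations and showing any deviation incurs a strictly positive penalty that cannot be refunded—together with a matching upper bound construction realizing the target length for yes-instances. A secondary subtlety is keeping the demands (hence $M$) only $\Theta(t)$ rather than polynomial in $N$, so that the resulting bound is genuinely in terms of $\tw$; this is what allows the same construction to also witness the $\min\{n,M\}^{\Oh(\tw)}$ algorithm of Theorem~\ref{thm:TwDP} as essentially optimal.
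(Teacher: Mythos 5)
There is a genuine gap, and it lies in your parameter choices in steps (i), (iv) and (v): they are internally inconsistent and make the reduction impossible in principle. You want a source CSP over a domain of size $t$ whose primal graph has pathwidth $\Oh(t)$ and which, under ETH, cannot be solved in $f(t)N^{o(t)}$ time; but such a CSP is solvable by straightforward dynamic programming over the path decomposition in $t^{\Oh(t)}N^{\Oh(1)}$ time, i.e.\ it is FPT in $t$, so no lower bound of that form can hold. Symmetrically, on the \textsc{MVTSP} side you insist on $M=\Theta(t)=\Theta(\pw)$, but then the instances you produce are solvable in $\min\{n,M\}^{\Oh(\tw)}n^{\Oh(1)}=t^{\Oh(t)}n^{\Oh(1)}$ time by Theorem~\ref{thm:TwDP}, so they can never certify an $n^{o(\tw)}$ lower bound; your reduction would at best show that your source problem is FPT. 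For a lower bound of the form $f(\tw)n^{o(\tw)}$ the state space per separator must be of size $n^{\Theta(\tw)}$, which forces the encoded domain --- and hence the demands or edge multiplicities --- to be polynomial in the size of the constructed instance, not bounded by a function of the width. Your closing worry about ``keeping $M$ only $\Theta(t)$'' is exactly backwards.

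The paper's proof shows how to set the parameters correctly: it reduces from \textsc{3-CNF-SAT}, groups $s$ variables per row, and encodes the group assignment as the multiplicity (in $[2^s]$) of a single edge $\{l_{i,j},r_{i,j}\}$, with vertex demands about $2^s$; flow conservation then propagates the value along each row, and each clause is checked by a scanner gadget (built from the 2-label gadgets of Cygan et al.) of size $2^{\Oh(s)}$ that enumerates the satisfying multisets. This gives pathwidth $3n/s+\Oh(1)$ while the instance size and $M$ are $2^{\Theta(s)}$, i.e.\ polynomially related to each other, and the ``for any computable $f$'' quantifier is obtained by choosing $s=4n/g(n)$ as a function of the assumed $f$ (together with the sparsification lemma) --- a trade-off step that is entirely missing from your plan, and which you cannot simply inherit from a source problem once your source is FPT. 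Note also that the paper's instance is unweighted: correctness rests on demands, flow conservation and the connectivity enforced by the 2-label/scanner gadgets, so no cost-penalty or charging argument against ``cheating'' tours is needed; your proposed penalty scheme is precisely the part you flag as the technical heart, and it remains unsubstantiated.
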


Note that since \textsc{MVTSP} is a special case of \textsc{Connected Flow} this lower bound extends to \textsc{Connected Flow}.

\paragraph*{Notation and Formal Problem Definitions.}
We let $\Os(\cdot)$ omit factors polynomial in the input size. We assume that all integers are represented in binary, so in this paper the input size will be polynomial in the number of vertices of the input graph and the logarithm of the maximum input integer. For a Boolean $b$ we define $[b]$ to be $1$ if $b$ is true and $0$ otherwise. For integers $a$ and $b$ we denote $[a,b]$ as the set of all integers $i$ such that $a \le i \le b$.
All graphs in this paper are directed unless stated otherwise. 

We use the notion of \emph{multisets}, which are sets in which the same element may appear multiple times. Formally, a multiset is an ordered pair $(A, m_A)$ consisting of a set $A$ and a multiplicity function $m_A:A\rightarrow \mathbb{Z}^+$. We slightly abuse notation and let $m_A(e)=0$ if $e\not \in A$.
We can see flow $f$ as a multiset of directed edges, where each edge appears $f(e)$ number of times. We then say that $f(e)$ is the \emph{multiplicity} of $e$.
Given a function $f:E\rightarrow \N$, we define $G_f=(V',E')$  as the multigraph where $e\in E'$ has multiplicity $f(e)$ and $V'$ is the set of vertices incident to at least one $e\in E'$. We let $E(G_f)$ be equal to the multiset $E'$. We also define $\supp(f)=\{e\in E: f(e)>0\}$ as the \emph{support} of $f$. 
The formal statement of \textsc{Connected Flow} is as follows:

\defproblem{ \textsc{Connected Flow}}{$G = (V,E)$, $D \subseteq V$, $\Dem : D \to \N$, $\Dis: E \to \N$, $\Ca: E \to \N\cup\{\infty \}$}{Find a function $f : E \to \N$ such that
	\begin{itemize}	[noitemsep,topsep=2pt]
		\item $G_f$ is connected, 
		\item for every $v \in V$ we have $\sum_{(u,v)\in E} f(u,v)  = \sum_{(v,u)\in E} f(v,u)$,
		\item for every $v \in D$ we have 
		$\sum_{(u,v)\in E} f(u,v) = \Dem(v)$,
		\item for every $e \in E: f(e) \le \Ca(e)$,
	\end{itemize}
	and the value $\Dis(f) = \sum_{e \in E} \Dis(e)f(e)$ is minimized.}

Note that $G_f$ in the above definition is Eulerian (every vertex has the same in and out degree), so it is strongly connected if and only if it is weakly connected. 
We define \textsc{Flow} as the \textsc{Connected Flow} problem without the connectivity requirement, which can be solved in polynomial time\footnote{In Appendix~\ref{sec:problemdefs} we show that \textsc{Flow} is equivalent to the \textsc{Min Cost Flow} problem, which is polynomial-time solvable.}. 
\textsc{MVTSP} is a special case of \textsc{Connected Flow}, where $D=V$ and capacities are infinite. Formal definitions of these problems can be found in Appendix~\ref{sec:problemdefs}.

\paragraph*{Organization.}
The remainder of this paper is organized as follows: in Section~\ref{sec:appdx} we study the parameterization by the number of demand vertices. We show \textsf{NP}-completeness and discuss the reduction of the infinite capacities case of \textsc{Connected Flow} to \textsc{MVTSP}.

In Section~\ref{sec:vert_cover} we first introduce an extension of a lemma from Kowalik et al.~\cite{kowalik_et_al:LIPIcs:2020:12932} that shows that we can transform an optimal solution to the \textsc{Flow} relaxation
to include a specific edge set from an optimal solution of the original \textsc{Connected Flow} instance, without changing too many edges. 
This lemma is subsequently used in Section~\ref{sec:dp} to prove Theorem~\ref{thm:XFPT} and in Section~\ref{sec:polyker} to prove Theorem~\ref{thm:Xkernel}.

In Section~\ref{sec:pathw} we discuss the parameterization by treewidth and pathwidth, giving a Dynamic Programming algorithm for \textsc{Connected Flow} and a matching lower bound for \textsc{MVTSP}. 

We conclude the paper with a discussion on further research opportunities. In Appendix~\ref{app:Definitions} we provide formal problem definitions for \textsc{Flow} and prove it is equivalent to \textsc{Min Cost Flow}. 

\section{Parameterization by number of demand vertices}\label{sec:appdx}\label{app:reductionD}\label{app:redtoMVTSP}
In this section we study the parameterized complexity of \textsc{Connected Flow} with parameter $|D|$, the number of vertices with a demand. We first prove that the problem is \textsf{NP}-hard, even for $|D|=2$, by a reduction from the problem of finding two vertex disjoint paths in a directed graph. Next we show that, if $\Ca(e) = \infty$ for all $e\in E$, the problem can be reduced to an instance of \textsc{MVTSP}, and hence solved in time $\Os(4^{|D|})$.

\begingroup
\def\thetheorem{\ref{thm:NP}}
\begin{theorem}
	\textsc{Connected Flow} with $2$ demand vertices is \textsf{NP}-complete.
\end{theorem}
\addtocounter{theorem}{-1}
\endgroup

\begin{proof} 
	We give a reduction from the problem of finding two vertex-disjoint paths in a directed graph to \textsc{Connected Flow} with demand set $D$ of size 2. The directed vertex-disjoint paths problem has been shown to be \textsf{NP}-hard for fixed $k = 2$ by Fortune et al.~\cite{FortuneHW80}, so this reduction will prove our theorem for $|D| = 2$. Note that the case of $|D| > 2$ is harder, since we can view $|D| = 2$ as a special case, by adding isolated vertices with demand 0.
	
	Given a graph $G$ and pairs $(s_1, t_1)$ and $(s_2, t_2)$, we construct an instance $(G', D, \Dem, \Dis, \Ca)$ of \textsc{Connected Flow}. Let $V_0 = V \setminus \{s_1, s_2, t_1, t_2\}$, we define
	\[
	V(G') = \{s_1, s_2, t_1, t_2\} \cup \{v_{\In} : v \in V_0\} \cup \{v_{\Out} : v \in V_0\}
	\]
	We let $D = \{s_1, s_2\}$ and set $\Dem(s_1) = \Dem(s_2) =1$. We also define
	\begin{align*}
		E(G') = 	&\{(v_{\In}, v_{\Out}) : v \in V_0\} \\
		&\cup \{(s_i, v_{\In}) : (s_i, v) \in E(G), i = 1, 2 \} \\
		&\cup \{(v_{\Out}, t_i) : (v, t_i) \in E(G), i = 1, 2\} \\
		&\cup \{(u_{\Out}, v_{\In}): u, v \in V_0, (u,v) \in E(G)\} \\
		&\cup \{(t_1, s_2), (t_2, s_1)\}.
	\end{align*}
	We now set $\Dis(u,v) = 0$ and $\Ca(u,v)=1$ for every $(u,v) \in E(G')$. We prove that $G$ has two vertex-disjoint paths (from $s_1$ to $t_1$ and from $s_2$ to $t_2$) if and only if $(G',D,\Dem,\Dis,\Ca)$ has a connected flow of cost $0$.
	
	Let $P_1$ and $P_2$ be two vertex disjoint paths in $G$, from $s_1$ to $t_1$ and from $s_2$ to $t_2$ respectively. Intuitively we will simply walk through the same two paths in $G'$ and then connect the end of one to the start of the other. More formally, we construct a flow $f$ in $G'$ as follows. Let $P_1 = s_1, v^1, \dots, v^\ell, t_1$, we set $f(s_1, v_{\In}^1) = f(v_{\Out}^\ell, t_1) = 1$ as well as $f(v_{\In}^i, v_{\Out}^i) = f(v_{\Out}^i, v_{\In}^{i+1}) = 1$ for all $i\in[1,\ell]$. We do the same for $P_2$. Finally we set $f(t_1, s_2) = f(t_2, s_1) = 1$ and set $f$ to $0$ for all other edges. We note that all capacities have been respected and all demands have been met. The resulting flow is connected, since the paths were connected and $f(t_1, s_2) = 1$.
	
	For the other direction, let $f$ be a connected flow for $(G', D, \Dem, \Dis, \Ca)$. Since $\Dem(s_1) = \Dem(s_2) = 1$ and $s_1$ and $s_2$ only have one incoming edge, we have that $f(t_1, s_2) = f(t_2, s_1) = 1$. We argue that $G_f - \{(t_1, s_2), (t_2, s_1)\}$ consists of two vertex disjoint paths in $G'$, one from $s_1$ to $t_1$ and the other from $s_2$ to $t_2$. First we note that for every vertex in $G'$, it has either in-degree $1$ or out-degree $1$ (or possibly both). This means that since we have $\Ca(u,v) = 1$ for every $(u,v) \in E(G')$, every vertex in $V(G_f)$ has in- and out-degree $1$ in $G_f$. Since $G_f$ is connected we find that $G_f$ is a single cycle and thus $G_f - \{(t_1, s_2), (t_2, s_1)\}$ is the union of two vertex-disjoint paths. We now find two vertex-disjoint paths in $G$ by contracting the edges $(v_{\In}, v_{\Out})$ in $G_f - \{(t_1, s_2), (t_2, s_1)\}$.
\end{proof}

\begin{lemma}\label{lem:remove_vert}
	Given an instance $(G, D, \Dem, \Dis, \Ca)$ of \textsc{Connected Flow} where $\Ca(e)=\infty$ for all $e\in E$, we can construct an equivalent instance of \textsc{MVTSP} on $|D|$ vertices.
\end{lemma}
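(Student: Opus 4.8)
The plan is to suppress the non-demand vertices of $G$ by replacing each of them with shortest-path ``shortcut'' edges, obtaining an \textsc{MVTSP} instance on the vertex set $D$. Concretely, for each ordered pair $u,v\in D$ with $u\neq v$ I would let $\ell(u,v)$ be the minimum $\Dis$-cost of a walk in $G$ from $u$ to $v$ whose internal vertices all lie in $V\setminus D$ (and $\ell(u,v)=\infty$ if no such walk exists), and for $v\in D$ I would let $\ell(v,v)$ be the minimum cost of a closed walk at $v$ using at least one edge whose internal vertices all lie in $V\setminus D$. All these values are computable in polynomial time by running Dijkstra's algorithm from each $u\in D$ on $G$ while never expanding out of a vertex of $D$ (costs are nonnegative). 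The \textsc{MVTSP} instance then has vertex set $D$, an edge for each pair with $\ell(u,v)<\infty$ (with cost $\ell$ and infinite capacity), and demand $\Dem(v)$ at each $v\in D$. Vertices of demand $0$ may be discarded, so the instance has at most $|D|$ vertices; if some $v\in D$ with $\Dem(v)>0$ has $\ell(v,w)=\infty$ for all $w\in D$ and $\ell(v,v)=\infty$, or if after discarding $D$ becomes empty, we output a fixed infeasible resp. trivially feasible \textsc{MVTSP} instance.

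For the equivalence I would show the two optimum costs coincide, with solutions transforming both ways (using the convention that an infeasible instance has cost $\infty$). \textbf{From \textsc{Connected Flow} to \textsc{MVTSP}:} given a feasible connected flow $f$ on $G$, the multigraph $G_f$ is connected and Eulerian, so fix an Eulerian circuit $C$; since each traversal of a vertex $v\in D$ by $C$ uses one unit of in-flow at $v$, $C$ meets the vertices of $D$ exactly $\sum_{v\in D}\Dem(v)$ times, in some cyclic order $w_1,\dots,w_t$. Each sub-walk of $C$ between consecutive $w_i$ and $w_{i+1}$ has all internal vertices in $V\setminus D$, hence cost at least $\ell(w_i,w_{i+1})$; taking the multigraph on $D$ with the edges $(w_i,w_{i+1})$ (cyclically) yields a single connected Eulerian tour with in-degree $\Dem(v)$ at each $v$ and $\ell$-cost at most $\Dis(f)$. \textbf{From \textsc{MVTSP} to \textsc{Connected Flow}:} given a feasible tour multigraph $g$ on $D$, replace each copy of an edge $(u,v)$ of $g$ by a minimum-cost walk in $G$ realizing $\ell(u,v)$ and sum the resulting unit flows into one function $f:E\to\N$. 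Then $f$ is flow-conserving (each walk contributes equally to in- and out-degree at every internal vertex), and since those internal vertices all lie in $V\setminus D$, the in-flow into any $v\in D$ equals its in-degree in $g$, i.e.\ $\Dem(v)$; $G_f$ is connected because $g$ is connected and every edge of $g$ becomes a connected walk; and $\Dis(f)$ equals the $\ell$-cost of $g$. Here the hypothesis $\Ca\equiv\infty$ is essential: we may route arbitrarily many units of flow along the same shortest walk. Combining the two directions gives equality of the optima, and as the transformations are polynomial-time this proves the lemma; together with the \textsc{MVTSP} algorithm of Kowalik et al.\ it yields the $\Os(4^{|D|})$ running time claimed in Theorem~\ref{thm:inf_cap}.

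The step I expect to require the most care is making the two reductions line up at the non-demand vertices: in the forward direction one must verify that consecutive demand vertices on the Eulerian circuit are separated only by vertices of $V\setminus D$ (which is what lets us charge each segment to $\ell$), and in the backward direction one must ensure the substituted shortest walks never pass through a vertex of $D$ — otherwise that vertex's incoming flow would exceed its demand. This is precisely why $\ell(u,v)$ is defined with internal vertices restricted to $V\setminus D$. A secondary point is bookkeeping for degenerate cases: self-loops in $g$ (i.e.\ closed walks at a demand vertex through $V\setminus D$), demand-$0$ vertices, empty $D$, and pairs with $\ell(u,v)=\infty$; each is dispatched by the conventions fixed above, but they should be stated explicitly so that the claimed equivalence is literally correct.
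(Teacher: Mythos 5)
Your proposal is correct and follows essentially the same route as the paper: build the shortcut \textsc{MVTSP} instance on $D$ using shortest $u$--$v$ paths whose internal vertices avoid $D$, turn a connected flow into a tour via an Eulerian circuit and consecutive demand-vertex visits, and turn a tour back into a flow by substituting the shortest $D$-avoiding paths (where $\Ca\equiv\infty$ is what makes this legal). Your treatment is if anything slightly more careful than the paper's (explicit handling of self-loops, demand-$0$ vertices, and unreachable pairs, and cost equality argued via two inequalities rather than an optimality contradiction), but the underlying argument is the same.
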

\begin{proof}
	We construct an equivalent instance $(G', \Dem, \Dis')$ of \textsc{MVTSP} as follows. First we let $V(G') = D$ and for $u,v \in D$ we let $(u,v) \in E(G')$ if and only if there is a $u-v$ path in G, disjoint from other vertices in $D$. We then set $\Dis(u,v)$ to be the total cost of the shortest such path. We keep $\Dem(v)$ the same. 
	
	We now show equivalence of the two instances. Let $s' : E(G') \to \N$ be a valid tour on $(G', \Dem, \Dis')$. We construct a connected flow $f$ on $(G, D, \Dem, \Dis, \Ca)$ by, for each $(u,v) \in E(G')$ adding $s'(u,v)$ copies of the shortest $D$-disjoint $u$-$v$-path in $G$ to the flow. Note that the demands are met, since the demands in both instances are the same. Also note that by definition the total cost of $s'(u,v)$ copies of the shortest $D$-disjoint $u-v$ path is equal to $s'(u,v)\cdot \Dis'(u,v)$ and thus the total cost of $f$ is equal to that of $s'$. Finally we note that the capacities are trivially met.
	
	In the other direction, let $f : E(G) \to \N $ be an optimal connected flow on $(G, D, \Dem, \Dis, \Ca)$. Note that $G_f$ is connected and that every vertex in this multigraph has equal in- and out-degrees. This means we can find some Eulerian tour on $G_f$. 
	We now construct an \textsc{MVTSP} tour $s'$ on $G'$ by adding the edge $(u,v)$ every time $v$ is the first vertex with demand to appear after an appearance of $u$ in the Eulerian tour. Again it is easy to see that $s'$ is connected and that the demands are met. The total cost of $s'$ is the same as $f$, namely if it is larger, then there is some pair $u,v \in D$ such that the cost of some path in the Eulerian tour from $u$ to $v$ is less than $\Dis'(u,v)$, which contradicts the definition of $\Dis'$. If it were smaller, then there is some $D$-disjoint path in the Eulerian tour from some $u$ to some $v$ which is longer than $\Dis'(u,v)$. We can then find a cheaper flow by replacing this path with the shortest path, contradicting the optimality of $f$.
\end{proof}

Since \textsc{MVTSP} can be solved in $\Os(4^n)$ time by Kowalik et al.~\cite{kowalik_et_al:LIPIcs:2020:12932}, we get as a direct consequence:

\begingroup
\def\thetheorem{\ref{thm:inf_cap}}
\begin{theorem}
	Any instance instance $(G, D, \Dem, \Dis, \Ca)$ of \textsc{Connected Flow} where $\Ca(e)=\infty$ for all $e\in E(G)$ can be solved in time $\Os(4^{|D|})$.
\end{theorem}	
\addtocounter{theorem}{-1}
\endgroup

\section{Parameterization by vertex cover}\label{sec:vert_cover}
In this section, we consider \textsc{Connected Flow} and \textsc{MVTSP}, parameterized by the cardinality $k$ of a vertex cover of the input graph. We first extend a lemma from Kowalik et al.~\cite{kowalik_et_al:LIPIcs:2020:12932} to instances of \textsc{Connected Flow}. Then we use this lemma to obtain a fixed-parameter tractable algorithm for \textsc{Connected Flow} and a polynomial-sized kernel for \textsc{MVTSP}.
 
\subsection{Transforming the flow relaxation to enforce some edges}\label{sec:redlemma}
Let $s$ be an optimal solution of \textsc{Connected Flow} and let $T\subseteq \supp(s)$.  We prove that, given any optimal solution $r$ for \textsc{Flow}, there is always a flow $f$ that is close to $r$ and $T\subseteq \supp(f)$. Furthermore it has cost $\Dis(f) \le \Dis(s)$. Note that if $T$ connects all demand vertices to each other, this implies that $f$ is connected and thus an optimal solution of \textsc{Connected Flow}. 

The basic idea and arguments are from Kowalik et al.~\cite{kowalik_et_al:LIPIcs:2020:12932}, where a similar theorem for \textsc{MVTSP} was proved. We adjusted their proof to the case with capacities and where not all vertices have a demand. Furthermore, we noted that we can restrict the tours $C\in \mathcal{C}$ in the proof to be inclusion-wise minimal, which allows us to conclude a stronger inequality.   

\begin{lemma}\label{lem:s-rtours}
	Fix an input instance $(G,D,\Dem,\Dis,\Ca)$ with $G= (V,E)$. Let $s$ be an optimal solution of \textsc{Connected Flow} and let $T\subseteq \supp(s)$. For every optimal solution $r$ of \textsc{Flow}, there is a flow $f$ with $\Dis(f)\le \Dis(s)$, with $f(e)>0$ for all $e\in T$ and such that for every $v \in V$:
		$$\sum_{u \in V} |r(u,v) - f(u,v)| \le 2|T|, \qquad \text{ and } \qquad \sum_{u \in V} |r(v,u) - f(v,u)| \le 2|T|. $$ 
\end{lemma}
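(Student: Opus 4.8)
The plan is to adapt the argument of Kowalik et al.~\cite{kowalik_et_al:LIPIcs:2020:12932}: view the ``difference'' of $s$ and $r$ as a collection of tours that convert $r$ into $s$, and obtain $f$ by applying to $r$ only the handful of tours needed to cover $T$. Concretely, I would first form an auxiliary multigraph $M$: for every edge $e$ put $\max(0,s(e)-r(e))$ copies of $e$ into a multiset $A$ and $\max(0,r(e)-s(e))$ copies of the reversal $\overleftarrow{e}$ into a multiset $B$, and set $M=A\cup B$. Since $s$ and $r$ are flow-conserving, $M$ has equal in- and out-degree at every vertex, hence decomposes into a collection $\mathcal{C}$ of closed walks (tours). ``Applying'' a tour $C$ to a function $g$ means incrementing $g(e)$ for each copy of an $A$-edge $e$ on $C$ and decrementing $g(e)$ for each copy of an edge with $\overleftarrow{e}\in B$ on $C$; applying all of $\mathcal{C}$ to $r$ yields exactly $s$, and applying any subset $\mathcal{C}'\subseteq\mathcal{C}$ yields a function $g$ with $\min(r(e),s(e))\le g(e)\le\max(r(e),s(e))$ for all $e$, so nonnegativity and the capacity bounds are preserved for free, and the cost changes additively over the applied tours.

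The crux is to choose $\mathcal{C}$ so that (i) applying any single tour preserves the demand at every $v\in D$, and (ii) every tour is minimal. For (i): at a demand vertex $v$, the equalities $\sum_{(u,v)\in E}s(u,v)=\sum_{(u,v)\in E}r(u,v)=\Dem(v)$ together with the analogous out-degree equalities force the number of $A$-edges of $M$ entering $v$ to equal the number of $B$-edges of $M$ leaving $v$, and symmetrically. So I can choose the transition system (the pairing of incoming with outgoing $M$-edges at each vertex) so that at every $v\in D$ each pass of a tour either enters on an $A$-edge and leaves on a $B$-edge, or enters on a $B$-edge and leaves on an $A$-edge; a direct check then shows such a pass leaves the in-degree of $v$ unchanged, so every $g$ produced above is still a feasible \textsc{Flow} solution. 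For (ii): whenever a tour revisits a non-demand vertex, or makes two passes of the same one of the two types through a demand vertex, the transition system can be locally re-routed to split it into two shorter tours that still satisfy (i); iterating, every tour visits each vertex only a bounded number of times, and in particular uses each edge at most once.

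With such a decomposition in hand, set $f=r$ and, for every $e\in T$ with $r(e)<s(e)$ (note $s(e)\ge 1$ since $e\in supp(s)$, so this covers the case $r(e)=0$), choose one tour of $\mathcal{C}$ through a copy of $e$ and apply it; let $\mathcal{C}'$ be the set of at most $|T|$ tours used and $f$ the resulting function. Then $f(e)\ge 1$ for all $e\in T$: if $r(e)\ge s(e)\ge 1$ the value is unchanged or pushed towards $s(e)$, and otherwise the chosen tour raises it to at least $r(e)+1$. Feasibility of $f$ follows from (i). For the cost: since $r$ is optimal and $r$ plus any single tour is a feasible \textsc{Flow} solution, every tour has nonnegative cost change, and since $\mathcal{C}'\subseteq\mathcal{C}$ while the cost changes of all tours in $\mathcal{C}$ sum to $\Dis(s)-\Dis(r)$, we get $\Dis(f)\le\Dis(s)$. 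For the closeness bound, minimality ensures each applied tour contributes at most $2$ to $\sum_u |r(u,v)-f(u,v)|$ (at most one $A$-edge entering $v$ and at most one $B$-edge leaving $v$), so this sum is at most $2|\mathcal{C}'|\le 2|T|$; the bound for $\sum_u|r(v,u)-f(v,u)|$ is symmetric.

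I expect the main obstacle to be item (ii): checking that the splitting operation can always be performed while keeping the demand-respecting pairing of (i) intact, and confirming that ``minimal'' in this precise sense is exactly what upgrades the bound to a clean $2|T|$ (this minimality step is the addition over \cite{kowalik_et_al:LIPIcs:2020:12932}). Once the decomposition is fixed, feasibility, the cost inequality, and the support condition for $f$ are routine bookkeeping.
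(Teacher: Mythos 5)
Your proposal is correct and follows essentially the same route as the paper's proof: decompose the difference of $s$ and $r$ into tours that switch between added and removed edges at demand vertices, split them until minimal so each tour contributes at most $2$ to each degree sum at every vertex, apply at most $|T|$ tours of the decomposition to $r$ to cover $T$, and bound the cost via the nonnegativity of each tour's cost change (by optimality of $r$). The only differences are cosmetic — you reverse the deficit edges to get a directed Eulerian multigraph instead of working with undirected tours and orientations, and you pick a tour for every $e\in T$ with $r(e)<s(e)$ rather than only for $e\in T\setminus supp(r)$ — neither of which affects correctness or the bound.
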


\begin{proof}
	We follow the structure of the proof of Lemma 3.2 from Kowalik et al.~\cite{kowalik_et_al:LIPIcs:2020:12932}.  
	We build a flow $f$ (not necessarily optimal for \textsc{Flow}), containing $T$ and with multiplicities close to $r$.
	Recall that $m_B$ denotes the multiplicity function of the multiset $B$.
	We define the multisets of edges $A_s$, $A_r$ and $A$ such that for all $e\in E$:
	\begin{itemize}[noitemsep,topsep=2pt]
		\item $m_{A_s}(e) = \max \{s(e)-r(e),0\}$, 
		\item $m_{A_r}(e) = \max\{r(e)-s(e),0\}$, and
		\item $m_A(e) = \max\{m_{A_r}(e),m_{A_s}(e)\} = \max\{s(e)-r(e), r(e)-s(e)\}$.
	\end{itemize} 
	
	Note that $A$ is the symmetric difference of $s$ and $r$, and therefore any $e \in A$, is exactly either in $A_r$ or in $A_s$, but never in both.
	
	Let $H$ be a tour (i.e. a closed walk) of undirected edges. We then say that $\overr{H}$ is a \emph{cyclic orientation} of $H$ if it is an orientation of the edges in $H$ such that $\overr{H}$ forms a directed tour. A directed edge $e$ that overlaps with $H$ is in \emph{positive orientation} with respect to $\overr{H}$ if it has the same orientation, and negative otherwise. We now define $(s-r)$ directed tours, of which an example is shown in Figure~\ref{fig:srdirected}. 
	\begin{definition}
		Let $C=(e_0,\dots,e_{\ell})\subseteq A$ be a set of edges such that its underlying undirected edge set $H$ is a tour. We then say that $C$ is an \emph{$(s-r)$ directed tour} if there is an orientation $\overr{H}$ of $H$ such that:
		  	\begin{itemize}[noitemsep,topsep=2pt]
		  	\item if $e \in C$ is in positive orientation with respect to $\overr{H}$, then $e \in A_s$,
		  	\item if $e \in C$ is in negative orientation with respect to $\overr{H}$, then $e \in A_r$,
		  	\item if two subsequent edges $e_{i},e_{i+1}$ of $C$ have the same orientation, then their shared vertex, $v$, is not in $D$. This also holds for the edge pair $(e_\ell,e_0)$.
		  \end{itemize} 
	\end{definition}
	
	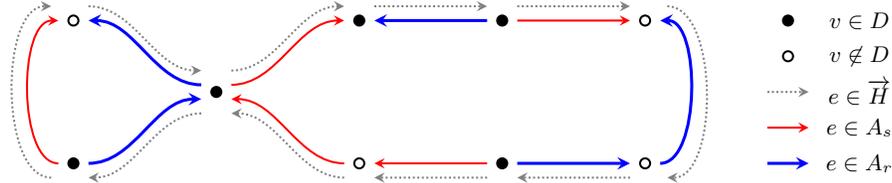
\begin{figure}[ht!]
		\centering
		\scalebox{0.95}{\begin{tikzpicture}[stealth-stealth, shorten >=6pt, shorten <=6pt,thick ]

	

	
	\filldraw[black] (0,0) circle (2pt);
	\draw[black] (-2,1) circle (2pt);
	\filldraw[black] (-2,-1) circle (2pt);
	
	\filldraw[black] (2,1) circle (2pt);
	\draw[black] (2,-1) circle (2pt);
	\filldraw[black] (4,1) circle (2pt);
	\filldraw[black] (4,-1) circle (2pt);
	\draw[black] (6,1) circle (2pt);
	\draw[black] (6,-1) circle (2pt);
	
	\draw[blue, very thick, -stealth] (0,0.1) to[out = 180, in = 0] (-2,1);
	\draw[red, -stealth] (-2,-1) to[out = 180, in = 180] (-2,1);
	\draw[blue, very thick, -stealth] (-2,-1) to[out = 0, in = 180] (0,-0.1);
	\draw[red, -stealth] (2,-1) to[out =180, in = 0] (0,-0.1);
	\draw[red, -stealth] (4,-1) to[out =180, in = 0] (2,-1);
	\draw[blue, very thick, -stealth] (4,-1) to[out =0, in = 180] (6,-1);
	\draw[blue, very thick, -stealth] (6,-1) to[out =0, in = 0] (6,1);
	\draw[red, -stealth] (4,1) to[out = 0, in = 180] (6,1);
	\draw[blue, very thick, -stealth] (4,1) to[out =180, in = 0] (2,1);
	\draw[red, -stealth] (0,0.1) to[out =0, in = 180] (2,1);
	
	\draw[gray, densely dotted, -stealth] (-2,1.2) to[out = 0, in = 180]  (0,0.3);
	\draw[gray,densely dotted, -stealth] (-2.1,-1.2) to[out = 180, in = 180] (-2.1,1.2);
	\draw[gray,densely dotted, -stealth] (0,-0.3) to[out = 180, in = 0] (-2,-1.2);
	\draw[gray,densely dotted, -stealth] (2,-1.2) to[out =180, in = 0] (0,-0.3);
	\draw[gray,densely dotted, -stealth] (4,-1.2) to[out =180, in = 0] (2,-1.2);
	\draw[gray,densely dotted, -stealth] (6,-1.2) to[out = 180, in = 0] (4,-1.2);
	\draw[gray,densely dotted, -stealth] (6.1,1.2) to[out =0, in = 0] (6.1,-1.2);
	\draw[gray,densely dotted, -stealth] (4,1.2) to[out = 0, in = 180] (6,1.2);
	\draw[gray,densely dotted, -stealth] (2,1.2) to[out =0, in = 180] (4,1.2);
	\draw[gray,densely dotted, -stealth] (0,0.3) to[out =0, in = 180] (2,1.2);

	\filldraw[black] (8,1) circle (2pt);
	\draw[black] (8,.5) circle (2pt);
	\node[] at (9,1) {$v \in D$};
	\node[] at (9,.5) {$v \not\in D$};
	
	\draw[gray,densely dotted, -stealth] (7.5,0) -- (8.5,0);
	\draw[blue, very thick, -stealth] (7.5,-1) -- (8.5,-1);
	\draw[red, -stealth] (7.5,-.5) -- (8.5,-.5);
	\node[] at (9,-1) {$e \in A_r$};
	\node[] at (9,-.5) {$e \in A_s$};
	\node[] at (9,0) {$e \in \overr{H}$};
	
\end{tikzpicture}}
		\caption{Example of an $(s-r)$ directed tour. Note that every time the tour visits a vertex $v \in D$, the orientation changes. However if the tour visits a vertex $v \not \in D$, the orientation might not change.}
		\label{fig:srdirected}
	\end{figure}
	
	We give a construction such that $A$ can be partitioned  into a multiset of $(s-r)$ directed tours. We take $(u,v)\in A$ arbitrarily as our first edge of our walk and iteratively add edges until we find an $(s-r)$ directed tour. We assume the current edge $(u,v)$ is in $A_s$ (if the edge is in $A_r$, the arguments are similar).   
	If $v\in D$, then there exists $(v,w) \in A_r$, because $v$ is visited $\Dem(v)$ times by both $r$ and $s$. If $v\not \in D$, there exists either $(v,w)\in A_r$ or $(w,v)\in A_s$ because $A$ is the symmetric different of the flows $r$ and $s$. We take this edge as the next edge in our $(s-r)$ directed tour. This way we can keep finding the next edges, until we can take our first edge $(u,v)$ as our next edge and we find an $(s-r)$ directed tour. 
	We then remove this tour and inductively find the next until $A$ is empty.
		
	It follows that $A$ can be partitioned into a multiset $\mathcal{C}$ of $(s-r)$ directed tours, i.e. 
	$$ m_A = \sum_{C\in \mathcal{C}} m_C,$$ where $m_C$ is the multiplicity of $(s-r)$ directed tour $C$. 

	We may assume that these $(s-r)$ directed tours are inclusion-wise minimal, i.e. for each $(s-r)$ directed tour $C \in \mathcal{C}$, no subset $C' \subset C$ is an $(s-r)$ directed tour. 
	Otherwise, $C$ can be split into two disjoint $(s-r)$ directed tours $C'$ and $C\setminus C'$. 

	\begin{numclaim}
	For any $v\in V$ and any inclusion-wise minimal $C\in\mathcal{C}$:
	\begin{equation}\label{eq:$(s-r)$ directedtour}
		\sum_{u\in V}[(u,v) \in C] \le 2 \qquad \text{ and } \qquad  \sum_{u\in V}[(v,u) \in C] \le 2.
	\end{equation}	
	\end{numclaim}
\begin{subproof}
	We only prove the first inequality. The second inequality can be proved with an analogous argumentation.
	Assume not, i.e. assume there exists $C\in\mathcal{C}$ and $v\in V$ such that there exist $x_1,x_2,x_3 \in V$ with $(x_i, v) \in C$ for $i=1,2,3$. Each of these edges must be either in $A_s$ or $A_r$. Assume without loss of generality that $(x_1,v),(x_2,v) \in A_s$. (We will only need the fact that at least two of these edges are either both in $A_r$ or both in $A_s$. The case of at least two edges in $A_r$ has equivalent reasoning.)
	Both $(x_1,v)$ and $(x_2,v)$ can be paired with the edge it traverses $v$ with, i.e. its subsequent edge in the tour, as $(x_1,v),(x_2,v) \in A_s$ are positively oriented. Let $e_1, e_2$ be these subsequent edges. Then note that $C$ can be split into two smaller $(s-r)$ directed tours $C_1$ and $C_2$, with $C_1$ starting with edge $e_1$ and ending with $(x_2,v)$, and $C_2$ starting with edge $e_2$ and ending with $(x_1,v)$. This contradicts the assumption that $C$ was inclusion-wise minimal.  
	\end{subproof}

	We denote $T^+ = E(T)\setminus \supp(r)$ as the set of edges of $T$ that are not yet covered by $r$. 
	Hence, if $e\in T^+$, then $e \in A_s$ and there is at least one $C \in \mathcal{C}$ that contains $e$. 
	We choose for each $e \in T^+$ such an $(s-r)$ directed tour $C_e \in \mathcal{C}$ arbitrarily. 
	Let $\mathcal{C}^+=\{C_e:e \in T^+\}$ be the set of chosen $(s-r)$ directed tours. 
	We define $f$ as follows: for each $u,v \in V$ we set
	\begin{equation} \label{eq:f}
		f(u,v) = r(u,v) + (-1)^{[(u,v)\in A_r]} \sum_{C \in \mathcal{C}^+} [(u,v) \in C]. 
	\end{equation}
	In other words, $f$ is obtained from $r$ by removing one copy of edges in $C \cap A_r$ and adding one copy of edges in $C \cap A_s$ for all $C \in \mathcal{C}^+$. 

	Notice that $|\mathcal{C}^+| \le |T^+| \le |T|$.  By using~\eqref{eq:f} and subsequently~\eqref{eq:$(s-r)$ directedtour}, we get for all $v \in V$:
	\begin{align*}
		\sum_{u \in V} |r(u,v) -f(u,v)| &\le \sum_{u\in V}\sum_{C\in\mathcal{C}^+} [(u,v)\in C] \\
		&= \sum_{C\in\mathcal{C}^+}\sum_{u\in V} [(u,v)\in C]\\
		&\le \sum_{C\in\mathcal{C}^+} 2  \le 2|T|.
	\end{align*}
	Similarly we can conclude for all $v\in V$ that $\sum_{u \in V} |r(v,u) - f(v,u)| \le 2|T|$.

	\begin{numclaim} \label{claim:fisflow}
		For all $e\in T$, $f(e)>0$ and $f$ is a flow for the given instance.
	\end{numclaim}
		\begin{subproof}
		We first show that for all $e \in E$: 
		\[\min\{r(e),s(e)\} \le f(e) \le  \max\{r(e),s(e)\} .\]
		If $e \not \in A$, equation~\eqref{eq:f} implies that $r(e) = f(e) = s(e)$. 
		If $e \in A_s$, by definition $r(e)<s(e)$ and we can see from~\eqref{eq:f} that if the multiplicity of $e$ changes, it is because copies of $e$ are added to $r$ to form $f$ (and none are removed). 
		Because $m_A(e) \le s(e) - r(e)$, at most this many copies of $e$ can be added to $r$ to form $f$.  Hence $r(e) \le f(e) \le r(e) + (s(e)-r(e)) = s(e)$. Similarly, if $e \in A_r$, $r(e) > s(e)$ and at most $r(e) - s(e)$ copies of $e$ are removed from $r$ to form $f$. 
		
		Next we prove that $f$ is an allowed solution to the given \textsc{Flow} instance. 
		Let $C \in \mathcal{C}^+$ and let $e,e'$ be two subsequent edges from $C$ with common vertex $v$. If $e \in A_s$ and $e' \in A_r$, then the in- and out-degrees of $v$ do not change while adding a copy of $e$ and removing a copy of $e'$ (in equation~\eqref{eq:f}), because the orientation of $e$ and $e'$ is different. This is also true if $e\in A_r$ and $e' \in A_s$.
		If $e, e' \in A_r$, then both $e$ and $e'$ have a copy removed in equation~\eqref{eq:f}. Since the orientation of $e$ and $e'$ is the same, both the in- and out-degree of $v$ go down by one. We remark that this situation only happens if $v \not\in D$ by definition of $(s-r)$ directed tours. Similarly, if $e, e' \in A_s$, the in- and out-degree of $v$ increases by one. Since $r$ was an allowed solution, this implies that the number of incoming- and outgoing edges of $v$ in $f$ are equal, in other words, the flow is preserved. 
		Since for $v \in D$, the total incoming (and total outgoing) edges do not change, the demands are satisfied by $f$. Furthermore, the capacity constraints are satisfied since $ f(e) \le  \max\{r(e),s(e)\} \le \Ca(e)$.
		
		We show that $T\subseteq \supp(f)$. Let $e \in T^+$, then $e \in A_s$ so copies of $e$ are added to $r$ to form $f$ in equation~\eqref{eq:f}. Since $e \in T^+$, at least one tour $C \in \mathcal{C}^+$ contains $e$. Hence, $f(e)>0$. 
		If $e \in T \setminus T^+$, then $r(e)>0$ because $T^+ = T \setminus \supp(r)$ by definition. We also see that $s(e)>0$ because $T\subseteq \supp (s)$ by assumption. Using our earlier result that $\min\{r(e),s(e)\} \le f(e) $, we conclude that $f(e)>0$. 
	\end{subproof}
	
	We are left to prove that $\Dis(f)\le \Dis(s)$. For any $C \in \mathcal{C}$ define $\delta(C) = \Dis(A_s \cap C) - \Dis(A_r \cap C)$ as the cost of adding all edges in $A_s \cap C$ and removing all edges in $A_r \cap C$. Notice that $\delta(C)\ge 0 $ for all tours $C \in \mathcal{C}$, as otherwise $r$ would not have been optimal since we could improve it by augmenting along $C$. We note that $\sum_{C \in \mathcal{C}^+} \delta(C) \le \sum_{C \in \mathcal{C}}\delta(C)$ as $\mathcal{C}^+ \subseteq \mathcal{C}$. Therefore:
	$$\Dis(f) = \Dis(r) + \sum_{C \in \mathcal{C}^+} \delta(C) \le  \Dis(r) + \sum_{C \in \mathcal{C}} \delta(C) = \Dis(s).\vspace{-2.8em}$$
	\end{proof}

\subsection{Fixed Parameter Tractable algorithm}
\label{sec:dp}

Now we use Lemma~\ref{lem:s-rtours} to show that $\textsc{Connected Flow}$  is fixed-parameter tractable parameterized by the size of a vertex cover of $G$:

\begingroup
\def\thetheorem{\ref{thm:XFPT}}
\begin{theorem}
	There is an algorithm solving a given instance $(G,D,\Dem,\Dis,\Ca)$ of \textsc{Connected Flow} such that $G$ has a vertex cover of size $k$ in time $\Os(k^{\Oh(k)})$.
\end{theorem}
\addtocounter{theorem}{-1}
\endgroup

\begin{proof}
	Let $X$ be a vertex cover of size $k$ of $G=(V,E)$, let $s$ be an arbitrary optimal solution of \textsc{Connected Flow} and let $X'\subseteq X$ be the set of vertices of $X$ that are visited at least once by $s$. We will guess this set $X'$ as part of our algorithm, i.e. go through all possible sets. Hence we do the following algorithm for all $X'$ such that $(D \cap X) \subseteq X' \subseteq X$, which is at most $2^k$ times. 

	For any $X'$, we adjust $G$ such that the vertex cover is an independent set and all $x \in X'$ are visited at least once in any solution as follows. 
	We remove any edge $(x_i,x_j)\in E$ for $x_i, x_j \in X'$ and replace this edge by adding a new vertex $y$ to $V$. This $y$ has no demand and has edges $(x_i,y)$ and $(y,x_j)$, with capacities equal to the old capacity $\Ca(x_i,x_j)$ and $\Dis(x_i,y) = \Dis(x_i,x_j)$ and $\Dis(y,x_j) =0$. This removes any edges between vertices in the set $X'$, making it an independent set. We note that $X'$ is still a vertex cover of size $k$. 
	
	The vertices $x \in X'\cap D$ are visited at least once because of their demand. For all $x \in X'\setminus D$ we add a vertex $b_x$ to $V$, with $\Dem(b_x) =1$ and we add edges $(x,b_x)$ and $(b_x,x)$, both with $0$ cost and a capacity of $1$. As $b_x$ has a demand of $1$ and has only $x$ as its neighbor, this ensures that $x$ is visited at least once. 
	
	We remove all $x \in X \setminus X'$ from $V$ and denote the resulting graph as $G'=(V',E')$. Note that if $X'$ is guessed correctly, the optimal solution $s$ of the original instance, is also optimal for this newly created instance (by adding flow over the newly created edges between $x$ and $b_x$, and replacing any edge $(x_i,x_j)$ by the edges $(x_i,y)$ and $(y,x_j)$). We compute a \emph{relaxed} solution $r$ for this newly created instance, which can be done in polynomial time. 
	
	Let $T$ be any directed tree of size at most $2k$ such that $T \subseteq \supp (s)$ and all $x \in X'$ are incident to at least one edge $e \in T$. We argue why such tree $T$ exists. Since $s$ is connected and visits all $x\in X'$, we can find a tree $T\subseteq \supp(s)$ that covers all $x\in X'$. If $|T| > 2k$, we remove all the leaves from $T$ not in $X'$. Since $V'\setminus X'$ is an independent set (as $X'$ is a vertex cover), this means that the size of $T$ is bounded by $2k$. Note that all $x \in X'$ are still incident to an edge $e\in T$. 
	
	We apply Lemma~\ref{lem:s-rtours}, to $s$ and $T$ to find that there is a flow $f$ such that $\Dis(f) \le \Dis(s)$ and for every $v \in V$:
	\begin{equation}
		\label{eq:lemma}
		\sum_{u \in V'} |r(u,v) - f(u,v)| \le 4k, \qquad \text{ and } \qquad \sum_{u \in V'} |r(v,u) - f(v,u)| \le 4k.
	\end{equation}
	Since $T\subseteq \supp(f)$, $f$ visits all the vertices in $X'$ at least once. As $X'$ is a vertex cover, this means that $f$ is a connected flow and hence an optimal solution of the instance of \textsc{Connected Flow}.
	We will use a dynamic programming method to find solution $f$. Namely, we iteratively add vertices from the independent set $B=V'\setminus X'$ and keep track of the connectedness of our vertex cover $X'$ with a partition $\pi$. We later will restrict the number of table entries we actually compute with the help of equation~\eqref{eq:lemma}. 
	
	Denote $X' =\{x_1,\dots,x_{k'}\}$ and let $B=\{b_1,\dots,b_n\}$. 
	For $j \in [0,n]$ let $B_j$ be the set of the first $j$ vertices of $B$, i.e. $B_j=\{b_1,\dots, b_j \}$ and define $V_j = X' \cup B_j$. 
	For any $f:(V_j)^2 \to \N$ and $v \in V_j$ define \[f^{\Out}(v) = \sum_{u\in V_j}f(v,u) \,\,\text{ and }\,\, f^{\In}(v) = \sum_{u\in V_j}f(u,v).\] Let $\mathbf{c}^{\In} = (c_1^{\In},\dots,c_{k'}^{\In}) \in \N^{k'}$ and $\mathbf{c}^{\Out} = (c_1^{\Out},\dots,c_{k'}^{\Out}) \in \N^{k'}$ be two vectors of integers and let $\pi$ be a partition of the vertices of $X'$. 
	
	For $j\in [0,n]$ we define the dynamic programming table entry	$T_j(\pi,\mathbf{c}^{\In},\mathbf{c}^{\Out})$ to be equal to the minimal cost of any partial solution $f:(V_j)^2 \to \N$ having the specified in and out degrees ($\mathbf{c}^{\In}$ and $\mathbf{c}^{\Out}$) for vertices in $X'$ and connecting all vertices $x \in S$ for each $S \in \pi$.
	More formally, $T_j(\pi,\mathbf{c}^{\In},\mathbf{c}^{\Out})$ is equal to $\min_{f}\Dis(f)$ over all $f:(V_j)^2\to \N$ such that the following conditions hold:
	\begin{enumerate}\setlength\itemsep{0em}
	\item for all blocks $S$ of the partition $\pi$, the block is weakly connected in $G'_{f}$, 
	\item for all $x_i \in X'$: $f^{\Out}(x_i) = c_i^{\Out}$, $f^{\In}(x_i) = c_i^{\In}$,
	\item for all $v \in B_j$: $f^{\Out}(v) = f^{\In}(v)$, and if $v \in B_j\cap D$, then $f^{\In}(v) = \Dem(v)$,
	\item for all $u,v\in V_j: f(u,v) \le \Ca(u,v)$.
	\end{enumerate}
	We set $T_j(\pi,\mathbf{c}^{\In},\mathbf{c}^{\Out}) = \infty$ if no such $f$ exists. 
		
	\begin{numclaim}\label{claim:DP}
		Each table entry $T_j(\pi,\mathbf{c}^{\In},\mathbf{c}^{\Out})$ can be computed from all table entries $T_{j-1}$.
	\end{numclaim}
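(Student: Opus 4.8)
The plan is to show that the dynamic programming table for step $j$ can be filled by iterating over the choices associated with the newly added vertex $b_j$, i.e. the vector of multiplicities of edges between $b_j$ and the vertices of $X'$ (in both directions), together with the flow-conservation / demand constraint at $b_j$, and then reading off the appropriate entry of $T_{j-1}$ after adjusting the degree vectors $\mathbf{c}^{\In},\mathbf{c}^{\Out}$ and the partition $\pi$. Since $b_j$ lies in the independent set $B'$, it has no edges to other vertices of $B'$, so all edges incident to $b_j$ in $G'_f$ go to or come from $X'$; hence a partial solution $f$ on $V'_j$ restricts to a partial solution $f'$ on $V'_{j-1}$ by simply deleting $b_j$ and the incident edges, and conversely any $f'$ on $V'_{j-1}$ extends to some $f$ on $V'_j$ by choosing these incident edge multiplicities.

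Concretely, I would fix, for the added vertex $b_j$, two vectors $\mathbf{d}^{\In}=(d_1^{\In},\dots,d_{k'}^{\In})$ and $\mathbf{d}^{\Out}=(d_1^{\Out},\dots,d_{k'}^{\Out})$ recording $d_i^{\In}=f(x_i,b_j)$ and $d_i^{\Out}=f(b_j,x_i)$. These must satisfy $d_i^{\In}\le \Ca(x_i,b_j)$ and $d_i^{\Out}\le \Ca(b_j,x_i)$ (condition~4 at $b_j$), and they must satisfy $\sum_i d_i^{\In}=\sum_i d_i^{\Out}$, and additionally $\sum_i d_i^{\In}=\Dem(b_j)$ if $b_j\in D$ (condition~3 at $b_j$). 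Given such a choice, the entry $T_j(\pi,\mathbf{c}^{\In},\mathbf{c}^{\Out})$ can attain the value $\Dis(\text{edges at }b_j)+T_{j-1}(\pi',\mathbf{c}^{\In}-\mathbf{d}^{\Out},\mathbf{c}^{\Out}-\mathbf{d}^{\In})$, where the cost of the edges at $b_j$ is $\sum_i \big(d_i^{\Out}\Dis(b_j,x_i)+d_i^{\In}\Dis(x_i,b_j)\big)$, the degree vectors are decremented because the edges at $b_j$ contribute $d_i^{\Out}$ to $f^{\In}(x_i)$ and $d_i^{\In}$ to $f^{\Out}(x_i)$, and $\pi'$ ranges over all partitions of $X'$ that coarsen to $\pi$ once we merge together all blocks containing some $x_i$ with $d_i^{\In}+d_i^{\Out}>0$ (the vertex $b_j$ together with its incident edges glues those blocks into one block of $\pi$, but contributes nothing else to connectivity). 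Taking the minimum of this expression over all valid $(\mathbf{d}^{\In},\mathbf{d}^{\Out})$ and all such $\pi'$ yields $T_j(\pi,\mathbf{c}^{\In},\mathbf{c}^{\Out})$; I would also note that entries with negative decremented degree vectors are simply ignored (treated as $\infty$).

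The correctness argument is a standard two-way inequality. For the $\le$ direction, given any $f'$ witnessing the value on the right-hand side, adding $b_j$ with the chosen incident edges produces an $f$ on $V'_j$ satisfying all four conditions with the claimed cost, so $T_j$ is at most the RHS. For the $\ge$ direction, given an optimal $f$ witnessing $T_j(\pi,\mathbf{c}^{\In},\mathbf{c}^{\Out})$, set $\mathbf{d}^{\In},\mathbf{d}^{\Out}$ to be the multiplicities of the edges incident to $b_j$ and let $f'$ be $f$ restricted to $(V'_{j-1})^2$; one checks that $f'$ satisfies conditions~1--4 at level $j-1$ for the decremented degree vectors and for the partition $\pi'$ obtained from $\pi$ by splitting off the contribution of $b_j$, and that $\Dis(f)=\Dis(f')+\sum_i(d_i^{\Out}\Dis(b_j,x_i)+d_i^{\In}\Dis(x_i,b_j))$. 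The only subtle point — and the main obstacle to state cleanly — is the bookkeeping of the partition: one must argue that $b_j$'s only effect on weak connectivity among blocks of $X'$ is to merge exactly the blocks of $X'$ touched by a nonzero incident edge, and therefore that the map $\pi'\mapsto\pi$ (merge all blocks meeting the support of $(\mathbf{d}^{\In},\mathbf{d}^{\Out})$ into one) captures precisely the relationship needed, so that summing over preimages $\pi'$ is correct. I would phrase this carefully, noting $b_j$ has degree $0$ in $G'_{f'}$ exactly when $\mathbf{d}^{\In}=\mathbf{d}^{\Out}=\mathbf{0}$, in which case $\pi'=\pi$. Everything else is routine, and the running-time bound coming out of this recursion (polynomially many $\mathbf{d}$-choices once we use~\eqref{eq:lemma} to bound the relevant degrees, and $k^{\Oh(k)}$ partitions) is handled in the surrounding text rather than in the claim's proof.
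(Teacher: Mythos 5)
Your proposal is correct and follows essentially the same route as the paper: enumerate the multiplicities of the edges between the newly added independent-set vertex $b_j$ and $X'$, enforce flow conservation, demand and capacity at $b_j$, decrement the degree vectors $\mathbf{c}^{\In},\mathbf{c}^{\Out}$ accordingly, add the cost of the new edges, and minimize over the coarser-to-finer partition pairs $\pi'\prec\pi$ in which exactly the blocks touched by $b_j$'s incident edges are merged. Your bookkeeping of the in/out directions relative to $b_j$ rather than relative to $x_i$ is only a notational difference, and your explicit treatment of the degree-zero case and the two-way inequality matches (indeed slightly elaborates on) the paper's argument.
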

	\begin{subproof}
		Compute table entries for $j=0$ as follows. Set $T_0(\{\{x_1,\},\dots,\{x_{k'}\}\},\mathbf{0},\mathbf{0})$ to $0$, and all other entries of $T_0$ to $\infty$, as $V_0 = X'$ is an independent set and so a flow of zero on every edge is the only possible flow. 
		
		Now assume $j> 0$.
		We compute the values of $T_{j}(\pi,\mathbf{c}^{\In},\mathbf{c}^{\Out})$ as the minimum of the following value over all suitable $\mathbf{h}^{\In}_i = (h^{\In}_1,\dots,h^{\In}_{k'}) \in \N^{k'}$, $\mathbf{h}^{\Out}_i = (h^{\Out}_1,\dots,h^{\Out}_{k'}) \in \N^{k'}$, and all suitable partitions $\pi'$ of $X'$:
		\[T_{j-1}(\pi',\mathbf{c}^{\In}-\mathbf{h}^{\In},\mathbf{c}^{\Out}-\mathbf{h}^{\Out}) + \sum_{i=1}^{k'} \left( h^{\In}_i \cdot \Dis(b_{j},x_i) + h^{\Out}_i\cdot \Dis(x_i,b_{j})\right).
		\]
		Here we interpret $h^{\In}_i$ as the multiplicity of the edge $(b_j,x_i)$ and $h^{\Out}_i$ as the multiplicity of the edge $(x_i,b_j)$. Therefore, we require $h^{\In}_i \le \Ca(b_j,x_i)$ and $h^{\Out}_i \le \Ca(x_i,b_j)$ so that the capacity constraints hold. Furthermore, we require that the solution is flow preserving in $b_j$, i.e. $\sum_{i =1}^{k'} h^{\In}_i = \sum_{i =1}^{k'} h^{\Out}_i$ and $\sum_{i =1}^{k'} h^{\In}_i = \Dem(b_j)$ if $b_j \in D$. 
		For $\pi'$ we require for all $S\in \pi$ that either $S \in \pi'$ or there exist $S'_1, \dots S'_\ell \in \pi'$ such that $S_1'\cup \dots \cup S'_\ell = S$ and $S'_1,\dots,S'_\ell$ are all connected to $b_j$. This latter can be formalized by requiring that for each $t\in[1,\ell]$, there is an $x_i\in S'_t$ such that $h^{\In}_i + h^{\Out}_i >0$. 
		
		Notice that with this recurrence, the table entries are computed correctly as only the vertex $b_j$ was added, compared to the table entries $T_{j-1}$. Therefore we may assume that only the edges incident to $b_j$ were added to another solution for some table entry in $T_{j-1}$. 
	\end{subproof}
		
	We restrict this dynamic program using equation \eqref{eq:lemma}. As $X'$ is an independent set, there are only edges between $x\in X'$ and $b\in B_j$. Therefore, there exists a solution $f$ such that for every $x \in X'$ and $j \in [0,n]$:
	\begin{equation}
		\label{eq:lemmacor}
		\sum_{b \in B_j} |r(b,x) - f(b,x)| \le 4k, \qquad \text{ and } \qquad \sum_{b \in B_j} |r(x,b) - f(x,b)| \le 4k
	\end{equation}	
	
	 We restrict the dynamic program to only compute table entries $T_j$ respecting equation \eqref{eq:lemmacor}, by requiring for all $i\in[1,{k'}]$:
	\begin{equation}\label{eq:restrict}
		\begin{aligned}
			c_i^{\Out} &\in \left[\sum_{b \in B_j} r(x_i,b) - 4k, \sum_{b \in B_j} r(x_i,b) + 4k \right], \text{ and } \\
			c_i^{\In} &\in \left[\sum_{b \in B_j} r(b,x_i) - 4k, \sum_{b \in B_j} r(b,x_i) + 4k \right].
		\end{aligned}
	\end{equation}
	Note that the dynamic program is still correct with this added restriction, as $\sum_{b \in B_{j-1}} |r(b,x) - f(b,x)| \le \sum_{b \in B_{j}} |r(b,x) - f(b,x)| \le 4k$, so any table entry $T_j$ respecting equation~\eqref{eq:restrict} can be computed from all table entries $T_{j-1}$ respecting equation~\eqref{eq:restrict}.

	The dynamic program returns the minimum value of $T_n(\{X'\},\mathbf{c},\mathbf{c})$ for all $\mathbf{c}$ such that $c_i = \Dem(x_i)$ for all $x_i \in D\cap X'$. This returns the value of a minimum cost solution $f$ for $G'$, respecting equation~\eqref{eq:lemmacor}, if one exists. Let $f_{X'}$ be solution the dynamic program found in the iteration using $X'$. Then $\min\{f_{X'}: (D\cap X) \subseteq X' \subseteq X\}$ is equal to the minimum cost connected flow.  
	
	We count the number of different table entries $T_j$ computed by the dynamic program for fixed $j$. There are at most $(8k)^{k}$ possible values for both $\mathbf{c^{\In}}$ and $\mathbf{c^{\Out}}$ and at most $k^k$ different partitions $\pi$ of $X'$, so a total of $k^k\cdot(8k)^{2k}$ different entries. To compute one table entry of $T_j$, we only need (the at most $k^k\cdot (8k)^{2k}$) table entries of $T_{j-1}$. Note that we compute this dynamic programming table for each $X'$ such that $(D \cap X) \subseteq X' \subseteq X$, that is at most $2^k$ different $X'$. Hence the algorithm runs in time $\Os(k^{\Oh(k)})$.
\end{proof}

\subsection{Kernel for \textsc{Many Visits TSP} with $\Oh(k^5)$ vertices}
\label{sec:polyker}
We now present how to find a kernel with $\Oh(k^5)$ vertices for any instance of \textsc{MVTSP}, where $k$ is the size of a vertex cover of $G$. We do this by first finding an optimal solution $r$ to the relaxed \textsc{Flow} problem and then fixing the amount of flow on some edges based on this $r$. We prove that there is an optimal solution $s$ of \textsc{MVTSP} such that for all except $\Oh(k^5)$ vertices, all edges incident to these vertices have exactly the same flow in $r$ and $s$, as a consequence of Lemma~\ref{lem:s-rtours}. 

\begingroup
\def\thetheorem{\ref{thm:Xkernel}}
\begin{theorem}
	\textsc{MVTSP} admits a kernel polynomial in the size $k$ of the minimum vertex cover of $G$. 
\end{theorem}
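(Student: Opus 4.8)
The plan is to compute in polynomial time an optimal solution $r$ of the \textsc{Flow} relaxation, to use Lemma~\ref{lem:s-rtours} to show that \emph{some} optimal \textsc{MVTSP} solution coincides with $r$ on all edges except those incident to a set of $\Oh(k^5)$ vertices that can moreover be located explicitly, and then to hard-code the flow values of $r$ on the remaining edges, thereby shrinking the instance to one on $\Oh(k^5)$ vertices.

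I would set up exactly as in the proof of Theorem~\ref{thm:XFPT}: let $X$ be a vertex cover of size $k$ and $B=V\setminus X$, which is an independent set, so every edge of $G$ is incident to $X$. Compute an optimal \textsc{Flow} solution $r$; every connected component of $G_r$ contains a vertex of $X$, so $G_r$ has at most $k$ components, and since every vertex has positive demand, $G_r$ spans $V$. Let $s$ be an optimal \textsc{MVTSP} (i.e.\ \textsc{Connected Flow}) solution and let $T\subseteq supp(s)$ be a minimal tree connecting all of $X$; as for Theorem~\ref{thm:XFPT}, $|V(T)|\le 2k$, at most $k$ of whose vertices lie in $B$. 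Applying Lemma~\ref{lem:s-rtours} to $s$ and $E(T)$ produces a flow $f$ with $\Dis(f)\le\Dis(s)$, with $E(T)\subseteq supp(f)$ — so $G_f$ is connected and $f$ is an optimal \textsc{MVTSP} solution — and with $\sum_{u\in V}|r(u,v)-f(u,v)|\le 4k$ and $\sum_{u\in V}|r(v,u)-f(v,u)|\le 4k$ for every $v\in V$. Summing these inequalities over $v\in X$ and using that every edge is incident to $X$, the number of edges on which $r$ and $f$ disagree is $\Oh(k^2)$; in particular $f$ and $r$ agree on every edge incident to $b$ for all but $\Oh(k^2)$ vertices $b\in B$.

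The main obstacle is that this $\Oh(k^2)$-sized ``difference set'' is defined in terms of the unknown solution $s$ and so cannot be read off directly; I need an \emph{explicit}, polynomial-time computable set $Z\subseteq B$ with $|Z|=\Oh(k^5)$ that may be assumed to contain it. The tool is an exchange argument on the $(s-r)$-directed tours $\mathcal{C}^+$ used to build $f$: by the minimality claim in the proof of Lemma~\ref{lem:s-rtours} each such tour visits every vertex at most twice, so — since $|X|=k$ and every edge of $C$ is incident to $X$ — each tour has $\Oh(k)$ edges and passes through $\Oh(k)$ distinct vertices of $B$, and $|\mathcal{C}^+|\le 2k$. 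Whenever such a tour routes a unit of flow through a vertex $b\in B$ using an incident edge to $x$ and one to $x'$, one may reroute it through a different vertex $b'\in B$ that has edges to $x$ and $x'$: since $b'$ carries its (positive) demand in $r$, it can absorb the change by ``trading'' some of that forced flow, and picking $b'$ of minimum cost among such vertices does not increase $\Dis(f)$. I would therefore let $Z$ consist, for each ordered pair $(x,x')\in X\times X$ (and each single $x\in X$, for the boundary behaviour of a tour), of the cheapest $\Oh(k^3)$ vertices $b\in B$ with the appropriate incident edges; I expect the exponent $3$ here to come from having to keep enough distinct canonical vertices so that all $\Oh(k)$ tours can be rerouted simultaneously without colliding, and making this counting precise is the delicate part. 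This yields $|Z|=\Oh(k^5)$ together with an optimal \textsc{MVTSP} solution that agrees with $r$ on every edge not incident to $Z$.

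Finally I would assemble the kernel on vertex set $X\cup Z$ plus $\Oh(k)$ auxiliary vertices. Fixing the flow of $r$ on all edges incident to $B\setminus Z$ contributes a constant $K=\sum_{b\in B\setminus Z}\sum_{x\in X}\big(\Dis(x,b)r(x,b)+\Dis(b,x)r(b,x)\big)$ to the objective and, at each $x\in X$, a fixed imbalance $\tau_x=\sum_{b\in B\setminus Z}\big(r(x,b)-r(b,x)\big)$ between removed out- and in-flow, with $\sum_{x\in X}\tau_x=0$. Both are simulated by standard demand gadgets: one vertex with a single forced incoming edge of cost $K$ absorbs the constant, and for each $x$ a vertex of demand $|\tau_x|$ on a two-edge path through a common hub forces exactly $\tau_x$ units of flow out of (or into) $x$, so that the resulting instance is again an \textsc{MVTSP} instance whose optimum equals that of the original. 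The output then has $|X|+|Z|+\Oh(k)=\Oh(k^5)$ vertices. Besides the exchange argument of the previous paragraph, the remaining technical point is verifying that these gadgets faithfully encode the prescribed edge multiplicities within the capacity-free \textsc{MVTSP} format.
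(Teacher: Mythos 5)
Your overall strategy is the paper's: compute an optimal \textsc{Flow} relaxation $r$, use Lemma~\ref{lem:s-rtours} with a spanning tree $T$ of $X$ to get an optimal connected solution $f$ differing from $r$ on only $\Oh(k^2)$ edges, identify an explicit set of $\Oh(k^5)$ candidate vertices outside of which $f$ may be assumed to equal $r$, and contract the rest. The first and last steps are fine (the paper's contraction of each remainder class into a single demand vertex with zero-cost edges is simpler than your constant-cost and imbalance gadgets, but both are workable). The gap is in the middle step, which you yourself flag as ``the delicate part'': your definition of $Z$ as, for each pair $(x,x')$, ``the cheapest $\Oh(k^3)$ vertices $b\in B$ with the appropriate incident edges'' does not support the exchange argument. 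When you reroute a unit of flow from $b$ to a candidate $b'$, the vertex $b'$ must give up a unit on some edge it already carries in $r$ in order to keep its demand satisfied, so the relevant quantity is a cost \emph{difference} such as $\Dis(x_l,b')-\Dis(x_i,b')$, where $x_i$ is the vertex from which $b'$ is served in $r$. Ranking candidates by absolute cheapness is therefore not the right criterion, and more importantly the criterion is only well-defined once each $b'\in B$ has a canonical pair $(x_i,x_j)$ serving it in $r$. The paper obtains this by first normalizing $r$ (cancelling alternating cycles so that $supp(r)\cap(X\times B)$ and $supp(r)\cap(B\times X)$ are forests), which forces all but at most $k$ vertices of $B$ into classes $B_{ij}$ with a unique in-neighbour $x_i$ and out-neighbour $x_j$; the candidate sets $\overr{A_{ij}}(l)$ are then the $8k^2+2$ vertices of $B_{ij}$ minimizing $\Dis(x_l,v)-\Dis(x_i,v)$. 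Your proposal contains neither the normalization nor the classification, so the ``exceptional'' vertices that cannot be assigned a canonical pair are unbounded and the cost comparison is not defined.

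A second, smaller issue: you perform the rerouting on the $(s-r)$-directed tours before $f$ is formed, but modifying a tour in $\mathcal{C}^+$ may delete from it the very edge of $T^+$ it was chosen to contain, destroying the connectivity guarantee of Lemma~\ref{lem:s-rtours}; and even exchanges on the final $f$ can disconnect it. The paper handles this by doing the exchange on $f$ directly and reserving a \emph{second} untouched witness vertex $w\in\overr{A_{ij}}(l)$ (hence the $+2$ in $8k^2+2$) whose unchanged edges $(x_i,w)$ and $(w,x_j)$ certify that the modified solution stays connected. Your argument needs an analogous connectivity check.
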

\addtocounter{theorem}{-1}
\endgroup

\begin{proof}
	Fix an input instance on \textsc{MVTSP}. Let $k$ be the number of vertices in the vertex cover $X=\{x_1,\dots, x_k\}$ of $G$ and let $n$ be the size of the independent set $B = V\setminus X$. 
	Let $r$ be an optimal solution of the instance of \textsc{Flow} obtained by relaxing the connectivity constraint from in the given instance of \textsc{MVTSP}. 
	
	Define multisets $\overr{F}= (X \times B)\cap r$ (i.e. all edges in $r$ going from vertices in $X$ to vertices in $B$) and $\overl{F}= (B\times X)\cap r$. 
	
	\begin{numclaim}\label{claim:Fnocycles}
		We may assume that for both $\overr{F}$ and $\overl{F}$, their underlying undirected edge sets do not contain cycles.
	\end{numclaim}
	\begin{subproof}
		We change $r$ such that for both $\overr{F}$ and $\overl{F}$, their underlying undirected edge sets do not contain cycles. Assume that there is an \emph{alternating} cycle $C \subseteq \overl{F}$, meaning that its underlying edge set is a cycle and (hence) the edges alternate between being in positive and negative orientation. We can then create solutions $r'$ and $r''$ of \textsc{Flow} by alternatingly adding and removing edges from $C$. Note that we can start by either adding or removing, giving us these two different solutions $r'$ and $r''$. Since the edges added to $r$ to form $r'$ are exactly the edges that were removed from $r$ to form $r''$, and vice versa, it holds that $\Dis(r)-\Dis(r') = -(\Dis(r)-\Dis(r''))$. Since $r$ is an optimal solution, we conclude $\Dis(r)=\Dis(r')=\Dis(r'')$. We can therefore choose either $r'$ or $r''$ to replace $r$, such that $\overl{F}$ now has one alternating cycle less without changing any of the edges of $r$ outside $C$. Hence we can iteratively remove the cycles from $\overl{F}$ and $\overr{F}$ and obtain an optimal solution $r$ to the \textsc{Flow} instance in which both $\overl{F}$ and $\overr{F}$ are forests in polynomial time.
	\end{subproof}
	
	We partition $B$ as follows: $B=Y \cup \left(\bigcup_{i,j \in [1,k]} B_{ij} \right)$, where for each $b \in B_{ij}$: 
	$r(x_i,b) >0 $, $r(b,x_j) >0$, and
	\[r(x_a,b) = 0 \text{ for all } a \neq i \text{\qquad and \qquad} r(b,x_a) = 0 \text{ for all } a \neq j,\]
	and $Y = B \setminus \left(\bigcup_{i,j \in [1,k]} B_{ij} \right)$. 

	We argue that $|Y|\le k$. Recall that $m_{B}$ denotes the multiplicity function of a multiset $B$. Let $F = \supp(m_{\overl{F}}) \cup \supp(m_{\overr{F}})$ (note that $F$ is a set and not a multiset). Then $|F| \ge \sum_{i,j \in [1,k]} 2 |B_{ij}| + 3 |Y| = 2n + |Y|$, as any vertex $v \in B_{ij}$ must be responsible for exactly $2$ edges in $F$ and each vertex in $Y$ must add at least $3$ edges to $F$. Here we use that each vertex has a demand and therefore must have at least one incoming and outgoing edge from $r$. As $F$ is a union of two forests on $n+k$ vertices, we see that $|F| \le 2(n+k-1)$. We conclude that $2(n + |Y|) \le 2(n+k-1) $, i.e. $|Y| \le k$. 
	
	Let $s$ be an optimal solution of the \textsc{MVTSP} instance, so $s$ visits every vertex at least once. Hence there exists a directed tree $T\subseteq \supp(s)$, covering all vertices of $X$, of size at most $2k$. This tree exists by similar arguments as in the proof of Theorem~\ref{thm:XFPT}. We apply Lemma~\ref{lem:s-rtours} to $s$ and $T$, to find that there exists an optimal solution $f$ to the given \textsc{MVTSP} instance such that 
	\begin{equation}\label{eq:boundonf} \sum_{v\in V}\left(|r(x_i, v)-f(x_i, v)|+ |r(v,x_i)-f(v,x_i)| \right)\leq 8k\qquad \forall i\in [1,k].
	\end{equation}
	We note that $G_f$ is connected because $T \subseteq \supp(f)$ and $T$ connects all the vertices of the vertex cover.
	Equation~\eqref{eq:boundonf} implies that at most $8k^2$ edges of $\overl{F}$ and $\overr{F}$ are different in an optimal solution $f$ of \textsc{MVTSP} that is close compared to $r$.
		
	For every $i,j,\ell\in [1,k]$, we define $\overr{A_{ij}}(\ell)$ as the set of $8k^2+2$ vertices $v \in B_{ij}$ with the smallest values of $\Dis(x_\ell,v) -\Dis(x_i,v)$ (arbitrarily breaking ties if needed). 
	Intuitively, the vertices in $\overr{A_{ij}}(\ell)$ are the vertices for which re-routing the flow sent from $x_i$ to $v$ to go from $x_\ell$ to $v$ is the least expensive.
	Similarly we define $\overl{A_{ij}}(\ell)$ as a set of size $8k^2+2$ containing vertices $v \in B_{ij}$ with the smallest values of $\Dis(v,x_\ell) -\Dis(v,x_j)$.

	We also define a set $R_{ij}$ of `remainder vertices' as follows:
	$$R_{ij}=B_{ij}\setminus\left(\left(\bigcup_{\ell \in [1,k]} \overl{A_{ij}}(\ell)\right)\cup \left(\bigcup_{\ell \in [1,k]} \overr{A_{ij}}(\ell) \right)\right) \text{ for all  }i,j\in[1,k].$$ 
	
	\begin{numclaim}\label{claim:kernel}
		There exists an optimal solution $f'$ of the \textsc{MVTSP} instance such that for all $i,j \in [1,k]$, $b \in R_{ij}$ and $x_\ell \in X$ it holds that $r(x_\ell,b) = f'(x_\ell,b)$ and $r(b,x_\ell) = f'(b,x_\ell)$. 		
	\end{numclaim}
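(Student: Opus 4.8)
The plan is to start from the near-optimal flow $f$ produced by Lemma~\ref{lem:s-rtours} (applied to an optimal solution $s$ and a tree $T$ with $X\subseteq V(T)$, $E(T)\subseteq supp(s)$, $|V(T)|\le 2k$): this $f$ is a connected, hence optimal, \textsc{MVTSP} solution with $W(f):=\sum_{i\in[k]}\sum_{v\in V}\bigl(|r(x_i,v)-f(x_i,v)|+|r(v,x_i)-f(v,x_i)|\bigr)\le 8k^2$. Let $\mathcal{F}$ be the set of all optimal \textsc{MVTSP} solutions $g$ with $W(g)\le 8k^2$; it contains $f$, and the plan is to pick $f'\in\mathcal{F}$ minimising the ``remainder defect'' $\Phi(g):=\sum_{i,j\in[k]}\sum_{b\in R_{ij}}\sum_{x_l\in X}\bigl(|r(x_l,b)-g(x_l,b)|+|r(b,x_l)-g(b,x_l)|\bigr)$ and to show $\Phi(f')=0$, which is exactly the claim.

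First I would describe the repair step. Suppose $\Phi(f')>0$; then some $b\in R_{ij}$ disagrees with $r$ on an incident edge, and by the symmetry between in- and out-edges (using $\overl{A_{ij}}(\cdot)$ in the out-case) I may assume $f'(x_l,b)>0$ for some $l\ne i$, whence $f'(x_i,b)<\Dem(b)=r(x_i,b)$ since $\Dem(b)$ is the whole in-degree of $b$. Pick $v^*\in\overr{A_{ij}}(l)$ and perform the alternating $4$-cycle exchange: decrease $f'(x_l,b)$ and $f'(x_i,v^*)$ by one, and increase $f'(x_i,b)$ and $f'(x_l,v^*)$ by one. This preserves all in-/out-degrees (hence flow conservation and demands), and capacities are infinite. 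Its cost change is $\bigl(\Dis(x_i,b)-\Dis(x_l,b)\bigr)+\bigl(\Dis(x_l,v^*)-\Dis(x_i,v^*)\bigr)\le 0$, because $v^*\in\overr{A_{ij}}(l)$ while $b\in B_{ij}\setminus\overr{A_{ij}}(l)$ yields $\Dis(x_l,v^*)-\Dis(x_i,v^*)\le\Dis(x_l,b)-\Dis(x_i,b)$ by the definition of $\overr{A_{ij}}(l)$; optimality of $f'$ then forces the cost to stay equal. The four modified entries change $W$ by $-1,-1,+1,+1$, so $W$ is unchanged, while $\Phi$ drops by exactly $2$ (both edges at the remainder vertex $b$ move toward $r$, and $v^*$ is not a remainder vertex). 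Hence, as long as the new solution is still connected, it lies in $\mathcal{F}$ and contradicts the minimality of $\Phi(f')$.

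The hard part, which I expect to be the main obstacle, is choosing $v^*$ so that the exchange preserves connectivity. I would use the observation that contracting $x_i$ and $x_l$ turns the exchange into the identity operation on the contracted multigraph, so the modified graph is connected if and only if $x_i$ and $x_l$ stay in one component of it. To guarantee this, note that a vertex $v\in\overr{A_{ij}}(l)$ with $f'(x_i,v)=0$ wastes $\ge\Dem(v)\ge 1$ of the budget $W(f')\le 8k^2$ on the edge $(x_i,v)$, and one with $f'(v,x_j)=0$ wastes $\ge 1$ on the distinct edge $(v,x_j)$; since $|\overr{A_{ij}}(l)|=8k^2+2$, at least two of its vertices satisfy both $f'(x_i,v)\ge 1$ and $f'(v,x_j)\ge 1$. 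Picking such a $v^*$, and having first carried out the symmetric out-repairs so that $b$ already agrees with $r$ on its out-edges (i.e.\ $f'(b,x_j)\ge 1$), the exchanged graph contains the edges $(x_i,b),(b,x_j),(v^*,x_j),(x_l,v^*)$ and hence the path $x_i\!-\!b\!-\!x_j\!-\!v^*\!-\!x_l$, which keeps $x_i,x_l$ connected. The remaining delicate points I would have to pin down are: the ordering of the in- and out-repairs so that the required ``the other endpoint of $b$ already agrees with $r$'' hypothesis always holds (with, if necessary, a combined six-edge exchange for remainder vertices that disagree with $r$ on both sides), and the verification that no repair ever reintroduces a remainder-vertex disagreement, so that the procedure terminates with $\Phi(f')=0$ as claimed.
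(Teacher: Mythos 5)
Your overall strategy is the same as the paper's: the $4$-edge exchange at a remainder vertex $b\in R_{ij}$ with a vertex of $\overr{A_{ij}}(l)$, the cost comparison coming straight from the definition of $\overr{A_{ij}}(l)$ (with $b\notin\overr{A_{ij}}(l)$), and the budget argument that the weighted difference to $r$ stays at most $8k^2$ while $|\overr{A_{ij}}(l)|=8k^2+2$ guarantees usable exchange partners. The degree/demand preservation, the potential argument for termination (extremal choice versus the paper's iteration is an immaterial difference), and the bookkeeping of $W$ and $\Phi$ are all fine.

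The genuine gap is in the connectivity step, and you have correctly identified it yourself without closing it. Your restoration path $x_i\!-\!b\!-\!x_j\!-\!v^*\!-\!x_l$ needs $f'(b,x_j)\ge 1$, and the symmetric out-repair needs $f'(x_i,b)\ge 1$; for a remainder vertex that disagrees with $r$ on \emph{both} sides (all in-flow from wrong $x_l$'s and all out-flow to wrong $x_{l'}$'s, which is not excluded by anything you prove) neither hypothesis holds, so no admissible first repair exists and the proposed ordering of in- and out-repairs cannot get started; the ``combined six-edge exchange'' is exactly the missing piece and is never constructed. The paper avoids this dependence on $b$ entirely: it picks \emph{two} vertices $v,w\in\overr{A_{ij}}(l)$ all of whose incident edges have the same multiplicity in $r$ and $f'$ (at least two exist because at most $8k^2$ edges differ and $|\overr{A_{ij}}(l)|=8k^2+2$ --- this is the reason for the ``$+2$''), performs the exchange with $v$, and restores connectivity via the path $x_i\!-\!w\!-\!x_j\!-\!v\!-\!x_l$ using the edges $(x_i,w)$, $(w,x_j)$, $(v,x_j)$ inherited from $r$, plus the added edge $(x_i,b)$ to keep $b$ attached. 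Strengthening your choice of $v^*$ (and adding a second clean vertex $w$) to have all incident edges agree with $r$, which your counting argument already delivers, repairs your proof and makes the ordering issue disappear.
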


	\begin{subproof}
		We build this $f'$ iteratively from $f$, by removing any edges $(x_{i'},b)$ and $(b,x_{j'})$ for $i' \neq i$ and $j' \neq j$ for each $b \in R_{ij}$. In particular, this implies that $r(x_i,b) = f'(x_i,b)$ and $r(b,x_j) = f'(b,x_j)$, as $b$ then only has edges coming from $x_i$ and to $x_j$ and since $b$ has a fixed demand. 
		
		Throughout the process we retain optimality and connectivity for $f'$. Furthermore, after each step, the solutions $r$ and $f'$ differ at at most $8k^2$ edges. We start by setting $f' =f$.
		
		Let us consider $b \in R_{ij}$ and suppose that $f'(x_\ell,b) >0$ for some $\ell \not = i$. Note that we can tackle the case where $f'(b,x_\ell)>0$ for some $\ell \neq j$ with similar steps. We remark that $|\overr{A_{ij}}(\ell)| = 8k^2 +2$ as $R_{ij} \not = \emptyset$. As at most $8k^2$ edges are different between $r$ and $f'$, there are vertices $v,w \in \overr{A_{ij}}(\ell)$ such that all  of the edges adjacent to $v$ and $w$ have the same multiplicities in $r$ and $f'$, i.e. $f'(x,v) = r(x,v)$ and $f'(x,w) = r(x,w)$ for all $x\in X$. 
		
		Define flow $f''$ with at most the same costs as $f'$ by removing one copy of the edges $(x_\ell,b)$ and $(x_i,v)$ and adding one copy of the edges $(x_i,b)$ and $(x_\ell,v)$, see Figure~\ref{fig:kernel}. As $b \not \in \overr{A_{ij}}(\ell)$ and $v \in \overr{A_{ij}}(\ell)$, the cost of $f''$ is indeed at most the cost of $f'$ by definition of the set $\overr{A_{ij}}(\ell)$. 
		
		We now argue that $f''$ is connected. As $f'$ is a solution to \textsc{MVTSP}, it must be connected. Since we removed $(x_\ell,b)$ and $(x_i,v)$ from $f'$ to form $f''$, proving that the pairs $x_\ell,b$ and $x_i,v$ are connected in $f''$ proves $f''$ to be connected. The edges $(x_i,w)$, $(w,x_j)$ and $(v,x_j)$ in $f''$ connect $x_i$ and $v$. As a consequence, $x_\ell$ and $b$ are also connected, because of the edges $(x_\ell,v)$ and $(x_i,b)$.
		
		We remark that the number of edges that differ between $f''$ and $r$ has not changed. Hence, we continue with setting $f' = f''$ and repeat until $f'$ has the required properties.
		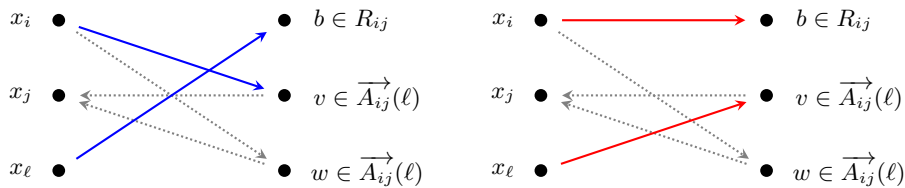
\begin{figure}[ht!]
			\centering
			\begin{minipage}{.45\textwidth}
				\centering
				\scalebox{1}{\begin{tikzpicture}[stealth-stealth, shorten >=8pt, shorten <=8pt,thick ]

	

	
	\filldraw[black] (0,0) circle (2pt);
	\filldraw[black] (0,1) circle (2pt);
	\filldraw[black] (0,2) circle (2pt);
	
	\filldraw[black] (3,0) circle (2pt);
	\filldraw[black] (3,1) circle (2pt);
	\filldraw[black] (3,2) circle (2pt);
	
	\node[] at (-0.5,2) {$x_i$};
	\node[] at (-0.5,1) {$x_j$};
	\node[] at (-0.5,0) {$x_\ell$};

	\node[] at (3.9,2) {$b \in R_{ij}$};
	\node[] at (4.1,1) {$v \in \overr{A_{ij}}(\ell)$};
	\node[] at (4.1,0) {$w \in \overr{A_{ij}}(\ell)$};
	
	\draw[gray, densely dotted, -stealth] (0,2) -> (3,0);
	\draw[gray, densely dotted, -stealth] (3,0) -> (0,1);
	\draw[gray, densely dotted, -stealth] (3,1) -> (0,1);
	
	\draw[blue, -stealth] (0,0) -> (3,2);
	\draw[blue, -stealth] (0,2) -> (3,1);

\end{tikzpicture}}
			\end{minipage} \hspace{20pt}
			\begin{minipage}{.45\textwidth}
				\centering
				\scalebox{1}{\begin{tikzpicture}[stealth-stealth, shorten >=8pt, shorten <=8pt,thick ]

	

	
	\filldraw[black] (0,0) circle (2pt);
	\filldraw[black] (0,1) circle (2pt);
	\filldraw[black] (0,2) circle (2pt);
	
	\filldraw[black] (3,0) circle (2pt);
	\filldraw[black] (3,1) circle (2pt);
	\filldraw[black] (3,2) circle (2pt);
	
	\node[] at (-0.5,2) {$x_i$};
	\node[] at (-0.5,1) {$x_j$};
	\node[] at (-0.5,0) {$x_\ell$};

	\node[] at (3.9,2) {$b \in R_{ij}$};
	\node[] at (4.1,1) {$v \in \overr{A_{ij}}(\ell)$};
	\node[] at (4.1,0) {$w \in \overr{A_{ij}}(\ell)$};
	
	\draw[gray, densely dotted, -stealth] (0,2) -> (3,0);
	\draw[gray, densely dotted, -stealth] (3,0) -> (0,1);
	\draw[gray, densely dotted, -stealth] (3,1) -> (0,1);
	
	\draw[red, -stealth] (0,0) -> (3,1);
	\draw[red, -stealth] (0,2) -> (3,2);

\end{tikzpicture}}
			\end{minipage}
			\caption{Adjusting flow $f'$, depicted on the left, to get flow $f''$, depicted on the right. The blue edges are replaced by the red edges, the rest of the solutions are equal. The vertex $w$ assures the new solution remains connected} 
			\label{fig:kernel}
		\end{figure}
	\end{subproof}
	
	Therefore, we may assume that, in $G_{f'}$, the vertices in $R_{ij}$ are adjacent only to $x_i$ and $x_j$ for all $i,j\in[1,k]$. This proves that the following reduction rule is correct: contract all vertices in $R_{ij}$ into one vertex $r_{ij}$ with edges only $(x_i,r_{ij})$ and $(r_{ij},x_j)$ of cost zero and let the demand $\Dem(r_{ij}) =\sum_{v \in R_{ij}} \Dem(v)$. Hence, we require any solution to use the vertices in $r_{ij}$ exactly the number of times that we would traverse all the vertices of $R_{ij}$. 
	By applying this rule, we get a kernel with the vertices from the sets $X$, $Y$, $\overl{A_{ij}}(\ell)$, $\overr{A_{ij}}(\ell)$, and $r_{ij}$, which is of size $$|X|+|Y|+\sum_{i,j,\ell \in [1,k]} \left( \left|\overl{A_{ij}}(\ell)\right| +  \left|\overr{A_{ij}}(\ell)\right| \right)+ k^2\leq k+k+k^3\cdot (8k^2+2) + k^2=\Oh(k^5).$$
	To subsequently reduce all costs to be at most $2^{k^{\Oh(1)}}$ we can use a method from Etscheid et al.~\cite{etscheid2017polynomial} in a standard manner.

	We check that one can construct this kernel in polynomial time. First, compute a relaxed solution $r$ and remove any cycles in $\overr{F}$ and $\overl{F}$ in polynomial time. Next for each $i,j,\ell \in [1,k]$, compute in polynomial time what the sets $\overr{A_{ij}}(\ell)$ and $\overl{A_{ij}}(\ell)$ should be, by computing the values of $\Dis(x_\ell,v) -\Dis(x_i,v)$ and sorting. Finally we can contract all vertices in $R_{ij}$ into a vertex $r_{ij}$ polynomial time for all $i,j \in [1,k]$. 
\end{proof}
\section{Parameterisation by Treewidth} \label{sec:pathw}
In this section we consider the complexity of \textsc{Connected Flow}, when parameterized by the treewidth $\tw$ of $G$. We first give a $|V(G)|^{\Oh(\tw)}$ time dynamic programming algorithm for \textsc{Connected Flow}. Subsequently, we give a matching conditional lower bound on the complexity of \textsc{MVTSP} parameterized by the pathwidth of $G$. Since \textsc{MVTSP} is a special case of \textsc{Connected Flow} this shows that our dynamic programming algorithm is in some sense optimal.

\subsection{An XP algorithm for \textsc{Connected Flow}}
In this subsection we show the following:

\begingroup
\def\thetheorem{\ref{thm:TwDP}}
\begin{theorem}
Let $M$ be an upper bound on the demands in the input graph $G$, and suppose a tree decomposition of width $\tw$ of $G$ is given. Then a \textsc{Connected Flow} instance on $G$ can be solved in time $|V(G)|^{\Oh(\tw)}$ and an \textsc{MVTSP} instance on $G$ can be solved in time $\min\{|V(G)|,M\}^{\Oh(\tw)}|V(G)|^{\Oh(1)}$.
\end{theorem}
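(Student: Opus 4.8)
The plan is to solve \textsc{Connected Flow} by dynamic programming over a nice tree decomposition of $G$. First I would convert the given decomposition of width $\tw$ into a nice one with leaf, introduce-vertex, introduce-edge, forget, and join nodes in $\tw^{\Oh(1)}|V(G)|$ time without increasing the width, and let $G_t$ denote the subgraph consisting of all vertices and edges introduced in the subtree rooted at $t$. For a node $t$ with bag $X_t$, a partial solution is a function $f:E(G_t)\to\N$ that respects the capacities, is flow-conserving at every vertex of $G_t\setminus X_t$, and meets the demand at every demand vertex of $G_t\setminus X_t$. A table entry is indexed by: (i) a partition $\pi$ of the vertices of $X_t$ touched by $f$, recording which of them must end up in a common weakly connected component of the final $G_f$ (maintained exactly as in standard partition-based connectivity DPs, e.g., for \textsc{Steiner Tree}, fixing one vertex of the solution so that a component is only completed at the very end); and (ii) for every $v\in X_t$ the partial in- and out-flows $\In_t(v)=\sum_{(u,v)\in E(G_t)}f(u,v)$ and $\Out_t(v)=\sum_{(v,u)\in E(G_t)}f(v,u)$. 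The entry stores the minimum $\Dis(f)$ over all partial solutions realising this index.

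The transitions are routine. At an introduce-vertex node the new vertex becomes a singleton block with zero in/out-flow; at an introduce-edge node for $e=(u,v)$ we branch over the multiplicity $f(e)\le\Ca(e)$, pay $\Dis(e)f(e)$, update $\Out_t(u)$ and $\In_t(v)$, and merge the blocks of $u$ and $v$ whenever $f(e)>0$; at a forget node for $v$ we keep only indices with $\In_t(v)=\Out_t(v)$, additionally with $\In_t(v)=\Dem(v)$ if $v\in D$, and with $v$ not forming a positive-flow singleton block, and then project $v$ out; at a join node we add costs, add in/out-flows coordinatewise (the edge sets introduced in the two subtrees are disjoint), and merge the two partitions as in connectivity DPs, taking care not to create spurious cycles. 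The optimum is read at the root from the index with a single block. Correctness follows from the usual subtree-exchange argument, the only delicate point being the join node, which is handled by the standard machinery.

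The crux is bounding the number of indices, which controls the running time. There are $\tw^{\Oh(\tw)}$ partitions, so everything hinges on the number of in/out-flow vectors. For \textsc{MVTSP} every vertex has a demand, so $\In_t(v),\Out_t(v)\le\Dem(v)\le M$, giving $M^{\Oh(\tw)}$ indices and at most $M$ branching choices per introduced edge, hence the $M^{\Oh(\tw)}|V(G)|^{\Oh(1)}$ bound. To obtain the $|V(G)|^{\Oh(\tw)}$ bound (and the $\min\{|V(G)|,M\}$ bound for \textsc{MVTSP}) I would compute in polynomial time an optimal solution $r$ of the \textsc{Flow} relaxation, fix an optimal connected flow $s$ and a tree $T\subseteq supp(s)$ spanning $V(G_s)$ (so $|T|=\Oh(|V(G)|)$), and apply Lemma~\ref{lem:s-rtours} to get an optimal connected flow $f$ with $\sum_u|r(u,v)-f(u,v)|\le\Oh(|V(G)|)$ and $\sum_u|r(v,u)-f(v,u)|\le\Oh(|V(G)|)$ for every $v$. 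Since the partial sums of these discrepancies are non-decreasing along subtrees, it then suffices to retain only indices whose $\In_t(v)$ (respectively $\Out_t(v)$) differs from $\sum_{(u,v)\in E(G_t)}r(u,v)$ (respectively $\sum_{(v,u)\in E(G_t)}r(v,u)$) by at most $\Oh(|V(G)|)$, which leaves $|V(G)|^{\Oh(\tw)}$ indices and $|V(G)|^{\Oh(1)}$ relevant multiplicities per introduced edge; together with the $|V(G)|^{\Oh(\tw)}$ work per node and $\Oh(\tw\,|V(G)|)$ nodes this yields the claimed time bounds. The main obstacle I expect is making this last step watertight for \textsc{Connected Flow} rather than \textsc{MVTSP}: one must argue that $T$ certifies connectivity of the returned $f$ even though $G_f$ may contain vertices of $supp(r)$ lying outside $V(G_s)$, which needs either a short additional argument or a slight enlargement of $T$ (for \textsc{MVTSP} this is immediate since $V(G_s)=V(G)$).
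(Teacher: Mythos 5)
Your table definition and the $M^{\Oh(\tw)}$ degree bound for \textsc{MVTSP} coincide with the paper's, but your route to the $|V(G)|^{\Oh(\tw)}$ part is different and contains a genuine gap --- exactly the one you flag in your last sentence. Pruning the table to indices within $\Oh(|V(G)|)$ of the partial sums of the relaxation $r$ is only sound if some \emph{optimal connected} flow stays inside the window, and Lemma~\ref{lem:s-rtours} does not deliver that: it produces a flow $f$ with $\Dis(f)\le\Dis(s)$ containing $T$, but every edge outside the chosen tours keeps its $r$-value, so $G_f$ may retain circulation of $r$ through non-demand vertices that $s$ never visits, i.e.\ components of $G_f$ disjoint from $V(G_s)$. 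Such an $f$ is infeasible for \textsc{Connected Flow}, and deleting those components destroys the closeness to $r$ at precisely the coordinates your pruning inspects. Enlarging $T$ cannot help, since the lemma requires $T\subseteq supp(s)$ and the offending vertices lie outside $V(G_s)$; this is exactly where the argument of Theorem~\ref{thm:XFPT} (where $V(T)$ contains a vertex cover, so every positive edge meets $V(T)$) fails to transfer to treewidth. A repair exists --- for instance, first normalize $r$ by cancelling all flow cycles avoiding $D$, after which a component of $G_f$ disjoint from $V(G_s)$ would exhibit a positive $r$-circulation inside $V\setminus D$, a contradiction --- but you provide neither this nor an alternative. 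The paper takes a different route altogether: it bounds the in/out-degrees of non-demand vertices by $M|V(G)|$ directly (decomposing an optimum into at most $M|V(G)|$ paths between demand vertices) and then argues that $M$ may be assumed polynomial in $|V(G)|$ by repeatedly subtracting from the instance a flow obtained from $r$ by cancelling cycles, using Lemma~\ref{lem:s-rtours} only through a per-edge estimate $|r(u,v)-s(u,v)|\le 2|V(G)|$.

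A secondary shortfall: as written, your \textsc{MVTSP} running time carries the factor $\tw^{\Oh(\tw)}$ for the partitions, which is not dominated by $\min\{|V(G)|,M\}^{\Oh(\tw)}|V(G)|^{\Oh(1)}$ when $M<\tw$ (e.g.\ constant demands with $\tw$ around $\sqrt{|V(G)|}$). The paper notes this explicitly and reduces the connectivity bookkeeping from $\tw^{\Oh(\tw)}$ to $2^{\Oh(\tw)}$ via the rank-based approach; without that (or an equivalent) ingredient you do not reach the claimed \textsc{MVTSP} bound in all parameter regimes.
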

\addtocounter{theorem}{-1}
\endgroup

\begin{proof}
The algorithm is based on a standard dynamic programming approach; we only describe the table entries and omit the recurrence to compute table entries since it is standard.
We assume we have a tree decomposition $\mathcal{T} = (\{X_i\}, R )$ on the given graph. For a given bag $X_i$, let $\pi$ be a partition on $X_i$. Furthermore let $\mathbf{d}^{\In} = (d_v^{\In})_{v \in X_i} \in \N^{X_i}$ and $\mathbf{d}^{\Out} = (d_v^{\Out})_{v \in X_i} \in \N^{X_i}$ be two vectors of integers, indexed by $X_i$. We define the dynamic programming table entry $T(X_i, \pi, \mathbf{d}^{\In}, \mathbf{d}^{\Out})$ to be the cost of the cheapest partial solution on the graph `below' the bag $X_i$, among solutions whose connected components agree with the partition $\pi$ and whose in and out degrees agree with the vectors $\mathbf{d}^{\In}$ and $\mathbf{d}^{\Out}$. More formally, for $r \in V(R)$ the root of the tree decomposition, we consider a bag $X_j$ to be below another bag $X_i$ if one can reach $j$ from $i$ by a directed path in the directed tree obtained from $R$ by orienting every edge away from $r$. We will denote this as $X_j \preccurlyeq X_i$ and define $Y_i = \cup_{X_j \preccurlyeq X_i}X_j$. For each bag $X_i$, a partition $\pi$ of $X_i$ and sequences $\mathbf{d}^\In$ and $\mathbf{d}^\Out$ satisfying $(i)$ $0\leq d^\In_v,d^\Out_v \leq \Dem(v)$ for each $v \in D$ and $(ii)$ $0\leq d^\In(v),d^\Out(v) \leq M |V(G)|$ for each $v \notin D$, define $T(X_i, \pi, \mathbf{d}^{\In}, \mathbf{d}^{\Out}) = \min_s \Dis(s)$ over all $s : Y_i^2  \to \N$ such that the following conditions hold:

\begin{enumerate}\setlength\itemsep{0em}
\item $\sum_{u \in Y_i} s(u,v) = \sum_{u \in Y_i} s(v,u) = \Dem(v)$ for all $v \in D \cap (Y_i \setminus X_i)$,
\item $\sum_{u \in Y_i} s(u,v) = d^{\In}_v$ for all $v \in X_i$,
\item $\sum_{u \in Y_i} s(v,u) = d^{\Out}_v$ for all $v \in X_i$,
\item all blocks of the partition $\pi$ are weakly connected in $G_s$,
\item $s(u,v) \leq \Ca(u,v)$ for all $(u,v) \in E(G[Y_i])$.
\end{enumerate}
We can compute the table starting at the leaves of $R$ and work our way towards the root.

Let us examine the necessary size of this dynamic programming table. First we note that there are at most $|V(G)|^{\Oh(1)}$ bags in the tree decomposition. Next we consider the values $d_v^{\In}$ and $d_v^{\Out}$. Note that we can assume that an optimal solution only visits any vertex without demand at most $M|V(G)|$ times: Any solution can be decomposed into a collection of paths between vertices with demand. Each such path can be assumed to not visit any vertex more than once (except possibly in the end points of the path) since the solution is of minimum weight and all costs are non-negative. We find that each vertex gets visited at most $M|V(G)|$ times and thus we only need to consider $M|V(G)|$ many values of $d_v^{\In}$ and $d_v^{\Out}$. Thus the degree values of the partial solutions contribute a factor of $(M|V(G)|)^{\Oh(\tw)}$ to the overall running time of the algorithm if the given instance is a \textsc{Connected Flow} instance, and only $M^{\Oh(\tw)}$ if the given instance is an \textsc{MVTSP} instance (in which all vertices are demand vertices). 

We argue that we may assume that $M=|V(G)|^{\Oh(1)}$. Together with the fact that the number of possibilities for $\pi$ is $\tw^{\Oh(\tw)}\le |V(G)|^{\Oh(\tw)}$, the claimed result for \textsc{Connected Flow} follows. We support this assumption using a variation on the proof of Theorem 3.4 from Kowalik et al.~\cite{kowalik_et_al:LIPIcs:2020:12932}. Let $r$ be some optimal solution to \textsc{Flow}, then by applying Lemma~\ref{lem:s-rtours} with $T$ being some subtree of $G_s$ spanning all demand vertices, we find that there is some optimal solution $s$ of \textsc{Connected Flow} such that $|r(u,v) - s(u,v)| \leq 2n$ at every edge $(u,v)$. 

We now construct a flow $f$ from $r$ by subtracting simple directed cycles from $r$. Note that each time that we subtract such a cycle, the result is again a flow. We start with $f = r$ everywhere. Now if there is an edge $(u,v) \in E$ for which $f(u,v) > \max\{ r(u,v) - 2n - 1, 0 \}$, we can find a simple directed cycle $C \in G_f$, containing $(u,v)$, as $f$ is a flow and thus $G_f$ is Eulerian. Then define $f'(u,v) = f(u,v) - [(u,v) \in C]$. Note that $f'$ is again a flow. Set $f = f'$. We then repeat this process of subtracting simple directed cycles from $f$ until $f(u,v) \leq \max\{ r(u,v) - 2n - 1, 0 \}$ for every edge $(u,v)$. 

Note that $0\le s(u,v)-f(u,v)$ for all $(u,v)\in E$. 
Then define the instance with  $\Dem'(v) = \Dem(v) - \sum_{u \in V} f(u,v)$ and $\Ca'(u,v) = \Ca(u,v) - f(u,v)$ for which $s(u,v) - f(u,v)$ is an optimal connected flow. If $\Dem'(v) \leq 2n^2 + n$ we are done. Otherwise let $r'$ be a relaxed solution for the new instance. Note that there is some edge $(u',v')$ for which $r'(u',v') > 2n + 1$ and thus we can repeat the previous argument to find a non-zero flow $f'$ such that $f'(u,v) \leq \max\{ r'(u,v) - 2n - 1, 0 \}$ on every edge and define a corresponding new instance. Since each time we subtract a non-zero flow, after some number of repetitions we find $\Dem'(v) \leq 2n^2 + n$. 

For the result for \textsc{MVTSP}, the above approach would give a running time of $\min\{|V(G)|, M\}^{\Oh(\tw)}\tw^{\Oh(\tw)}|V(G)|^{O(1)}$. However, the factor $\tw^{\Oh(\tw)}$ in the running time needed to keep track of all partitions $\pi$ can be reduced to $2^{\Oh(\tw)}$ via a standard application of the rank based approach (see e.g.~\cite[Section 11.2.2]{DBLP:books/sp/CyganFKLMPPS15} or \cite{BodlaenderCKN15,DBLP:journals/corr/abs-1211-1506}).
\end{proof}

\subsection{Lower bound}\label{sec:lowerb}
We now present a modified version of a reduction from \textsc{3-CNF-SAT} to \textsc{Hamiltonian Cycle} parameterized by pathwidth from Cygan et al.~\cite{DBLP:journals/corr/abs-1211-1506}. We modify it to be a reduction to \textsc{MVTSP} instead.

We will produce an instance of \textsc{MVTSP} that is symmetric in the sense that the graph $G$ is undirected, hence we denote edges as unordered pairs of vertices (i.e. $\{u,v\} = \{v,u\}$). As a consequence, when $c$ is a tour on $G$, then we say $c(u,v) = c(v,u)$.
The general proof strategy is as follows. For a given \textsc{3-CNF-SAT} formula $\phi$ on $n$ variables\footnote{In this section, we will only use $n$ to refer to the number of variables of a \textsc{3-CNF-SAT} instance.} we will construct an equivalent \textsc{MVTSP} instance $(G,d)$. This graph will consist of $n/s$ paths, for some value $s$, with each path propagating some information encoding the value of $s$ variables of $\phi$. For each clause of $\phi$ we will add a gadget which checks if the assignment satisfies the clause. We then bound the size and the pathwidth of the constructed graph $G$. This allows us to conclude a lower bound based on this reduction.

\subsubsection{Gadgets}\label{subsec:3SATGad}
We start by borrowing the following gadget from Cygan et al.\cite{DBLP:journals/corr/abs-1211-1506}, called a 2-label gadget.

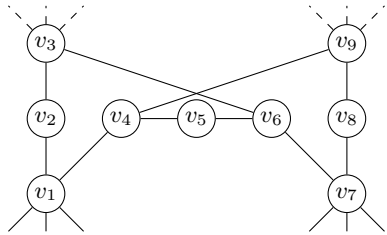
\begin{figure}[H]
    \centering
    \begin{tikzpicture}[-]
    \tikzstyle{every state} = [inner sep = 0.1mm, minimum size = 5mm]
    \node[state]		(V1) at (0,0)		{$v_1$};
    \node[state]		(V2) at (0,1)		{$v_2$};
    \node[state]		(V3) at (0,2)		{$v_3$};
    \node[state]		(V4) at (1,1)		{$v_4$};
    \node[state]		(V5) at (2,1)		{$v_5$};
    \node[state]		(V6) at (3,1)		{$v_6$};
    \node[state]		(V7) at (4,0)		{$v_7$};
    \node[state]		(V8) at (4,1)		{$v_8$};
    \node[state]		(V9) at (4,2)		{$v_9$};
    
    \path (V1)		edge			node	{}	(V2)
    					edge			node	{}	(V4)
    					edge			node	{}	(-0.5,-0.5)
    					edge			node	{}	(0,-0.5)
    					edge			node	{}	(0.5,-0.5)
    		(V2)		edge			node	{}	(V3)
    		(V3)		edge			node	{}	(V6)
    					edge[dashed]	node	{}	(-0.5,2.5)
    					edge[dashed]	node	{}	(0,2.5)
    					edge[dashed]	node	{}	(0.5,2.5)
    		(V4)		edge			node	{}	(V5)
    					edge			node	{}	(V9)
    		(V5)		edge			node	{}	(V6)
    		(V6)		edge			node	{}	(V7)
    		(V7)		edge			node	{}	(V8)
    					edge			node	{}	(3.5,-0.5)
    					edge			node	{}	(4,-0.5)
    					edge			node	{}	(4.5,-0.5)
		(V8)		edge			node	{}	(V9)
		(V9)		edge[dashed]	node	{}	(3.5,2.5)
    					edge[dashed]	node	{}	(4,2.5)
    					edge[dashed]	node	{}	(4.5,2.5);
    \end{tikzpicture}
    \caption{A 2-label gadget.}
    \label{fig:2LabGad}
\end{figure}

The key feature of this gadget is that if all vertices in the gadget have demand equal to 1, then if a solution tour enters the gadget at $v_3$, it has to leave the gadget at $v_9$ and vice versa. A similar relation holds for $v_1$ and $v_7$. We will refer to any edge connected to either $v_1$ or $v_7$ as having label 1 and any edge connected to $v_3$ or $v_9$ as having label 2. We will use this gadget to construct a gadget that can detect certain multisets of edges in a part of a graph. In this construction we will chain 2-label gadgets together using label 1 edges. Whenever we do this, we always connect the vertex $v_7$ of one gadget to the vertex $v_1$ in the next. To keep things concise, in the rest of this section we will refer to any 2-label gadget as if it were a single vertex.

This next gadget is also inspired by a construction from Cygan et al.~\cite{DBLP:journals/corr/abs-1211-1506}.

\begin{definition}
A scanner gadget in an unweighted \textsc{MVTSP} instance $(G, d)$ is described by a tuple $(X,a,b,\mathcal{F})$, where $X\subseteq V$, $a,b \in V \setminus X$ with $\Dem(a) = \Dem(b) = 1$, $\mathcal{F}$ is a family of multisets of edges in\footnote{Here $E(X,X)$ are all edges with both endpoints in $X$ and the restriction $c_Y$ are all edges in $c$ in $Y$ (keeping multiplicities).} $E(X,X)$ and $\emptyset \notin \mathcal{F}$. A tour $c$ of $G$ is \emph{consistent with $(X,a,b,\mathcal{F})$} if its restriction $c_{E(X,X)}$ is in $\mathcal{F}$ and if $c(a,b) > 0$.
\end{definition}

When refering to the gadget as a subgraph, we will use $G_\mathcal{F}$. We implement the scanner gadget using the following construction, obtaining a different instance $(G', \Dem')$ of MVTSP.

\begin{itemize}
    \item Remove the edges in $E(X,X)$.
    \item Add an independent set $I = \{ s_1, \dots, s_\ell \}$ and edges $\{ a, s_1 \}$ and $\{ s_\ell, b \}$, for $\ell = |\mathcal{F}|$.
    \item Let $\mathcal{F} = \{ F_1, \dots, F_\ell \}$. For $i = 1, \dots, \ell$ we do the following.
        \begin{itemize}
            \item Let $F_i = \{e_1^{q_1}, \dots, e_z^{q_z}\}$, that is $F_i$ contains $q_i$ copies of $e_i$.
            \item Add a path $P_i = \{p_i^{1}, \dots, p_i^{t_i}\}$ of 2-label gadgets, where $t_i = |F_i| = \sum_{i=j}^z q_j$. We connect the gadgets in a chain using label 1 edges.
            \item Connect $p_i^1$ to $s_{i-1}$ and $s_i$ using label 1 edges (green edges in Figure~\ref{fig:3SATGad}) and connect $p_i^{|F_i|}$ to $s_i$ and $s_{i+1}$ using label 1 edges (blue edges in Figure~\ref{fig:3SATGad}). 
            \item For all $j = 1, \dots, z$ add label 2 edges from $x$ $p_i^{j'}$ and from $y$ to $p_i^{j'}$ for $e_j = \{ x, y \}$ and for $q_j$ different, previously unused values of $j'$ (red edges in Figure~\ref{fig:3SATGad}).
        \end{itemize}
    \item We set the demand of all added vertices to $1$.
\end{itemize}

\begin{figure}[H]
    \centering
    \begin{tikzpicture}[-,scale=0.9]
    \def \GadWid{13}
    \tikzstyle{every state} = [inner sep = 0.1mm, minimum size = 6mm]
    \node[state]				(A) at (0,5)				{$a$};
    \node[state]				(B) at (\GadWid,5)		{$b$};
    \node						(L1) at (0,0)				{};
    \node						(R1) at (\GadWid,0)		{};
    \node						(X) at (\GadWid/2+1,1)	{$X$};
    \node	[state]				(X1) at (1,1)				{$x_1$};
    \node	[state]				(X2) at (2.5,1)				{$x_2$};
    \node	[state]				(X3) at (3.5,0.5)			{$x_3$};
    \node	[state]				(X4) at (5,1)			{$x_4$};
    \node	[state]				(X5) at (6,1.5)			{$x_5$};
    \node	[state]				(X6) at (\GadWid-3,1)	{$x_6$};
    \node	[state]				(X7) at (\GadWid-1,1)	{$x_7$};
    
    \node[state]				(S1) at (1.5,5)			{$s_1$};
    \node[state]				(S2) at (4.5,5)			{$s_2$};
    \node[state, draw=none]	(S3) at (7.5,5)			{};
    \node[state, draw=none]	(S-) at (\GadWid-4.5,5)	{};
    \node[state]				(Sl) at (\GadWid-1.5,5)	{$s_\ell$};
    \node[state]				(P11) at (1,3)			{$p_1^1$};
    \node[state]				(P12) at (2.5,3)			{$p_1^{t_1}$};
    \node[state]				(P21) at (4,3)			{$p_2^1$};
    \node[state]				(P22) at (5.5,3)			{$p_2^{t_2}$};
    \node[state, draw=none]	(P31) at (7,3)			{};
    \node[state, draw=none]	(P-2) at (\GadWid-4,3)	{};
    \node[state]				(Pl1) at (\GadWid-2.5,3)	{$p_\ell^1$};
    \node[state]				(Pl2) at (\GadWid-1,3)	{$p_\ell^{t_\ell}$};
    
    \node						at ($(S3)!0.5!(S-)$)	{$\dots$};
    \node						at ($(P31)!0.5!(P-2)$)	{$\dots$};
    
    \path[line width = 0.75pt]
    		(A)		edge							node	{}	(S1)
    		(B)		edge							node	{}	(Sl)
    		(S1)	edge[draw = green]				node	{}	(P11)
    				edge[draw = green]				node	{}	(P21)
    		(S2)	edge[draw = green]				node	{}	(P21)
    				edge[draw = green]				node	{}	(P31)
    				edge[draw = blue]				node	{}	(P12)
    				edge[draw = blue]				node	{}	(P22)
		(S3)	edge[draw = blue]				node	{}	(P22)
		(S-)		edge[draw = green]				node	{}	(Pl1)
    		(Sl)		edge[draw = blue]				node	{}	(P-2)
    				edge[draw = blue]				node	{}	(Pl2)
		(P11)	edge[draw = red]				node	{}	(X1)
				edge[draw = red]				node	{}	(X2)
				edge[dashed, out = 40, in = 140]	node	{}	(P12)
		(P12)	edge[draw = red]				node	{}	(X2)
				edge[draw = red]				node	{}	(X3)
		(P21)	edge[draw = red]				node	{}	(X3)
				edge[draw = red]				node	{}	(X4)
				edge[dashed, out = 40, in = 140]	node	{}	(P22)
		(P22)	edge[draw = red]				node	{}	(X3)
				edge[draw = red]				node	{}	(X5)
		(Pl1)	edge[draw = red]				node	{}	(X6)
				edge[draw = red]				node	{}	(X7)
				edge[dashed, out = 40, in = 140]	node	{}	(Pl2)
		(Pl2)	edge[draw = red]				node	{}	(X6)
				edge[draw = red]				node	{}	(X7);
				
	\draw (0,0) rectangle (\GadWid, 2);
    \end{tikzpicture}
    \caption{Example of the scanner gadget.}
    \label{fig:3SATGad}
\end{figure}
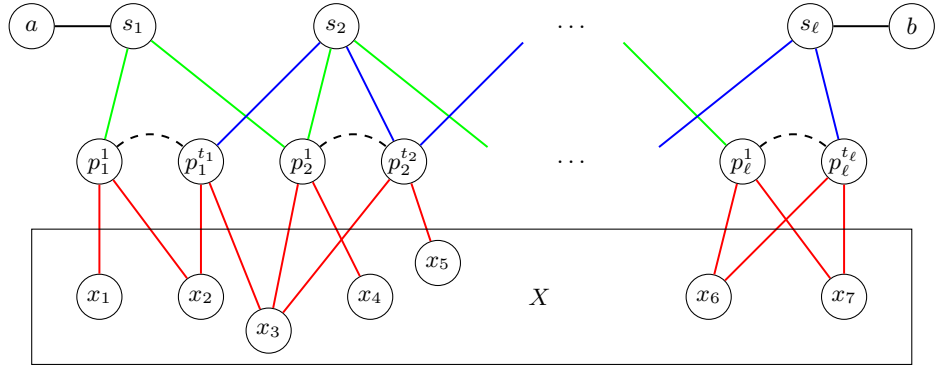

The function of the gadget is captured by the following lemma.

\begin{lemma} \label{lem:ConGad}
There exists an tour on $(G, \Dem)$ that is consistent with $(X,a,b,\mathcal{F})$ if and only if there exists a tour on $(G', \Dem')$.
\end{lemma}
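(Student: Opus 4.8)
The plan is to prove both directions by exhibiting, from a consistent tour on one side, a corresponding tour on the other, using the known behaviour of the 2-label gadgets as a black box. Recall the key property: in a chain of 2-label gadgets linked by label-1 edges, if every gadget vertex has demand $1$, then a tour visiting all of them must traverse each chain either ``straight through'' along label-1 edges (entering and leaving via label-1 ports) or it must use the label-2 ports in matched pairs; in particular each path $P_i$ of length $t_i=|F_i|$ in the construction can be ``swept'' by entering through one label-1 port at $s_{i-1}$ (or $s_i$) and leaving through the other, while in doing so it picks up exactly the label-2 edges attached to $P_i$, which by construction are exactly the $q_j$ copies of each $e_j\in F_i$. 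I would state this as a preliminary observation (citing the 2-label gadget property from Cygan et al.) so that the rest of the argument is bookkeeping.

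For the forward direction, suppose $c$ is a tour on $(G,\Dem)$ consistent with $(X,a,b,\mathcal F)$, so $c_{E(X,X)}=F_i$ for some $i$ and $c(a,b)>0$. I would build a tour $c'$ on $(G',\Dem')$ as follows. Outside the modified region, $c'$ agrees with $c$, except that the (now-removed) edges of $E(X,X)$ used by $c$ — namely the multiset $F_i$ — are re-routed through the path $P_i$: each occurrence of an edge $e_j=\{x,y\}\in F_i$ is simulated by going $x\to p_i^{j'}\to y$ via the corresponding pair of label-2 edges, so the visits to the vertices of $P_i$ and the correct degree at every $x\in X$ are accounted for. The single edge $\{a,b\}$ used by $c$ is replaced by the path $a,s_1,\dots$ through the independent set $I$: here we route $s_0:=a \to s_1 \to P_1 \to \cdots$, entering each $P_{i'}$ for $i'\neq i$ ``from the side'' via the green/blue label-1 edges so that all of $P_{i'}$ and all $s_{i'}$ get visited exactly once, and splicing in the label-2 sweep of $P_i$ at the appropriate place; finally reaching $s_l\to b$. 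One checks that every added vertex ($I$ and all $p_i^{j'}$) is visited exactly once, degrees are preserved, connectivity holds (the whole detour is a single connected walk spliced into $c$), and the total cost is unchanged since the instance is unweighted and we must merely verify the demand counts.

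For the reverse direction, suppose $c'$ is a tour on $(G',\Dem')$. Since $\Dem'(a)=\Dem'(b)=1$ and $a$ (resp.\ $b$) has only the edge to $s_1$ (resp.\ $s_l$) inside the gadget region plus its outside edges, $c'$ must use $\{a,s_1\}$ and $\{s_l,b\}$. Now I trace how $c'$ visits $I$ and the paths $P_i$: by the 2-label gadget property, for each $i$ the path $P_i$ is either swept ``straight'' via label-1 edges (contributing no $X$-edges) or swept via its label-2 ports, in which case it forces exactly the multiset of incidences corresponding to $F_i$ at the vertices of $X$; a degree/parity count on the $s_i$ vertices (each of demand $1$, lying on the $a$--$b$ segment) shows that exactly one index $i$ is swept in the label-2 way and the rest straight. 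Reading off the $X$--$X$ edges implied by that sweep gives a multiset equal to $F_i\in\mathcal F$, and replacing the detour in $c'$ by the corresponding edges of $E(X,X)$ (and the $a$--$\cdots$--$b$ detour by the single edge $\{a,b\}$) yields a tour $c$ on $(G,\Dem)$ with $c_{E(X,X)}=F_i$ and $c(a,b)>0$, i.e.\ consistent with $(X,a,b,\mathcal F)$.

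The main obstacle I expect is the careful case analysis of the second direction: precisely arguing that exactly one $P_i$ is entered through its label-2 edges while all others are passed through straight, and that this selection is forced by the demands on $I\cup\bigcup_i V(P_i)$ together with the 2-label gadget constraints — this is where one genuinely needs the structural lemma about 2-label gadgets rather than just hand-waving, and where one must be careful that no tour can ``cheat'' by entering a path $P_i$, partially sweeping it, and leaving again. A secondary nuisance is making the splicing in the forward direction rigorous (that the detour can be inserted into $c$ as a connected sub-walk without disconnecting anything), but this is routine given that $G_c$ is Eulerian and connected.
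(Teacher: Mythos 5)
Your proposal is correct and follows essentially the same route as the paper's proof: the forward direction re-routes each occurrence of an edge of $F_i$ through a distinct 2-label gadget of $P_i$ and replaces $\{a,b\}$ by the path through $I$ that skips $P_i$, and the reverse direction uses the 2-label property to force $\{a,s_1\}$ and $\{s_l,b\}$, shows the $s_1$--$s_l$ segment alternates between $s$-vertices and paths leaving exactly one $P_{i_0}$ to be swept via its label-2 ports, and reads off $F_{i_0}$ from that sweep. No substantive difference from the paper's argument.
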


The proof will closely follow that in Cygan et al.~\cite{DBLP:journals/corr/abs-1211-1506}.

\begin{proof}
Suppose we have a tour on $(G,\Dem)$ which is consistent with a gadget $(X, a, b, \mathcal{F})$. Let $F_i \in \mathcal{F}$ be the restriction of the tour on $E(X,X)$. Then the tour on $(G, \Dem)$ can be extended to a tour on $(G', \Dem')$ by replacing the $q_j$ instances of an edge $\{u,v\} \in F_i$ with two edges $\{u,p_i^{j'}\}$ and $\{v,p_i^{j'}\}$ for $q_j$ different values of $j'$. We also replace the edge $\{a,b\}$ by the path
\[
a, s_1, P_1, \dots,P_{i-1}, s_i, P_{i+1}, s_{i+1}, \dots, P_\ell, s_\ell, b.
\]
Since the obtained tour visits all vertices in the gadget exactly once and since the restriction of the adjusted tour connects the same pairs of vertices in $X$ as the restriction of the original tour, the obtained tour will be a solution for the instance $(G',\Dem')$.

For the other direction, suppose we have a tour $c'$ on $(G', \Dem')$. Note that by the nature of the 2-label gadgets any tour cannot cross from some $s_i$ into $X$ through one of the 2-label gadgets in one of the paths $P_i$. Thus the tour can only travel from outside the gadget to $s_i$, by going through $a$ or $b$.
Therefore the tour must include the edges $\{a,s_1\}$ and $\{s_\ell,b\}$. Furthermore $s_1$ and $s_\ell$ must be connected by some path $P'$ in the tour. Because $I$ is an independent set, $P'$ has to jump back and forth between the $P_i$'s and the $s_i$'s and has to include every $s_i$, since this is the only way to reach a vertex $s_i$ with a tour. 

This means that there will be exactly one path $P_{i_0}$ which is not covered by $P'$. We can now obtain a tour $c$ of $(G, \Dem)$ by first setting $c(u,v) = c'(u,v)$ for $\{u,v\} \neq \{a,b\}$ for $u$ or $v$ not in $X$. We then include any edge in $X$ a number of times according to its multiplicity in $F_{i_0}$ i.e. we set $c(u,v) = F_{i_0}(u,v)$. Finally we set $c(a,b) = 1$. Note that since $c(a,b) >0$ and $F_{i_0} \in \mathcal{F}$, we find that $c$ is consistent with $(X,a,b,\mathcal{F})$.
\end{proof}

The following lemma will allow us to implement the gadget without increasing the pathwidth of the graph too much.

\begin{lemma} \label{lem:GadPW}
The scanner gadget has pathwidth at most $|X| + 21$.
\end{lemma}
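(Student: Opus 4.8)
The plan is to construct explicitly a path decomposition of $G_{\mathcal F}$ in which every bag contains all of $X$ together with at most $21$ further vertices; this directly yields pathwidth at most $|X|+21$, and it is precisely the shape of decomposition one needs for the application, namely inserting the gadget into a host decomposition that already has a bag containing $X$. First I would recall the skeleton of $G_{\mathcal F}$: apart from $X$ (which is an independent set, since the edges of $E(X,X)$ were removed) it contains $a$, $b$, the independent set $\{s_1,\dots,s_l\}$ with the two edges $\{a,s_1\}$ and $\{s_l,b\}$, and for each $i$ a chain $P_i=p_i^1,\dots,p_i^{t_i}$ of $2$-label gadgets, where consecutive gadgets of a chain share a single label-$1$ edge $\{v_7,v_1\}$, the gadgets $p_i^1$ and $p_i^{t_i}$ are additionally attached by label-$1$ edges to $s_{i-1},s_i$ and to $s_i,s_{i+1}$ respectively, and each individual $2$-label gadget carries exactly two label-$2$ edges to two vertices of $X$. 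The two quantitative facts I would single out are that a $2$-label gadget has only $9$ vertices, and that the union of these $9$ vertices with $X$ already contains both endpoints of every internal edge of the gadget and of both of its label-$2$ edges.

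Next I would fix an order in which to sweep the vertices outside $X$, roughly $a;\ s_1;\ P_1;\ s_2;\ P_2;\ \dots;\ s_l;\ P_l;\ b$, refined so that $s_{i-1}$ is not forgotten until $p_i^1$ has been fully processed and $s_{i+1}$ is introduced before $p_i^{t_i}$ is processed. Walking this order, I would take the $t$-th bag to be $X$ together with a window $W_t$ consisting of: the $9$ vertices of the $2$-label gadget currently processed; the (at most $9$) vertices of the immediately preceding $2$-label gadget, retained only long enough to cover the connecting label-$1$ edge; the at most $3$ vertices among $s_1,\dots,s_l$ that can be relevant at once (a single gadget is attached to at most three $s$-vertices, and at any moment only the tail of one chain, one entire chain, and the head of the next are active); and at most one of $a,b$. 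This gives $|W_t|\le 9+9+3+1=22$, so every bag has size at most $|X|+22$ and the width is at most $|X|+21$.

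Finally I would check the three axioms of a path decomposition. Every vertex is introduced in some bag by construction. Every edge appears in some bag: internal gadget edges and inter-gadget chain edges are covered because the window holds the current and the preceding gadget, the edges $\{a,s_1\},\{s_l,b\}$ and the $s_i$-to-$p_i^1$ / $p_i^{t_i}$ edges are covered by the refinement of the order, and each label-$2$ edge has its $X$-endpoint in every bag and its gadget endpoint in the current window. The delicate point is connectivity of the occurrence sets: $X$ lies in every bag by fiat, each gadget vertex and $a,b$ are introduced once and forgotten once, but one must verify that the lifespan of every $s_i$ — from its first appearance in the tail of $P_{i-1}$, through all of $P_i$, to the head of $P_{i+1}$ — forms a single contiguous block with no gap, which is exactly what the chosen sweep order guarantees. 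I expect this bookkeeping (the precise ordering, and which $s$-vertices and gadgets may coexist in a window, including the degenerate case of chains $P_i$ of length $1$) to be the only real obstacle; the internal structure of the $2$-label gadget and the edge-coverage checks are immediate from Figure~\ref{fig:2LabGad} and Figure~\ref{fig:3SATGad}.
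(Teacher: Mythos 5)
Your proposal is correct and follows essentially the same route as the paper: a path decomposition obtained by putting $X$ into every bag and sweeping a small window along $a, s_1, P_1, s_2, \dots, P_l, s_l, b$, where each bag additionally holds two consecutive $2$-label gadgets and the at most three relevant $s$-vertices (the paper's bags are $B_{i,j} = X \cup \{s_{i-1},s_i,s_{i+1},p_i^j,p_i^{j+1}\}$ plus end bags for $a$ and $b$). Your window bound of $22$ extra vertices is marginally looser than the paper's $21$, but it still gives pathwidth at most $|X|+21$, so the argument goes through as claimed.
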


\begin{proof}
We define the bags of the decomposition as follows
\begin{align*}
B_a &:= X \cup \{a, s_1\} \\
B_{i,j} &:= X \cup \{s_{i-1}, s_i, s_{i+1}, p_i^j, p_i^{j+1}\}\\
B_b &:= X \cup \{b, s_\ell\}.
\end{align*}
We now have the following path decomposition of $G_\mathcal{F}$
\[
B_a,B_{1,1},B_{1,2},\dots B_{1,t_1},B_{2,1}, \dots B_{\ell,t_\ell},B_b.
\]
It is easy to check that every vertex/edge is covered by some bag and that for every vertex $v$ the set of bags containing $v$ form an interval in the decomposition. 
\end{proof}

\subsubsection{Construction}
Suppose we are given a \textsc{3-CNF-SAT} formula ${\phi = C_1 \wedge \dots \wedge C_m}$. We will construct an equivalent unweighted \textsc{MVTSP} instance $\Gamma_\phi$ using scanner gadgets. We will interpret a tuple $(q, j) \in \{1, \dots, 2^s\} \times \{1, \dots, n/s\}$ as an assignment of $x_{(j-1)s + 1}, \dots, x_{js}$ by first decomposing
\[
q - 1 = \sum \limits _{i = 1} ^s c_i 2^{i-1}
\]
and setting $x_{(j-1)s + i}$ as true if $c_i = 1$ and false if $c_i = 0$. We say a clause $C$ is satisfied by a set $Q$ of such tuples, if $j \neq j'$ for all $(q,j), (q,j') \in Q$, and if the partial assignment collectively given by the tuples satisfies $C$.

\begin{figure}[h]
    \centering
    \def \ConHig{6}
    \def \ConWid{10}
    \begin{tikzpicture}[-,scale=0.9]
    \tikzstyle{gadget} = [draw=black, very thick, rectangle, rounded corners, inner sep=4pt]
    \tikzstyle{every state} = [inner sep = 0.5mm, minimum size = 8mm]
    
    \path[use as bounding box] (-2,-1) rectangle (\ConWid+1.5, \ConHig+2);
    
    \node[state]		(L11) at (0,\ConHig - 2.5)				{$l_{1,1}$};
    \node[state]		(L12) at (0,\ConHig - 4)				{$l_{1,2}$};
    \node[state]		(L1n) at (0,0)						{$l_{1,\frac{n}{s}}$};
    \node[state]		(R11) at (2,\ConHig - 2.5)			{$r_{1,1}$};
    \node[state]		(R12) at (2,\ConHig - 4)				{$r_{1,2}$};
    \node[state]		(R1n) at (2,0)						{$r_{1,\frac{n}{s}}$};
    \node[state]		(Lm1) at (\ConWid - 2,\ConHig - 2.5)	{$l_{m,1}$};
    \node[state]		(Lm2) at (\ConWid - 2,\ConHig - 4)	{$l_{m,2}$};
    \node[state]		(Lmn) at (\ConWid - 2,0)				{$l_{m,\frac{n}{s}}$};
    \node[state]		(Rm1) at (\ConWid,\ConHig - 2.5)		{$r_{m,1}$};
    \node[state]		(Rm2) at (\ConWid,\ConHig - 4)		{$r_{m,2}$};
    \node[state]		(Rmn) at (\ConWid,0)				{$r_{m,\frac{n}{s}}$};
    \node[gadget]	(G1) at (1,\ConHig)					{$G_{C_1}$};
    \node[gadget]	(G2) at (4,\ConHig)					{$G_{C_2}$};
    \node[gadget]	(Gm) at (\ConWid - 1,\ConHig)		{$G_{C_m}$};
    
    \node[state, minimum size = 5mm]		(A1) at (0,\ConHig-1)			{$a_1$};
    \node[state, minimum size = 5mm]		(A2) at (2.5,\ConHig)			{$a_2$};
    \node[state, minimum size = 5mm]		(A3) at (5.2,\ConHig)			{$a_3$};
    \node[state, minimum size = 5mm]		(Am) at (\ConWid-2.2,\ConHig)	{$a_m$};
    \node[state, minimum size = 5mm]		(Am+) at (\ConWid-3,\ConHig+1)	{$a_{m+1}$};
    
    \node			at ($(L12)!0.5!(L1n)$)	{$\vdots$};
    \node			at ($(R12)!0.5!(R1n)$)	{$\vdots$};
    \node			at ($(Lm2)!0.5!(Lmn)$)	{$\vdots$};
    \node			at ($(Rm2)!0.5!(Rmn)$)	{$\vdots$};
    \node			at ($(R11)!0.5!(Lm1)$)	{$\dots$};
    \node			at ($(R12)!0.5!(Lm2)$)	{$\dots$};
    \node			at ($(R1n)!0.5!(Lmn)$)	{$\dots$};
    \node			at ($(G2)!0.5!(Gm)$)	{$\dots$};
    
    \path[line width = 0.75pt]
    		(L11)	edge						node	{}	(R11)
    				edge						node	{}	(L12)
    				edge						node	{}	(A1)
    		(L12)	edge						node	{}	(R12)
    				edge						node	{}	($(L12)!0.6cm!(L1n)$)
    		(L1n)	edge						node	{}	(R1n)
    				edge						node	{}	($(L1n)!0.6cm!(L12)$)
    		(R11)	edge						node	{}	($(R11)!1cm!(Lm1)$)
    		(R12)	edge						node	{}	($(R12)!1cm!(Lm2)$)
    		(R1n)	edge						node	{}	($(R1n)!1cm!(Lmn)$)
    		(Lm1)	edge						node	{}	($(Lm1)!1cm!(R11)$)
    		(Lm2)	edge						node	{}	($(Lm2)!1cm!(R12)$)
    		(Lmn)	edge						node	{}	($(Lmn)!1cm!(R1n)$)
    		(Lm1)	edge						node	{}	(Rm1)
    		(Lm2)	edge						node	{}	(Rm2)
    		(Lmn)	edge						node	{}	(Rmn)
    		(Rm1)	edge[in = 170, out = 10]		node	{}	(L11)
    		(Rm2)	edge[in = 170, out = 10]		node	{}	(L12)
    		(Rmn)	edge[in = 190, out = -10]	node	{}	(L1n)
    		(G1)	edge						node	{}	(A2)
    				edge						node	{}	($(G1)!1.5cm!(L12)$)
    				edge						node	{}	($(G1)!1.5cm!(R12)$)
    		(G2)	edge						node	{}	(A3)
    				edge						node	{}	($(G2)!1.5cm!($(R12)!1cm!(Lm1)$)$)
    				edge						node	{}	($(G2)!1.5cm!($(R12)!3cm!(Lm1)$)$)
    		(Gm)	edge						node	{}	(Am)
    				edge						node	{}	(Am+)
    				edge						node	{}	($(Gm)!1.5cm!(Lm2)$)
    				edge						node	{}	($(Gm)!1.5cm!(Rm2)$)
    		(A1)	edge[in = 180, out = 90]		node	{}	(G1)
		(A2)	edge						node	{}	(G2)
		(A3)	edge						node	{}	($(A3)!0.5cm!(Am)$)
		(Am)	edge						node	{}	($(Am)!0.5cm!(A3)$);
    		
	\draw	(Am+) 			to[out = 175, in = 45] 	(-0.2, \ConHig);
	\draw	(-0.2, \ConHig) 	to[out = 225, in = 135] 	(L1n);    		
    \end{tikzpicture}
    \caption{Construction of the graph $\Gamma_\phi$.}
    \label{fig:3SATCon}
\end{figure}
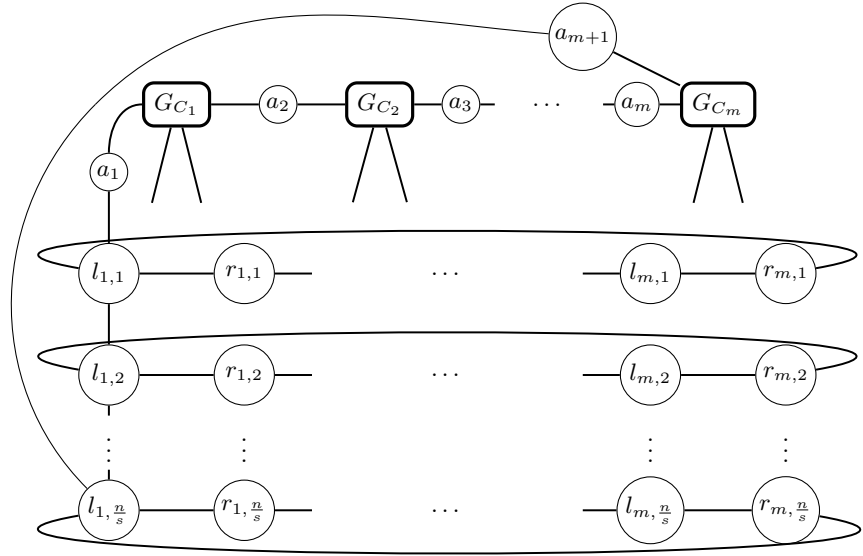

\begin{itemize}
    \item We start by creating vertices $l_{i,1}, \dots, l_{i, n/s}$ and $r_{i,1}, \dots, r_{i, n/s}$ for $i = 1, \dots, m$ and some constant $s$ to be determined later\footnote{If $n$ is not divisible by $s$, we may either add dummy variables until it is, or lower the demand of $l_{i,n/s}$ and $r_{i,n/s}$.}.
    \item We set the demand of $l_{1,j}$ to $2^s + 1$ for $j = 1, \dots, n/s$ and add edges $\{l_{1,j}, l_{1,j+1} \}$ for $j = 1, \dots, n/s - 1$.
    \item We set the demand of every other $l_{i,j}$ and every $r_{i,j}$ to $2^s$ and add edges $\{ l_{i,j}, r_{i,j} \}$, $\{ r_{i,j}, l_{i+1,j} \}$ and $\{r_{m,j}, l_{1,j} \}$ for $i = 1, \dots, m-1$ and $j = 1, \dots, n/s$.
    \item We connect $l_{1,1}$ to $l_{1,n/s}$ using a path $a_1, \dots, a_{m+1}$.
    \item For $i = 1, \dots, m$ let $x_a, x_b, x_c$ be the variables appearing in $C_i$. We set $j_1 = \lceil a/s \rceil, j_2 = \lceil b/s \rceil, j_3 = \lceil c/s \rceil$. Let
    \[
    X = \{l_{i,j_1}, l_{i,j_2}, l_{i,j_3}, r_{i,j_1}, r_{i,j_2}, r_{i,j_3}\}
    \]
    and let $\mathcal{F}_{C_i}$ be the set of all 
    \[
    F = \{\{l_{i,j_1},r_{i,j_1}\}^{q_1}, \{l_{i,j_2},r_{i,j_2}\}^{q_2}, \{l_{i,j_3},r_{i,j_3}\}^{q_3}, \}
    \]
    such that $Q = \{ (q_1, j_1), (q_2, j_2), (q_3, j_3) \}$ satisfies $C_i$.
    \item For $i = 1, \dots, m$ we implement a scanner gadget $G_{C_i}$ using the tuple \break $(X_i, a_i, a_{i+1}, \mathcal{F}_{C_i})$
   \end{itemize}

We prove the following useful facts about this graph.

\begin{lemma} \label{lem:EqGam}
$\Gamma_\phi$ is a yes instance of \textsc{MVTSP} if and only if $\phi$ has a satisfying assignment.
\end{lemma}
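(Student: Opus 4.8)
The plan is to prove both directions of the equivalence by exhibiting how a satisfying assignment of $\phi$ translates into a tour of $\Gamma_\phi$ and vice versa, using Lemma~\ref{lem:ConGad} to handle the scanner gadgets abstractly so that we only have to reason about the ``backbone'' structure consisting of the rows of $l_{i,j}, r_{i,j}$ vertices. First I would set up the bookkeeping: recall that for each column $j \in \{1,\dots,n/s\}$ the vertices $l_{1,j}, r_{1,j}, l_{2,j}, r_{2,j}, \dots, l_{m,j}, r_{m,j}$ form a cycle (via the edges $\{l_{i,j},r_{i,j}\}$, $\{r_{i,j},l_{i+1,j}\}$ and the wrap-around $\{r_{m,j},l_{1,j}\}$), and that $l_{1,j}$ has demand $2^s+1$ while every other backbone vertex has demand $2^s$. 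The ``extra'' visit at each $l_{1,j}$ is exactly what is used up by the Hamiltonian-type path $l_{1,1},l_{1,2},\dots,l_{1,n/s}$ together with the $a_1,\dots,a_{m+1}$ detour that closes it back to $l_{1,n/s}$; this path is what ties all columns together and routes the tour through the scanner gadgets $G_{C_1},\dots,G_{C_m}$ in sequence via the $a_i$ vertices.

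For the forward direction, given a satisfying assignment, I would encode it column by column: for column $j$, the variables $x_{(j-1)s+1},\dots,x_{js}$ determine a number $q^{(j)} \in \{1,\dots,2^s\}$ via the binary encoding fixed in the construction. The tour then traverses the edge $\{l_{i,j},r_{i,j}\}$ exactly $q^{(j)}$ times whenever column $j$ is one of the three columns scanned by clause $C_i$, and uses the remaining $2^s - q^{(j)}$ (or $2^s+1-q^{(j)}$ at row $1$, accounting for the backbone path) traversals of the column-cycle edges to meet the demands of all the $l_{i,j}, r_{i,j}$. Because the assignment satisfies $C_i$, the multiset $F = \{\{l_{i,j_1},r_{i,j_1}\}^{q^{(j_1)}},\dots\}$ lies in $\mathcal{F}_{C_i}$, so the restriction of our tour to $E(X_i,X_i)$ is consistent with the gadget tuple, and Lemma~\ref{lem:ConGad} lets us extend to a genuine tour of $\Gamma_\phi$ (with $G'_{C_i}$ the implemented gadget). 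One must check the degrees balance at every backbone vertex and that the whole thing is connected — connectivity comes from the $l_{1,\cdot}$ path plus the fact that each column-cycle is entered.

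For the reverse direction, given a tour $c$ of $\Gamma_\phi$, I would first argue that for each clause $C_i$ the restriction of $c$ to $X_i$ is consistent with $(X_i,a_i,a_{i+1},\mathcal{F}_{C_i})$ — this is the contrapositive content of Lemma~\ref{lem:ConGad}, since $c$ restricted to the gadget region must form a valid sub-tour. Consistency forces, for each scanned column $j$, that the number of times $c$ uses $\{l_{i,j},r_{i,j}\}$ equals some $q_j$ with $\{(q_1,j_1),(q_2,j_2),(q_3,j_3)\}$ satisfying $C_i$. The crucial structural step is to show these column-multiplicities are \emph{globally consistent}: the multiplicity of $\{l_{i,j},r_{i,j}\}$ must be the same for all rows $i$ — call it $q^{(j)}$ — because the degree constraints on the column-$j$ cycle (each interior backbone vertex having in/out degree matching its demand $2^s$, and $l_{1,j}$ its demand $2^s+1$ minus the one backbone-path edge) propagate a single value around the cycle. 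This lets us read off a unique assignment to each block of $s$ variables from $q^{(j)}$, and since \emph{every} clause gadget is satisfied by its scanned columns, the global assignment satisfies $\phi$.

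The main obstacle I expect is precisely this ``column-consistency'' propagation argument: one has to be careful that the flow/Eulerian constraints on the backbone (together with the single backbone path through row $1$) genuinely pin down one multiplicity per column rather than allowing the tour to ``cheat'' by using different multiplicities in different rows and rebalancing with extra loops — and one must confirm the demand $2^s$ (resp.\ $2^s+1$) is chosen large enough that any legal $q_j \in \{1,\dots,2^s\}$ can be absorbed while still visiting every vertex the exact required number of times. A secondary point of care is ensuring the counting on the $a_1,\dots,a_{m+1}$ path and its closing edge back to $l_{1,n/s}$ is exactly right so that row $1$'s $+1$ demands are consumed and the tour is forced to pass through all $m$ gadgets in order. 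Everything else (connectivity, checking each edge is covered, degree balance at gadget-internal vertices) is routine and handled by Lemma~\ref{lem:ConGad}.
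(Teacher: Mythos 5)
Your proposal matches the paper's proof essentially step for step: encode the assignment as the multiplicities of the $\{l_{i,j},r_{i,j}\}$ edges, invoke Lemma~\ref{lem:ConGad} in both directions to pass between $\Gamma_\phi$ and the graph before the gadgets are implemented, and obtain the ``column-consistency'' you flag as the main obstacle exactly as the paper does, by propagating the demand constraints around each column cycle via $c'(l_{i,j},r_{i,j}) = 2^{s+1}-c'(r_{i,j},l_{i+1,j}) = c'(l_{i+1,j},r_{i+1,j})$. The only slip is arithmetic: since a vertex of demand $2^s$ has total degree $2^{s+1}$ in the tour multigraph, the balancing multiplicity on $\{r_{i,j},l_{i+1,j}\}$ is $2^{s+1}-q^{(j)}$, not $2^s-q^{(j)}$.
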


\begin{proof}
Let $x_1, \dots, x_n$ be the variables used in the formula $\phi$. Let $\chi_1, \dots, \chi_n$ be some satisfying assignment. We first define the tour on the construction before implementing the scanner gadgets, which we will refer to as $\Gamma_\phi'$, and then use Lemma~\ref{lem:ConGad} to find the desired tour on $\Gamma_\phi$. Set $c(l_{i,j}, l_{i+1,j}) = c(l_{1,1}, a_1) =  c(a_{m+1}, l_{1,n/s}) = c(a_i, a_{i+1}) = 1$. We choose
\[
c'(l_{i,j}, r_{i,j}) = 1 + \sum \limits _{k=1}^{s} 2^{k-1} \chi_{(j-1)s + k}
\]
for $i = 1, \dots, m$ and $j = 1, \dots, n/s$. Due to the chosen demands we need to choose
\[
c'(r_{i,j}, l_{i+1,j}) = 2^{s+1} - c'(l_{i,j}, r_{i,j})
\]
for $i = 1, \dots, m$ and $j = 1, \dots, n/s$, where we interpret $i$ modulo $m$, i.e. $m + 1 \equiv 1$. Note that $c'$ is connected and satisfies the demands on $\Gamma_\phi'$. Also note that since $\chi$ is a satisfying assignment, $c'$ is consistent with all the scanner gadgets $G_{C_i}$ and thus by Lemma~\ref{lem:ConGad} we there is some valid tour $c$ on $\Gamma_\phi$.

Now suppose we find a valid tour $c$ on $\Gamma_\phi$. Then by Lemma~\ref{lem:ConGad} there exists a tour $c'$ on $\Gamma_\phi'$ consistent with each gadget $G_{C_i}$. By definition of $G_{C_i}$ the values of $c'(l_{i,j}, r_{i,j})$ encode an assignment satisfying $C_i$ for $i = 1, \dots, m$. Since for $i \geq 2$ the demands of $l_{i,j}$ and $r_{i,j}$ equal $2^s$ we have that $c'(l_{i,j}, r_{i,j}) = 2^{s+1} - c'(r_{i,j}, l_{i+1, j}) = c'(l_{i+1,j}, r_{i+1,j})$ and therefore the values of $c'(l_{1,j}, r_{1,j})$ encode an assignment satisfying all clauses $C_1,\ldots,C_m$, which means we find an assignment which satisfies $\phi$.
\end{proof}

\begin{lemma} \label{lem:PWGam}
$\Gamma_\phi$ has pathwidth at most $3n/s + 21$.
\end{lemma}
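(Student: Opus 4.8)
The plan is to exhibit an explicit path decomposition of $\Gamma_\phi$ of the claimed width, building on the path decomposition of a single scanner gadget from Lemma~\ref{lem:GadPW}. First I would describe the ``backbone'' of the decomposition coming from the grid-like structure of the $l_{i,j}$ and $r_{i,j}$ vertices together with the $a_i$ path. The key observation is that the graph, before the scanner gadgets are implemented, consists of $n/s$ horizontal ``rows'' (each row $j$ being the cycle through $l_{1,j}, r_{1,j}, l_{2,j}, \dots, r_{m,j}$) glued along the vertical edges $\{l_{i,j}, l_{i,j+1}\}$ only in the first column, plus the $a_1,\dots,a_{m+1}$ path connecting $l_{1,1}$ to $l_{1,n/s}$. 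I would sweep horizontally, clause by clause ($i = 1, \dots, m$): at stage $i$ the bag contains, for every row $j$, the two vertices $l_{i,j}$ and $r_{i,j}$ (that is $2n/s$ vertices), which suffices to cover all the horizontal edges $\{l_{i,j}, r_{i,j}\}$ and the connecting edges $\{r_{i,j}, l_{i+1,j}\}$. To also carry the first-column vertical edges $\{l_{1,j}, l_{1,j+1}\}$ and the return edges $\{r_{m,j}, l_{1,j}\}$ I would keep the column-$1$ vertices $l_{1,1}, \dots, l_{1,n/s}$ (another $n/s$ vertices) in every bag; and to cover the $a$-path I keep at most two consecutive $a_i$'s at a time. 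This accounts for the $3n/s$ term.

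Next I would splice in the scanner-gadget decompositions. When the horizontal sweep reaches clause $C_i$, instead of a single bag I insert the path decomposition of $G_{C_i}$ from Lemma~\ref{lem:GadPW}: its set $X_i = \{l_{i,j_1}, l_{i,j_2}, l_{i,j_3}, r_{i,j_1}, r_{i,j_2}, r_{i,j_3}\}$ consists of six vertices that would in any case be present in the ``clause $i$'' bags of the backbone, so the extra cost of the gadget over and above $|X_i|$ is the additive constant $21$ from Lemma~\ref{lem:GadPW}. The only subtlety is that the gadget decomposition as stated has bags $X_i \cup \{\dots\}$ with exactly the six vertices of $X_i$, whereas I need the surrounding backbone vertices (the full first column $l_{1,*}$, the other $l_{i,j}, r_{i,j}$ for $j \notin \{j_1,j_2,j_3\}$, and the relevant $a_i$'s) to persist through this stretch so that connectivity of the decomposition's occurrences is maintained and all backbone edges incident to clause $i$ get covered. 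So I would enlarge every bag of the $G_{C_i}$-subdecomposition by adding these $\le 3n/s$ backbone vertices (minus the six already in $X_i$, and noting the $a_i$ vertices are themselves part of the gadget as $a = a_i$, $b = a_{i+1}$), keeping the width at $3n/s + 21$.

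Then I would verify the three defining properties of a path decomposition: (1) every vertex appears in some bag — the backbone vertices appear throughout their ``lifetime'' in the sweep, and each gadget-internal vertex $s_t, p_i^t$ appears in the inserted $G_{C_i}$ block; (2) every edge is covered — backbone edges by the argument above, gadget edges by Lemma~\ref{lem:GadPW}, and the interface edges $\{a_i, \cdot\}$, $\{l_{i,j}, r_{i,j}\}$ (the red edges into $X_i$), and $\{G_{C_i}\text{-endpoints}, l_{i,*}/r_{i,*}\}$ because both endpoints lie in the enlarged bags; (3) the occurrences of each vertex form a contiguous interval — for backbone vertices this is immediate since we never drop and re-add them within their active window, and for gadget vertices it follows from Lemma~\ref{lem:GadPW} since we insert that decomposition as one contiguous block.

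The main obstacle I expect is bookkeeping around property (3) and the interface edges: one has to be careful that when the sweep ``hands off'' from the backbone bag for clause $i$ to the inserted gadget block and back, no vertex's interval gets broken — in particular the $a_i$ path vertices, which are shared between the backbone (they connect $l_{1,1}$ to $l_{1,n/s}$) and the gadgets (they are the terminals $a, b$ of consecutive $G_{C_i}$'s), must be threaded through consistently, and the six $X_i$-vertices must already be ``live'' in the backbone when the gadget block starts. Once the decomposition is laid out linearly as
\[
B_{\text{start}},\; (\text{block for }C_1),\; (\text{backbone between }C_1,C_2),\; (\text{block for }C_2),\; \dots,\; (\text{block for }C_m),\; B_{\text{end}},
\]
with each block being the Lemma~\ref{lem:GadPW} decomposition padded by the backbone vertices, the width bound $\max\{3n/s + 2,\; 3n/s + 21\} = 3n/s + 21$ falls out, and the verification is routine.
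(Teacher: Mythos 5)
Your decomposition is essentially the paper's: keep the first column $l_{1,1},\dots,l_{1,n/s}$ in every bag, keep both columns of clause $i$ (all $l_{i,j}$ and $r_{i,j}$) alive during the stretch for $C_i$, splice in the Lemma~\ref{lem:GadPW} decomposition of $G_{C_i}$ (whose terminals are $a_i,a_{i+1}$ and whose set $X_i$ is already absorbed in the column vertices), and use transition bags between consecutive clauses to cover the edges $\{r_{i,j},l_{i+1,j}\}$ — exactly the bags $X_{i,j}$ and $Y_i$ of the paper's proof, giving the same $3n/s+21$ bound. No substantive difference in approach or in the verification.
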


\begin{proof}
We define the bags of the decomposition as follows. First we add
\[
A = \{ l_{1,1}, \dots, l_{1, n/s} \}
\]
to every bag. Let $W_1, \dots, W_{l_i}$ be a path decomposition of $G_{C_i}$. We define bag $X_{i,j}$ as follows
\[
X_{i,j} = A \cup \{l_{i,k}\}_{k=1}^{n/s} \cup \{r_{i,k}\}_{k=1}^{n/s} \cup W_j.
\]
We then define $Y_i$ as $\{l_{i+1,k}\}_{k=1}^{n/s} \cup \{r_{i,k}\}_{k=1}^{n/s}$ The final path decomposition then becomes
\[
X_{1,1}, \dots, X_{1,l_1}, Y_1, X_{2,1}, \dots, X_{i,l_i},Y_i, X_{i+1,1}, \dots, X_{m,l_m}.
\]
Note that all vertices and edges are covered by the decomposition. The set of bags containing any of the vertices of $A$ gives the whole decomposition. The set of bags containing any $l_{i,j}$ or $r_{i,j}$ for $i \geq 2$ gives the path $X_{i,1} \dots X_{i,l_1}$ with $Y_i$ at the end for $r_{i,j}$ and $Y_{i-1}$ at the beginning for $l_{i,j}$. Any vertex in the gadgets gives a single set $X_{i,j}$. By Lemma~\ref{lem:GadPW} the width of this path decomposition is at most\footnote{We don't include the term $|X|$, since $X \subseteq \{l_{i,k}\}_{k=1}^{n/s} \cup \{r_{i,k}\}_{k=1}^{n/s}$.}
\[
3\frac{n}{s} + 21.
\]
\end{proof}

Now we use our reduction to prove the following lower bound:

\begin{theorem}
Let $M$ be an upper bound on the demands in a graph $G$. Then \textsc{MVTSP} cannot be solved in time $f(\pw)\min\{|V(G)|,M\}^{o(\pw)}|V(G)|^{\Oh(1)}$, unless ETH fails.
\end{theorem}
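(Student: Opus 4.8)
The plan is to combine the reduction $\phi\mapsto\Gamma_\phi$ of Section~\ref{sec:lowerb} with a standard padding argument in which the block size $s$ is chosen as a function of the number of variables $n$. First I would record what the preceding lemmas give for \emph{every} choice of $s$: by Lemma~\ref{lem:EqGam} the instance $\Gamma_\phi$ is a yes-instance of \textsc{MVTSP} iff $\phi$ is satisfiable; by Lemma~\ref{lem:PWGam} its pathwidth is at most $3n/s+21$; and unwinding the gadget constructions (Lemmas~\ref{lem:ConGad} and~\ref{lem:GadPW}) it can be built in time polynomial in its own size. Next I would bound the two remaining quantities. The largest demand in $\Gamma_\phi$ is $2^{s}+1$, so we may take $M=2^{s}+1$. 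Counting vertices: there are $2mn/s$ vertices $l_{i,j},r_{i,j}$, the $m+1$ vertices $a_i$, and for each of the $m$ clauses a scanner gadget built from $|\mathcal{F}_{C_i}|\le 2^{3s}$ paths, each of which is a chain of at most $3\cdot2^{s}$ two-label gadgets (nine vertices apiece); hence $|V(\Gamma_\phi)|=2^{\Oh(s)}\cdot\poly(n,m)$. By the Sparsification Lemma we may assume $m=\Oh(n)$ from the outset, so $|V(\Gamma_\phi)|=2^{\Oh(s)}\poly(n)$.

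Then I would argue by contradiction. Suppose \textsc{MVTSP} admits an algorithm running in time $f(\pw)\cdot\min\{|V|,M\}^{g(\pw)}\cdot|V|^{\Oh(1)}$ for some computable $f$ and some $g$ with $g(w)=o(w)$; without loss of generality $f$ is nondecreasing and unbounded and $g$ is nondecreasing. Writing $\tilde f(j)=f(3j+21)$, set
\[
\iota(n)=\max\{\,j\le\log n:\ \tilde f(j)\le 2^{\sqrt n}\,\}\qquad\text{and}\qquad s(n)=\bigl\lceil n/\iota(n)\bigr\rceil .
\]
Then $\iota$ is computable, nondecreasing and unbounded with $\iota(n)\le\log n$, so $s(n)=o(n)$, $n/s(n)\le\iota(n)$, and $n/s(n)\to\infty$. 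Running the assumed algorithm on $\Gamma_\phi$ built with this $s$: its pathwidth satisfies $\pw\le 3\iota(n)+21$, so $f(\pw)\le\tilde f(\iota(n))\le 2^{\sqrt n}=2^{o(n)}$; since $\pw\to\infty$ and $g(w)=o(w)$ we have $g(\pw)=o(n/s)$, whence $\min\{|V|,M\}^{g(\pw)}\le(2^{s+1})^{o(n/s)}=2^{o(n)}$ because $s=o(n)$; and $|V|^{\Oh(1)}=2^{\Oh(s)}\poly(n)=2^{o(n)}$. Together with the polynomial-time construction, the whole procedure decides the satisfiability of $\phi$ in time $2^{o(n)}$, contradicting ETH. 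As $\tw\le\pw$ and $\min\{|V|,M\}\le|V|$, this in particular also implies Theorem~\ref{thm:tw=hard}.

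The manipulations in the second paragraph — the vertex count of the construction and the bounds $(2^{s+1})^{o(n/s)}=2^{o(n)}$ and $2^{\Oh(s)}=2^{o(n)}$ — are routine and use only $s=o(n)$. The genuinely delicate point is the choice of the padding function $s(n)$: it must be large enough that the pathwidth $\Theta(n/s)$ is small enough to dominate the arbitrary function $f$, and yet $o(n)$ so that the pathwidth still tends to infinity (otherwise the $o(\pw)$ saving in the exponent is vacuous) and so that the factors $2^{\Oh(s)}$ contributed by the gadget sizes and by the demand bound $M=2^{s}+1$ remain $2^{o(n)}$. Threading this needle while keeping $s(n)$ explicitly computable (so that the reduction is uniform) is the one real obstacle; everything else is bookkeeping on top of Lemmas~\ref{lem:EqGam}--\ref{lem:GadPW}.
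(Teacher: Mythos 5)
Your proposal is correct and follows essentially the same route as the paper: the same reduction $\phi\mapsto\Gamma_\phi$ with Lemmas~\ref{lem:ConGad}--\ref{lem:PWGam}, a padding choice $s=s(n)=o(n)$ with $n/s\to\infty$ tuned so that $f(\pw)$, $M^{o(\pw)}$ and $|V(\Gamma_\phi)|^{\Oh(1)}$ are all $2^{o(n)}$, and a contradiction with ETH. Your explicit $\iota(n)$ construction is just a more careful, computable instantiation of the paper's ``choose $s=4n/g(n)$ with $f(g(n))=2^{o(n)}$'' step, and your cruder $2^{\Oh(s)}$ bound on the gadget sizes serves the same purpose as the paper's $\Oh(2^{3s})$ claim.
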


\begin{proof}
We start by proving the following claim.
\begin{numclaim}\label{claim:GadSiz}
$|V(G_{C_i})| = \Oh(2^{3s})$ for $i = 1, \dots, m$.
\end{numclaim}
\begin{subproof}
Note that $\mathcal{F}_{C_i}$ is defined on at most three unique edges with each edge being chosen at most $2^s$ times\footnote{Due to the way we interpret the multiplicities as truth assignments (in particular the `$-1$') we know each edge gets chosen at least once.}. Therefore we can represent $\mathcal{F}_{C_i}$ by tuples $(z_1, z_2, z_3) \in [2^s]^3$. Since each tuple contributes a path of $z_1 + z_2 + z_3$ vertices, we find that
\begin{align*}
    |V(G_{C_i})| &= 8 + |\mathcal{F}_{C_i}| + \sum \limits _{(z_1, z_2, z_3) \in \mathcal{F}_{C_i}} z_1 + z_2 + z_3 \\
                &\leq 2^{3s + 1} + \sum \limits _{z_1, z_2 = 1}^{2^s} \left( 2^s (z_1 + z_2) + \sum \limits _{z_3 = 1}^{2^s} z_3 \right) \\
                &\leq 2^{3s + 1} + \sum \limits _{z_1, z_2 = 1}^{2^s} \left( 2^s (z_1 + z_2) + 2^{s+1} \right) \\
                &\leq 2^{3s + 1} + \sum \limits _{z_1 = 1}^{2^s} \left( 2^{2s} z_1 + 2^{2s+1} + 2^{2s + 1} \right) \\
                &\leq 2^{3s+3}.
\end{align*}\end{subproof}

Note that by Lemma~\ref{lem:EqGam}, solving a \textsc{3-CNF-SAT} instance $\phi$ reduces to solving \textsc{MVTSP} on $\Gamma_\phi$ for some choice of $s$. We remark that \[\Oh(f(\pw)\min\{|V(G)|,M\}^{o(\pw)}|V(G)|^{\Oh(1)}) \le \Oh(f(\pw)M^{o(\pw)}|V(G)|^{\Oh(1)}).\] It is therefore sufficient to show that there is no $\Oh(f(\pw) M^{o(\pw)} |V(G)|^{\Oh(1)})$ time algorithm for \textsc{MVTSP}, unless ETH fails.

Suppose we have a $\Oh\left(f(\pw) M^{o(\pw)} |V(G)|^{\Oh(1)}\right)$ time algorithm for \textsc{MVTSP}. Let $s = 4n/g(n)$ for some strictly increasing function $g(n) = 2^{o(n)}$ such that $f(g(n)) = 2^{o(n)}$. Note that $s = o(n)$ and $\pw \leq g(n)$ for large enough $n$. We construct the instance $\Gamma_\phi$ as previously described. We first note that by claim \ref{claim:GadSiz}
\[
|V(\Gamma_\phi)| = 2m\frac{n}{s} + \sum_{i=1}^m |V(G_{C_i})| = \Oh\left(m\left(\frac{n}{s} + 2^{3s}\right)\right)
\]
and by Lemma~\ref{lem:PWGam} we have that for any choice of $s$ and large enough $n$, $\Gamma_\phi$ has pathwidth at most $4n/s$. By applying our hypothetical algorithm for \textsc{MVTSP} to $\Gamma_\phi$ we now find an algorithm for \textsc{3-CNF-SAT} running in time
\begin{align*}
\Oh\left (f(\pw) M^{o(\pw)} |V(\Gamma_\phi)|^{\Oh(1)} \right) 	&= \Oh \left(f(4n/s) (2^s)^{o(n/s)} \left( m \left( \frac{n}{s} + 2^{3s}\right)\right)^{\Oh(1)} \right) \\
								&= \Oh\left(f(g(n)) \cdot 2^{o(n)} \cdot \left(m \left( g(n)/4 + 2^{o(n)} \right) \right)^{\Oh(1)} \right).
\end{align*}
We may assume that $m = 2^{o(n)}$ by the sparsification lemma. Using this and the fact that $g(n) = 2^{o(n)}$ we find
\begin{align*}
								&= \Oh \left(2^{o(n)} \cdot  \left(2^{o(n)} \right)^{\Oh(1)} \right) \\
								&= \Oh\left(2^{o(n)}\right).
\end{align*}
This contradicts ETH, completing our proof.
\end{proof}
\section{Conclusion and Further Research}\label{sec:conc}

We initiated the study of the parameterized complexity of the \textsc{Connected Flow} problem and showed that the problem behaves very differently when parameterized by the number of demand vertices, the size of the vertex cover of the graph, or treewidth of the input graph.

While we essentially settled the complexity of the variants of the problem parameterized by the number of demands or by the treewidth, we still leave the following questions open for the vertex cover parameterization: 

Can \textsc{Connected Flow} be solved in $\Os(c^{\Oh(k)})$ time, with $c$ a constant and $k$ the size of the vertex cover of the input graph? Such an algorithm would be a strong generalization of the algorithms from~\cite{BergerKMV20,kowalik_et_al:LIPIcs:2020:12932}. While we believe our approach from Theorem~\ref{thm:XFPT} makes significant progress towards solving this question affirmatively, it seems that non-trivial ideas are required.

Does \textsc{Connected Flow} admit a kernel polynomial in $k$ where $k$ is the size of the vertex cover if the input graph? It seems that especially the capacities can make the problem a lot harder. It would be interesting to see if our arguments for Theorem~\ref{thm:Xkernel} can be extended to kernelize this more general problem as well.

\bibliographystyle{abbrv}
\bibliography{bib}
\clearpage
\appendix
\section{Problem Definitions}\label{sec:problemdefs}\label{app:Definitions}

In this section we formally introduce and discuss a number of computational problems that are relevant for this paper.

Formally, we define the \textsc{Flow} problem as follows.

\defproblem{ \textsc{Flow}}{Given digraph $G = (V,E)$, $D \subseteq V$, $\Dem: D\to \N$, $\Dis: E \to \N$, $\Ca: E \to \N\cup\{\infty \}$}{Find a function $f : E \to \N$ such that
	\begin{itemize}		\setlength\itemsep{0em}
		\item for every $v \in V$ we have $\sum_{u \in V} f(u,v)  = \sum_{u \in V} f(v,u)$,
		\item for every $v \in D$ we have 
		$\sum_{u \in V} f(u,v) = \Dem(v)$,
		\item for every $e \in E: f(e) \le \Ca(e)$,
	\end{itemize}
		and the value $\Dis(f) = \sum_{e \in E} \Dis(e)f(e)$ is minimized.}\label{def:conflow}

From the definition it is clear that apart from the connectivity requirement, it is indeed equivalent to \textsc{Connected Flow}.

We will use the following standard definition of \textsc{Min Cost Flow}.

\defproblem{\textsc{Min Cost Flow}}{Digraph $G=(V,E)$ with source node set $S\subseteq V$ and sink nodes $T\subseteq V$, $\Dis:E\to \N$, $\Ca: E\to \N\cup\infty$}{Find a function $f:E\to \N$ such that
	\begin{itemize} \setlength\itemsep{0em}
		\item for every $v \in V\setminus ( T \cup S)$ we have $\sum_{u \in V} f(u,v)  = \sum_{u \in V} f(v,u)$,
		\item for every $e \in E: f(e) \le \Ca(e)$,
		\item the value of $\sum_{v\in S} \sum_{u\in V} f(v,u)$ is maximal,
	\end{itemize}
	and the value $\Dis(f) = \sum_{e \in E} \Dis(e)f(e)$ is minimized.
}

\paragraph*{Equivalence of \textsc{Flow} and \textsc{Min Cost Flow}.} We argue that \textsc{Flow} is equivalent to \textsc{Min Cost Flow} by simple reductions. First we reduce in the forward way. For each $d \in D$, create vertices $d_{\Out}, d_{\In}$ where $d_{\Out}$ is a source node with outgoing flow $\Dem(d)$ and $d_{\In}$ is a sink node with ingoing flow $\Dem(d)$. For all other vertices in $V\setminus D$, create a node and connect to all its neighbors, where all outgoing edges to a vertex in $D$ go to $d_{\In}$ and all ingoing edges from a vertex in $D$ connect to $d_{\Out}$. 

For the other way, let $S$ be the set of source nodes and $T$ be the set of sink nodes of the \textsc{Min Cost Max Flow} problem. Then add one `big' node $x$ to the graph, with demand equal to the outgoing flow from all the source nodes. Then add $(t,x)$ for all $t \in T$ with $\Dis(t,x) = 0$, $\Ca(t,x) = \Out(t)$. Furthermore add $(x,s)$ for all $s \in S$ with $\Dis(x,s) = 0$, $\Ca(x,s) =  \In(s)$. 

Since \textsc{Min Cost Flow} is well-known to be solvable in polynomial time, we can therefore conclude that \textsc{Flow} is solvable in polynomial time as well.

In Kowalik et al.~\cite{kowalik_et_al:LIPIcs:2020:12932}, the Many Visit TSP (MVTSP) is defined as follows. 

\defproblem{\textsc{Many Visits TSP (MVTSP)}}{Digraph $G=(V,E)$, $\Dem: V\to \N$, $\Dis: V^2 \to \N$}{Find a minimal cost tour $c$, such that each $v\in V$ is visited exactly $\Dem(v)$ times.}	

Note that MVTSP is a special case of \textsc{Connected Flow}, where $D=V$ and the capacities of all edges are infinite.

\end{document}